\documentclass[prx,aps,superscriptaddress,footinbib,twocolumn]{revtex4-2}

\newcommand{\abs}[1]{\vert #1 \vert}

\usepackage{amsthm,amsmath,amssymb}
\usepackage{mathrsfs}
\usepackage{ulem}

\newcommand{\Tr}{\operatorname{Tr}}
\newcommand{\norm}[1]{\Vert #1 \Vert}

\newcommand{\ket}[1]{\vert{ #1 }\rangle}

\newcommand{\ketbra}[2]{\vert #1 \rangle \langle #2 \vert}

\newcommand{\mean}[1]{\langle #1 \rangle}

\usepackage{amsthm}
\newtheorem{theorem}{Theorem}
\newtheorem{proposition}{Proposition}
\newtheorem{lemma}{Lemma}
\newtheorem{corollary}{Corollary}

\theoremstyle{definition}
\newtheorem{definition}{Definition}
\theoremstyle{remark}

\newtheorem{example}{Example}

\usepackage{amssymb}
\setcounter{MaxMatrixCols}{20}

\usepackage{graphicx}
\usepackage{epstopdf}
\newcommand{\figpath}{./figures}

\usepackage{algorithm}
\usepackage[noend]{algpseudocode}

\usepackage{natbib}

\usepackage[breaklinks=true,colorlinks=true,citecolor=blue,linkcolor=magenta]{hyperref}
\usepackage{xcolor}

\usepackage{comment}
\begin{document}

\title{Noise-Agnostic Unbiased Quantum Error Mitigation for Logical Qubits}

\author{Haipeng Xie}
\affiliation{Graduate School of China Academy of Engineering Physics, Beijing 100193, China}

\author{Nobuyuki Yoshioka}
\affiliation{\mbox{International Center for Elementary Particle Physics, University of Tokyo, 7-3-1 Hongo, Bunkyo-ku, Tokyo 113-0033, Japan}}

\author{Kento Tsubouchi}
\affiliation{\mbox{Department of Applied Physics, University of Tokyo, 7-3-1 Hongo, Bunkyo-ku, Tokyo 113-8656, Japan}}

\author{Ying Li}
\email{yli@gscaep.ac.cn}
\affiliation{Graduate School of China Academy of Engineering Physics, Beijing 100193, China}

\begin{abstract}
Probabilistic error cancellation is a quantum error mitigation technique capable of producing unbiased computation results but requires an accurate error model. Constructing this model involves estimating a set of parameters, which, in the worst case, may scale exponentially with the number of qubits. In this paper, we introduce a method called spacetime noise inversion, revealing that unbiased quantum error mitigation can be achieved with just a single accurately measured error parameter and a sampler of Pauli errors. The error sampler can be efficiently implemented in conjunction with quantum error correction. We provide rigorous analyses of bias and cost, showing that the cost of measuring the parameter and sampling errors is low---comparable to the cost of the computation itself. Moreover, our method is robust to the fluctuation of error parameters, a limitation of unbiased quantum error mitigation in practice. These findings highlight the potential of integrating quantum error mitigation with error correction as a promising approach to suppress computational errors in the early fault-tolerant era. 
\end{abstract}

\maketitle

\section{Introduction}
Quantum error correction~\cite{calderbank_good_1996,fowler_surface_2012,breuckmann_quantum_2021} and error mitigation~\cite{temme_error_2017,li_efficient_2017,cai_quantum_2023} are two key strategies for reducing errors in quantum computing. Unlike error correction, which requires additional qubits to encode logical information, error mitigation suppresses the impact of errors usually without increasing qubit overhead. As a result, error mitigation is particularly useful in scenarios where error correction is not feasible. Moreover, in fault-tolerant quantum computing, error mitigation can extend the achievable circuit depth while maintaining the same qubit cost and reliability~\cite{suzuki_quantum_2022,zhang_demonstrating_2025,wahl2023zne,yoshioka_error_2025,tsubouchi_symmetric_2025,liu_quantum_2025,piveteau_error_2021,lostaglio_error_2021,zimborás2025mythsquantumcomputationfault}. 

Among quantum error mitigation techniques, probabilistic error cancellation (PEC)~\cite{temme_error_2017,endo_practical_2018,van_den_berg_probabilistic_2023} stands out for its ability to achieve bias-free computation. This property is particularly valuable for applications requiring precise confidence level estimation. However, PEC critically depends on an accurate characterization of the underlying noise, typically obtained through benchmarking methods such as gate set tomography~\cite{merkel_self-consistent_2013,nielsen_gate_2021,endo_practical_2018} or sparse Pauli-Lindblad learning~\cite{van_den_berg_probabilistic_2023,kim_evidence_2023}. Due to the limitations of these techniques, PEC is most effective for sparse error models, where errors are uncorrelated or involve only few-qubit correlations~\cite{guo_quantum_2022}. In contrast, general error models with many-qubit correlations require an exponentially large number of parameters to characterize, making accurate benchmarking impractical~\cite{torlai_quantum_2023,gebhart_learning_2023}. Notably, even for sparse models, benchmarking is usually considered costly~\cite{kim_evidence_2023}. In this work, we introduce an error mitigation approach that also achieves bias-free computation, but does not require an accurate error model. 

We develop a variant of PEC combined with quantum error correction. Specifically, we introduce spacetime noise inversion (SNI), a method that mitigates errors collectively across an entire quantum circuit. Unlike conventional PEC, which requires an accurate error model, SNI relies on two ingredients: a sampler of errors and a single accurate parameter—the total error rate of the circuit. We propose a protocol that leverages quantum error correction techniques to realize the error sampler and estimate the total error rate. Our protocol is supported by rigorous bounds on both the bias and cost. Notably, an error analysis of PEC that accounts for the benchmarking stage is essential for estimating the confidence level of the results, and such analysis has previously been performed only for sparse error models~\cite{van_den_berg_probabilistic_2023}. This theoretical result shows that when the total error rate is moderate, the costs of estimating the total error rate, generating error samples, and performing the computation are comparable. Furthermore, we demonstrate that our method is robust against temporal fluctuations in error rates~\cite{burnett_decoherence_2019,klimov_fluctuations_2018,schlor_correlating_2019}, approaching unbiased results up to assumptions about the fluctuation timescale. Our approach introduces a new paradigm of unbiased quantum error mitigation in conjunction with error correction, minimizing the number of parameters measured in noise benchmarking, extending applicability to scenarios with prevalent many-qubit errors [e.g.,~logical qubits in constant-rate quantum low-density parity check (qLDPC) codes~\cite{gottesman_fault-tolerant_2014,breuckmann_quantum_2021}], and accommodating cases where error parameters are unstable. 

\section{Noise map and conventional PEC}

For each operation in quantum computing, the associated noise can be described by a trace-preserving completely positive map $\mathcal{N}$. For example, a noisy gate is represented by a product of two maps $\mathcal{N}[U]$, where $[U]\bullet = U\bullet U^\dagger$ is the map denoting the error-free gate. In conventional PEC, errors are mitigated by applying the inverse of the noise map using a Monte Carlo method, resulting in the effective operation $\hat{\mathcal{N}}^{-1} \mathcal{N}[U]$~\cite{temme_error_2017,endo_practical_2018}. Here, $\hat{\mathcal{N}}$ is the error model for the operation. If the error model perfectly matches the actual noise, i.e., $\hat{\mathcal{N}} = \mathcal{N}$, we can effectively implement the ideal gate, achieving unbiased quantum computing. However, even when considering a Pauli error model, $\hat{\mathcal{N}}$ may have $4^n - 1$ parameters to determine for an $n$-qubit system. Consequently, PEC is typically considered a method reliant on a sparse error model. 

\section{Spacetime noise inversion}
We deal with the entire noise in a quantum circuit as a whole. For now, we assume the circuit consists of unitary gates and that the noise associated with each gate is Pauli. Later, we will show that this approach extends to non-Pauli noise and randomized dynamic circuits, i.e., circuits involving mid-circuit measurements, feedback operations, and randomized compiling. Let $\rho_i$ be the initial state. Suppose that the final state of the circuit is given by $\rho_f = \mathcal{N}_N[U_N] \cdots \mathcal{N}_2[U_2] \mathcal{N}_1[U_1] \rho_i$, where $U_j$ is the $j$th gate, and $\mathcal{N}_j$ represents the corresponding noise. The entire noise in the circuit can be expressed as a spacetime noise map: $\mathcal{N}_{st} = \mathcal{N}_1 \otimes \mathcal{N}_2 \otimes \cdots \otimes \mathcal{N}_N$. 

\begin{figure}[htbp]
\centering
\includegraphics[width=\linewidth]{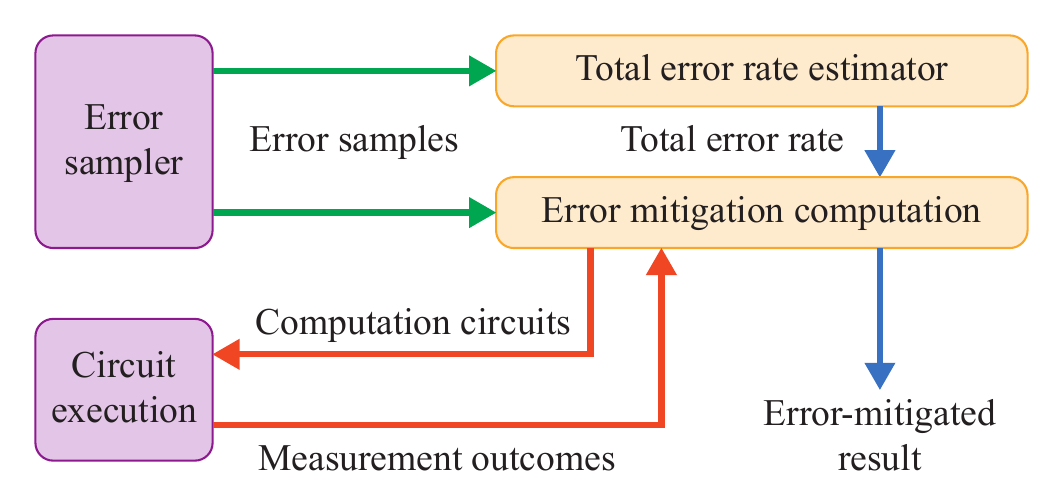}
\caption{
Workflow of spacetime noise inversion. The error sampler generates spacetime Pauli errors, which are used to estimate the total error rate and modify the quantum circuit: Pauli gates are inserted into the circuit according to the generated errors. The final error-mitigated result is obtained as the expected value over measurement outcomes from the modified circuits, incorporating necessary normalization and phase factors from the Monte Carlo summation. For detailed protocols and pseudocodes, see Appendices \ref{app:ideal_protocol}, \ref{app:practical_protocol}, and \ref{app:codes}. 
}
\label{fig:scheme}
\end{figure}

To mitigate errors, we construct the inverse of $\mathcal{N}_{st}$. Since each noise map is Pauli, the spacetime noise takes the form $\mathcal{N}_{st} = \sum_{\sigma\in \mathbb{P}_n^{\otimes N}} \epsilon(\sigma)[\sigma]$, where $\mathbb{P}_n$ is the set of $n$-qubit Pauli operators, and $\epsilon(\sigma)$ is the rate of the error $[\sigma]$ (the error is called trivial when $\sigma$ is identity). We can rewrite it as $\mathcal{N}_{st} = (1 - P) [\openone^{\otimes N}] + P \mathcal{E}$, where $\mathcal{E}= \frac{1}{P}\sum_{\sigma\in \mathbb{P}_n^{\otimes N}-\{\openone^{\otimes N}\}} \epsilon(\sigma)[\sigma]$ is a map representing the erroneous component, and $P = \sum_{\sigma\in \mathbb{P}_n^{\otimes N}-\{\openone^{\otimes N}\}} \epsilon(\sigma)$ is the total error rate. Using a Taylor expansion, the inverse map can be written as 
\begin{equation}
\mathcal{N}_{st}^{-1} = \sum_{k=0}^\infty \frac{(-1)^k P^k}{(1-P)^{k+1}} \mathcal{E}^k
\end{equation}
when $P<1/2$. We can realize the inverse map by using the Monte Carlo method to simulate the above summation formula: We sample $k$ according to $\mathrm{Pro}(k) \propto P^k/(1-P)^{k+1}$ and generate Pauli errors from the distribution $\mathcal{E}^k$, then we modify the circuit according to the generated errors and take an average over random samples; see Fig.~\ref{fig:scheme}. We assume access to a sampler that generates Pauli errors according to the distribution $\mathcal{N}_{st}$; later, we will give a practical protocol of the error sampler without knowing the error model. By post-selecting only non-trivial errors, we can effectively generate errors from the distribution $\mathcal{E}^k$: post-select $k$ non-trivial errors and take the product. In this way, we only need one accurate parameter $P$ to achieve unbiased error mitigation. Note that the parameter $P$ can be estimated from the errors observed in the sampler. 

\section{Practical error sampler}
We propose a method to sample Pauli errors associated with an operation by preparing a suitable initial state. When the operation is applied, Pauli errors transform the state into distinct, mutually orthogonal final states. By measuring these final states, we can identify the errors. For gate operations, the error sampler circuit is adapted from quantum process tomography using a Bell state~\cite{dariano_imprinting_2003,dariano_tomography_2001,leung_chois_2003,dur_nonlocal_2001,PhysRevA.105.032435,seif2024entanglement} and is integrated with quantum error correction; see Fig.~\ref{fig:protocol}(a). A similar approach can be applied to sample errors in state preparation and measurement operations; see Appendix~\ref{app:practical_protocol}. To sample a spacetime error, we collect errors from individual operations, which together constitute the spacetime error. 

\begin{figure}[htbp]
\centering
\includegraphics[width=\linewidth]{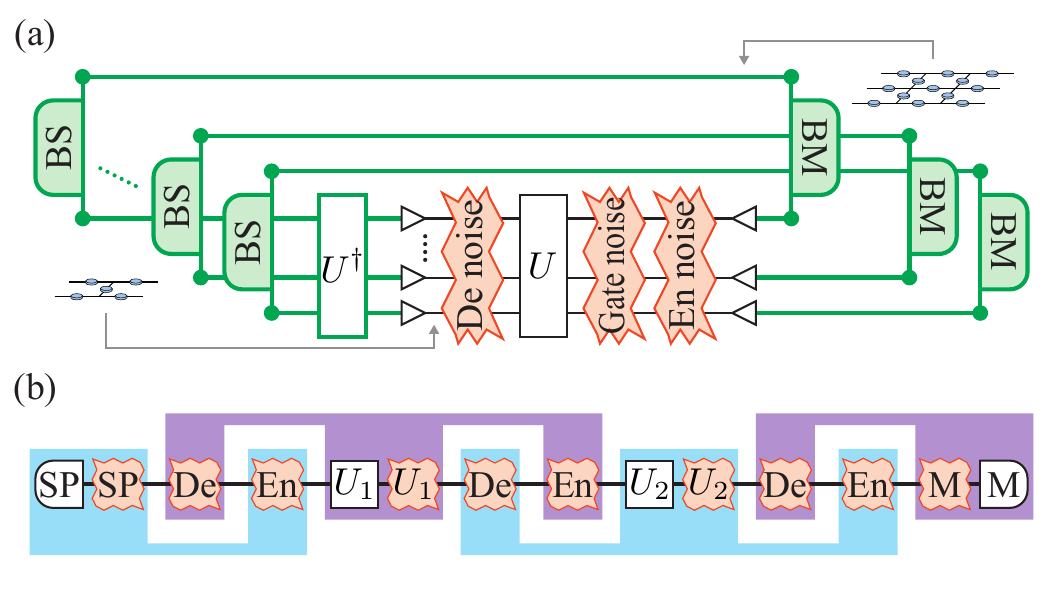}
\caption{
(a) Error sampler circuit for a gate. An ancilla qubit is introduced for each qubit the gate $U$ acts on. First, each qubit pair is initialized in the Bell state (BS) $(\ket{00}+\ket{11})/\sqrt{2}$. Then, the inverse gate $U^\dag$ and the gate $U$ are applied sequentially. Finally, the Bell measurement (BM) is applied to each qubit pair, i.e. measuring $XX$ and $ZZ$ operators. Triangles represent encoding (En) and decoding (De) operations, which transfer states between low-distance logical qubits and high-distance super qubits, which are represented by thin black and thick green lines, respectively. 
(b) Noise-boosted computation circuit. Smooth-boundary squares represent ideal operations, while zigzagged squares denote noise maps (which describe the noises associated with corresponding operations). Encoding and decoding errors are stochastically inserted after state preparation (SP) operations and unitary gates ($U_1$ and $U_2$) but not after measurement (M) operations. Each operation, together with its inherent noise and the inserted encoding/decoding noise, constitutes the effective noisy operation, as indicated by the blue and purple boxes. Note that Pauli twirling is applied to the encoding and decoding operations, such that the associated errors are effectively Pauli, i.e., commutative. 
}
\label{fig:protocol}
\end{figure}

To achieve unbiased error mitigation using the practical error sampler, we address two critical problems and introduce necessary refinements to our protocol. First, errors may not be strictly Pauli. Second, even if the errors are Pauli, the practical error sampler deviates from the target error distribution due to additional errors in an error sampler circuit, which are caused by operations other than the one being benchmarked. Both of them can cause bias. 

We use twirling operations~\cite{knill2004faulttolerantpostselectedquantumcomputation,geller_efficient_2013, PhysRevA.94.052325} to address the first problem and convert general errors into Pauli errors. As an example, we focus on a universal set of operations consisting of two types: (i) Stabilizer operations, which include Clifford gates and state preparations/measurements in the Pauli basis; and (ii) Non-stabilizer operations, which include gates from the third level of the Clifford hierarchy~\cite{gottesman_demonstrating_1999} and state preparations/measurements in the basis of Hermitian Clifford operators. We assume that, except for two-qubit Clifford gates, all other operations are single-qubit. Pauli gates on logical qubits typically exhibit negligible noise~\cite{piveteau_error_2021, suzuki_quantum_2022}, allowing them to serve as ideal tools for twirling stabilizer operations. However, twirling non-stabilizer operations requires the use of non-Pauli stabilizer operations~\cite{yoshioka_error_2025}, which may themselves be noisy. As a result, the twirled noise in non-stabilizer operations remains not strictly Pauli. To achieve perfect twirling, it is therefore necessary to mitigate errors in the non-Pauli stabilizer operations used for twirling, making them effectively error-free---that is, to mitigate errors in a randomized circuit subject to indefinite spacetime noise. 

To ensure unbiasedness, we must resolve the issue of indefinite spacetime noise. To this end, we develop a general formalism for characterizing noise in randomized dynamic circuits, termed maximum spacetime noise. This framework is presented in Appendix~\ref{app:NRDC}. When combined with twirling, it enables the reduction of arbitrary noise to effective Pauli noise, which can then be completely removed using SNI. 

We use error correction to address the second problem and eliminate the additional errors in error sampler circuits. While this approach is general, we illustrate it with the following example: Qubits are encoded into a surface code with a moderate distance $d$, leaving some residual logical errors that need to be mitigated [see Fig.~\ref{fig:protocol}(a)]. In an error sampler circuit, qubits are initially encoded with a larger code distance $d_S$, ensuring that logical errors are negligible; we refer to these as super qubits. Just before applying the operation to be benchmarked [gate $U$ in Fig.~\ref{fig:protocol}(a)], we reduce the code distance to $d$ (decoding). Then, we apply the operation and increase the code distance again (encoding). This procedure ensures that the dominant source of error originates from the operation being benchmarked, while errors in the remaining computational operations are negligible. Note that encoding and decoding errors still persist and will be addressed separately. 

For the surface code, the code distance can be increased and decreased using lattice surgery operations~\cite{horsman_surface_2012,Vuillot_2019,litinski_game_2019}. For general error correction codes, various techniques can control the distance. One example is code concatenation~\cite{knill_concatenated_1996,yamasaki_time-efficient_2024}: We can encode super qubits on top of low-distance logical qubits using another code, effectively increasing the code distance. Additionally, super qubits can also be realized by modifying the decoding algorithm without physically increasing the code distance~\cite{rodriguez2024experimentaldemonstrationlogicalmagic}. This can be done by post-selecting events based on error syndromes to suppress the logical error rate, at the cost of increased time overhead. In the extreme case, post-selecting only events with no detected syndromes can effectively double the code distance. 

Error sampler can be performed without additional physical qubits although enlarging the code distance usually increases the encoding overhead. Note that error sampler circuits can operate independently. Consider surface codes as an example. Suppose the quantum processor has $n$ low-distance logical qubits. Error sampler circuits can be executed in parallel on approximately $d^2n/(2d_S^2)$ qubits, leading to a time overhead of about $2d_S^2/d^2$. An additional overhead factor of $O(d_S/d)$ may arise due to the increased time required for logical operations on super qubits. To sample a single instance of the spacetime error, the error sampler circuit must be executed for each operation in the computation circuit. Consequently, the time cost of one full run of spacetime error sampler is comparable to a single run of the computation circuit, with an overhead factor of $O(\mathrm{Poly}(d_S/d))$ for surface codes. 

The above analysis extends to qLDPC codes, where each code block can encode multiple logical qubits. Approaches to operating these logical qubits include the concatenation with other codes and lattice surgery, enabling fault-tolerant quantum computation with a constant qubit overhead~\cite{gottesman_fault-tolerant_2014,tamiya_polylog-time-_2024,nguyen_quantum_2024,Cohen_2022,zhang_time-efficient_2025,cowtan2025parallellogicalmeasurementsquantum}. Both approaches are compatible with the realization of super qubits. Similar to surface codes, error sampler circuits can be implemented on a subset of blocks in parallel, introducing only a polynomial time overhead in terms of code distance ratio. Notably, logical errors within each qLDPC code block may exhibit strong correlations, simultaneously affecting multiple logical qubits~\cite{zhang_demonstrating_2025}. In such cases, a code block encoding $k$ logical qubits could introduce $O(4^k)$ parameters to the error model, making conventional PEC, which relies on a sparse error model, ineffective. In contrast, SNI remains robust in such scenarios. 

See Appendices~\ref{app:applications_SC} and \ref{app:applications_qLDPC} for detailed discussions on applications to surface codes and general qLDPC codes. 

Lastly, encoding and decoding operations may also introduce additional errors. To handle this, we can align the error distributions in the error sampler and computation circuit by boosting the noise in the computation circuit. Specifically, we sample the encoding and decoding errors by removing the gates $U$ and $U^\dag$ from the error sampler circuit in Fig.~\ref{fig:protocol}(a), and then insert the observed errors into the computation circuit, as illustrated in Fig.~\ref{fig:protocol}(b). In this way, the error sampler is effectively perfect. 

Alternatively, the impact of encoding and decoding errors can be reduced using zero-noise extrapolation techniques such as gate folding~\cite{digital_giurgica-tiron_2020}, taking advantage of the fact that ideal encoding and decoding operations are identities and thus well-suited for folding. Though practical, gate folding may retain some residual bias. 

\section{Performance and theoretical analysis}
Consider computing the expected value of an observable $A$. The following results characterize the residual error after mitigation and the corresponding sampling overhead. 

\begin{theorem}
Apply SNI with an exact error sampler to an arbitrary randomized dynamic circuit. Suppose errors are temporally uncorrelated; suppose Pauli gates are error-free and errors in all other operations are Pauli. Let $P$ denote the total error rate of the maximum spacetime noise, which is smaller than $1/2$, and let $\hat{P}$ be its estimate. 
Define $M$ as the number of circuit runs used to evaluate $\hat{A}_{QEM}$, and $M_P$ as the number of maximum spacetime error instances used to estimate the total error rate. For the error-mitigated estimator, the bias has the upper bound $\norm{a}_{L^\infty} \left\vert\frac{1}{1-2\hat{P}} - \frac{1}{1-2P}\right\vert$, where $\norm{a}_{L^\infty}$ is the maximum absolute value of the observable across all possible measurement outcomes. For any positive numbers $\delta$ and $f$, the error $\vert\hat{A}_{QEM}-\mean{A}_I\vert$ is smaller than $\delta\norm{a}_{L^\infty}$ with a probability at least $1-f$ under conditions $M_P\geq \frac{1}{2t_P^2}\ln\frac{4}{f}$ and $M\geq \frac{8}{\delta^2(1-2P-2t_P)^2}\ln\frac{4}{f}$, where $t_P = \min\left\{\frac{\delta(1-2P)^2}{4+2\delta(1-2P)},\frac{1}{2}-P\right\}$. Furthermore, let $M_{es}$ be the total number of spacetime error instances generated from the error sampler. The expected value and variance of the sampling cost $M_{es}$ are given by $M_P + \frac{M\hat{P}}{P(1-2\hat{P})}$ and $\frac{M\hat{P}(2-P-3\hat{P}+2P\hat{P})}{P^2(1-2\hat{P})^2}$, respectively. 
\label{the:error_cost}
\end{theorem}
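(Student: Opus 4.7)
The plan is to handle the three claims separately: for the bias I would condition on $\hat{P}$ and compute the conditional bias exactly from the Monte-Carlo definition of the estimator; for concentration I would apply Hoeffding independently to the Bernoulli trials producing $\hat{P}$ and to the $M$ Monte-Carlo shots, then union-bound; for cost I would use the tower property and the law of total variance on the nested geometric/waiting-time structure of the sampler calls. The bias calculation is where the real work sits. Starting from $\E[\hat{A}_{QEM}\,\vert\,\hat{P}]=\Tr(A\,\hat{\mathcal{N}}^{-1}\mathcal{N}_{st}\rho_f^{\mathrm{ideal}})$ (weights built from $\hat{P}$, but error samples drawn from the true $\mathcal{N}_{st}$), I would expand $\hat{\mathcal{N}}^{-1}$ as a power series in $\mathcal{E}$ and subtract the analogous identity with $P$ in place of $\hat{P}$, which collapses to $\mean{A}_I$. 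What remains is $\sum_{k\geq 0}\alpha_k T_k$ with $\alpha_k=(-\hat{P})^k/(1-\hat{P})^{k+1}-(-P)^k/(1-P)^{k+1}$ and $|T_k|\leq\norm{a}_{L^\infty}$. The crucial observation is that $f_k(p):=p^k/(1-p)^{k+1}$ is strictly increasing in $p\in(0,1/2)$, so $f_k(\hat{P})-f_k(P)$ has a sign independent of $k$; the $\alpha_k$ therefore alternate strictly in sign, and by $\sum_k f_k(p)=1/(1-2p)$ one gets $\sum_k|\alpha_k|=\bigl|\tfrac{1}{1-2P}-\tfrac{1}{1-2\hat{P}}\bigr|$, delivering exactly the stated bound after a single triangle inequality.

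For concentration, Hoeffding on the Bernoulli sampler gives $\Pr(|\hat{P}-P|>t_P)\leq 2\exp(-2M_Pt_P^2)\leq f/2$ under the stated $M_P$. On the complementary event, each single-shot value $(-1)^K a/(1-2\hat{P})$ lies in an interval of width $\leq 2\norm{a}_{L^\infty}/(1-2P-2t_P)$, so a second Hoeffding on the average of $M$ shots gives $\Pr(|\hat{A}_{QEM}-\E[\hat{A}_{QEM}\,\vert\,\hat{P}]|>\delta\norm{a}_{L^\infty}/2)\leq f/2$ under the stated $M$. The prescription for $t_P$ emerges from requiring $2t_P/[(1-2P)(1-2P-2t_P)]\leq\delta/2$ (so that the bias also sits below $\delta\norm{a}_{L^\infty}/2$), together with the sanity constraint $\hat{P}<1/2$ that keeps the PEC series convergent; a triangle inequality and union bound then deliver the advertised overall error with probability $\geq 1-f$.

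For the sampling cost I would decompose $M_{es}=M_P+\sum_{j=1}^M N_j$, where $N_j$ is the number of raw samples consumed on the $j$th shot. The insertion count $K_j\sim\mathrm{Pro}_{\hat{P}}$ is geometric on $\{0,1,2,\ldots\}$ with parameter $(1-2\hat{P})/(1-\hat{P})$, giving mean $\hat{P}/(1-2\hat{P})$ and variance $\hat{P}(1-\hat{P})/(1-2\hat{P})^2$; given $K_j$, $N_j$ is a sum of $K_j$ independent geometric$(P)$ waiting times, hence has conditional mean $K_j/P$ and conditional variance $K_j(1-P)/P^2$. The tower property together with $\Var(N_j)=\E[\Var(N_j\,\vert\,K_j)]+\Var(\E[N_j\,\vert\,K_j])$ produce the two claimed moments after collecting common factors. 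The main obstacle throughout is the bias step: the naive triangle-inequality bound $\sum_k|\alpha_k|\norm{a}_{L^\infty}$ would ordinarily be wasteful, but the strictly alternating sign structure of $\alpha_k$ is exactly what makes it tight enough to match $\bigl|\tfrac{1}{1-2P}-\tfrac{1}{1-2\hat{P}}\bigr|$; everything downstream is routine Hoeffding and Wald-type moment bookkeeping.
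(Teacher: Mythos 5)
Your proposal is correct and follows essentially the same route as the paper: the same conditional-bias expansion with the monotonicity of $p^k/(1-p)^{k+1}$ making the term-by-term absolute sum collapse to $\vert\frac{1}{1-2\hat{P}}-\frac{1}{1-2P}\vert$, the same two Hoeffding bounds with a union bound and the same derivation of $t_P$ from $\frac{2t_P}{(1-2P)(1-2P-2t_P)}\leq\frac{\delta}{2}$, and the same geometric-compound-negative-binomial structure for the cost (the paper computes $\mathrm{E}[m^2]$ directly rather than via the law of total variance, but this is cosmetic). The only substantive difference is presentational: the paper routes the bias bound through the linear $F$-function formalism and the $L^1_{Pauli}$ norm to cover arbitrary randomized dynamic circuits, whereas your $\vert T_k\vert\leq\norm{a}_{L^\infty}$ step plays the identical role.
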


In Appendices \ref{app:error_cost_ideal}, \ref{app:error_cost_practical}, and \ref{app:correlations}, we provide the proof and extend the results to address the practical error sampler and non-Pauli errors (Corollary~\ref{cor:error_cost_practical}). The error bounds and cost estimators presented above hold rigorously when using the practical error sampler, provided that errors on super qubits are negligible. These results remain valid for non-Pauli errors, assuming an operation set suitable for twirling. Moreover, the conclusions extend to certain forms of temporally correlated noise (Corollary~\ref{cor:correlation}), an example of which will be discussed later. 

The bias upper bound in the theorem confirms that SNI is unbiased when the total error rate is estimated accurately, i.e., $\hat{P} = P$. Regarding the sampling cost, consider a small permitted error $\delta$, which results in a large number of circuit runs $M$. In this regime, the standard deviation $\sqrt{\mathrm{Var}(M_{es}\vert\hat{P})} = \Theta(\sqrt{M})$ is significantly smaller than the expected value $\mathrm{E}[M_{es}\vert\hat{P}] = \Theta(M)$. Consequently, the expected value $\mathrm{E}[M_{es}\vert\hat{P}] \simeq M_P + \frac{M}{1-2P}$ serves as a reliable estimate. Since $\delta$ is small, setting $M_P \simeq \frac{8}{\delta^2(1-2P)^4}\ln\frac{4}{f}$ and $M \simeq \frac{8}{\delta^2(1-2P)^2}\ln\frac{4}{f}$ is sufficient. Then, the relative overhead due to error sampler circuits is $M_{es}/M \simeq \frac{1}{(1-2P)^2} + \frac{1}{1-2P}$, which is six when $P = 1/4$ and approaches two when $P$ is small. Notably, this cost analysis is valid for non-sparse error models that involve cross-qubit correlations. 

According to the theorem, SNI works as long as $P < 1/2$. 
This constraint arises from the normalization factor $\gamma = 1/(1 - 2P)$ introduced in the Monte Carlo summation, which governs the sampling cost in error-mitigated computation. Specifically, to achieve the same variance in the results as unmitigated computation, SNI requires $\gamma^2$ times more samples. As $P$ approaches $1/2$, this cost becomes divergent. In contrast, conventional PEC does not suffer from this limitation; its corresponding factor is roughly $\gamma \approx e^{2Np}$, where $N$ is the number of gates and $p$ is the per-gate error rate. Even when $Np$ exceeds one, the cost remains finite. This fundamental difference stems from the Taylor expansion used in noise inversion. It is possible that alternative expansion formulas could yield a smaller normalization factor $\gamma$, thereby reducing the associated cost. 
If $P \geq 1/2$, SNI can be adapted by decomposing $\mathcal{N}_{max}$ into a product of multiple noise maps, each with an error rate below $1/2$. Each noise map can then be mitigated independently using SNI, though this approach requires measuring multiple error rates. Additionally, when $P$ is small, $\mathrm{Var}(M_{es}\vert\hat{P})$ becomes large. This issue can be resolved with a minor modification:
Intentionally boost $P$ by regarding the identity map as trivially acting noise and incorporate it into $\mathcal{E}$. 
For further details, see Appendix~\ref{app:modifications}. 

\begin{figure}[htbp]
\centering
\includegraphics[width=\linewidth]{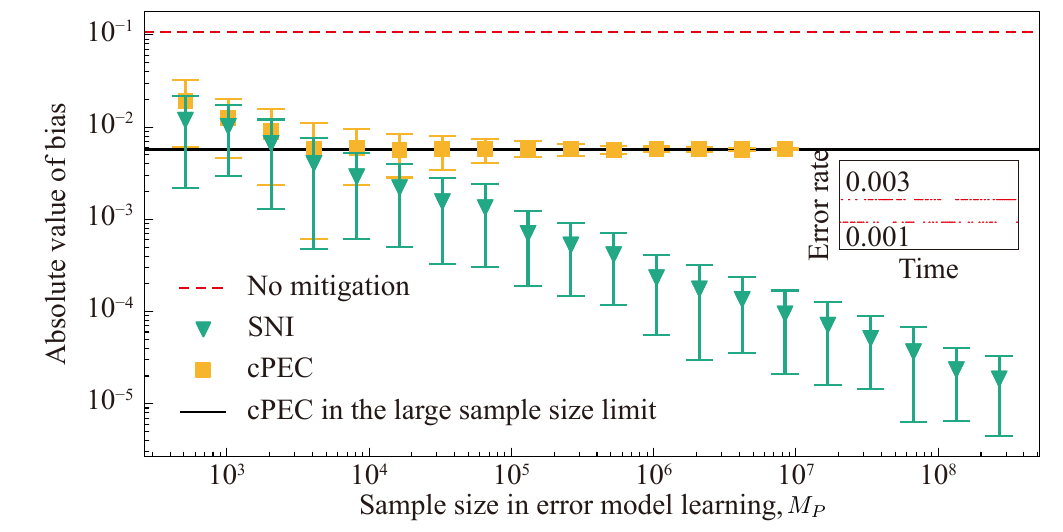}
\caption{
Bias in error mitigation under error parameter fluctuations. In the numerical simulation, we evaluate the bias of a quantum circuit implementing the transformation $[e^{-i\frac{\pi}{8}(X_1+X_2)}e^{-i\frac{\pi}{8}Z_1Z_2}]^8$ on two qubits initialized in the $\ket{+}$ state, with the observable being $X$ on the first qubit. The transformation is decomposed into controlled-NOT, Hadamard, and $T$ gates. 
The simulation is performed at the logical-qubit level, assuming a model of logical errors: 
Each non-Pauli operation is subject to a noise map, where all noise maps are parameterized by an error rate $p$ that is randomly drawn from $\{0.001,0.003\}$ with equal probabilities for each individual run of the computation circuit and spacetime error generation. 
The noise includes both Pauli and coherent errors. SNI with a practical error sampler and the conventional PEC (cPEC) are applied for error mitigation. In both protocols, $M_P$ instances of the spacetime error are used to estimate error parameters. In cPEC, a sparse error model is assumed without accounting for temporal correlations. Error bars represent standard deviations, each estimated from 100 instances. See Appendix~\ref{app:time} for details. 
}
\label{fig:time_main}
\end{figure}

\section{Robustness to unstable error parameters}
When error parameters fluctuate over time, the effective spacetime noise exhibits temporal correlations. In the case of Pauli noise, the error parameters are denoted as $\mathbf{p}$, representing the rates of Pauli errors, and follow a distribution characterized by the probability density function $g(\mathbf{p})$. We assume that the time scale over which $\mathbf{p}$ varies is much longer than the time required for a single circuit run. Under this assumption, the effective spacetime noise is given by $\mathcal{N}_{ave} = \int d\mathbf{p} g(\mathbf{p}) \mathcal{N}_{max}(\mathbf{p})$, which is no longer a product of independent noise maps for individual operations, implying that errors are correlated in time. Such correlations cause bias in conventional PEC, which can be eased by real-time adaptation and Bayesian inference~\cite{dasgupta2023adaptivemitigationtimevaryingquantum,daguerre2025realtimeadaptationquantumnoise}. 
We also remark that a purification-based QEM method for eigenstate calculation can also counteract error fluctuations~\cite{yoshioka2022generalized}, while it is not completely unbiased. 
Nevertheless, SNI remains effective in the same manner as when error rates are constant. 

In a numerical simulation, we illustrate the performance of SNI in mitigating errors with a fluctuating parameter; see Fig.~\ref{fig:time_main}. Unlike non-sparse error models with cross-qubit correlations, where the rigorous analysis covers non-Pauli errors, the unbiasedness for temporally varying non-Pauli noise is not guaranteed theoretically. Therefore, we include non-Pauli errors in our numerical simulations to assess their impact. The results demonstrate that SNI effectively mitigates such errors, with the bias decreasing as $M_P$ increases, approaching an unbiased result. Moreover, the observed $1/\sqrt{M_P}$ scaling is consistent with the prediction of Theorem~\ref{the:error_cost}, supporting the conclusion that the result becomes unbiased in the large-$M_P$ limit. In contrast, conventional PEC converges to a finite bias, beyond which increasing $M_P$ provides no further improvement. 

\section{Conclusions}
We have presented an error mitigation protocol that relies on a single accurately estimated noise parameter, replacing detailed error-model characterization with direct error sampling. By integrating quantum error correction and other fault-tolerant techniques, our approach removes imperfections in the sampling circuits and enables unbiased mitigation. As a consequence, it naturally supports mitigating correlated noise, including those present in high-encoding-rate qLDPC codes and in temporally varying noise processes. The protocol is primarily intended for logical-qubit computations and is specifically designed to avoid the need to characterize extremely small logical error rates or complex, highly correlated logical noise models. Although it may, in principle, also be applied at the physical-qubit level by treating physical qubits as low-distance logical qubits, doing so requires the capability to encode and manipulate at least four logical qubits (used as super-qubits) with fault-tolerant logical operations, which in turn demands operation below threshold and a sufficiently large physical-qubit budget. Overall, our results demonstrate the practical potential of unifying error correction and error mitigation, offering a promising route toward resource-efficient quantum computation in the early fault-tolerant era.

\begin{acknowledgments}
This work is supported by the National Natural Science Foundation of China (Grant Nos. 12225507, 12088101) and NSAF (Grant No. U1930403). N.Y. is supported by JST Grant Number JPMJPF2221, JST CREST Grant Number JPMJCR23I4, IBM Quantum, JST ASPIRE Grant Number JPMJAP2316, JST ERATO Grant Number JPMJER2302, and Institute of AI and Beyond of the University of Tokyo.
K.T. is supported by the Program for Leading Graduate Schools (MERIT-WINGS) and JST BOOST Grant Number JPMJBS2418. The source codes for the numerical simulation are available at \cite{Code_Xie}.
\end{acknowledgments}

\appendix

\begin{widetext}

\addtocounter{theorem}{1}

Notations are introduced in Sec.~\ref{app:notations}, including definitions of operation sets, circuit types, and noise models. A detailed discussion of maximum spacetime noise is presented in Sec.~\ref{app:NRDC}. Concepts relevant to observable evaluation in quantum computing are outlined in Sec.~\ref{app:observable}. Complete protocols based on ideal and practical error samplers are described in Secs.~\ref{app:ideal_protocol}~and~\ref{app:practical_protocol}, respectively, with accompanying pseudocodes provided in Sec.~\ref{app:codes}. Error and cost analyses for both protocols appear in Secs.~\ref{app:error_cost_ideal}~and~\ref{app:error_cost_practical}: Theorem~1 is proved in Sec.~\ref{app:error_cost_ideal}, while rigorous results for the practical-sampler protocol are given in Sec.~\ref{app:error_cost_practical}. Theoretical analysis of mitigating temporally correlated errors is provided in Sec.~\ref{app:correlations}. Impact of Pauli-gate and super-qubit errors is provided in Sec.~\ref{app:pauli_SQ_errors}. In Sec.~\ref{app:modifications}, we also present modifications to the protocols for handling cases where the total error rate is either too large or too small. The protocol and cost analysis of benchmarking of surface code operations is shown in Sec.~\ref{app:benchmarking_SC}. Details of the numerical simulations can be found in Sec.~\ref{app:numerics}. We also compare SNI to other quantum error mitigation methods in Sec.~\ref{app:comparisons}. Finally, applications of SNI to surface codes and qLDPC codes are presented in Sec.~\ref{app:applications_SC} and \ref{app:applications_qLDPC}, respectively.

\section{Notations}
\label{app:notations}

\subsection{Qubits and operations}

We consider a quantum computer of $n$ logical qubits. The label set of logical qubits is $Q = \{1,2,\ldots,n\}$. The operation set on logical qubits is $\mathbb{O}$. The Pauli operator set of $n$ logical qubits is $\mathbb{P}_n$. 

\begin{example}
An example of a universal operation set is 
\begin{eqnarray}
\mathbb{O} &=& \{H_j,S_j,D_j,P_j\vert j\in Q,P=I,X,Y,Z\} \cup \{CNOT_{i,j}\vert i,j\in Q,i\neq j\} \notag \\
&& \{SP_{X,j},SP_{Z,j}\vert j\in Q\} \cup \{M_{X,j},M_{Z,j}\vert j\in Q\} \cup \{SP_{A,j}\vert j\in Q\}.
\end{eqnarray}
Here, $H_j$, $S_j$ $D_j$ and $P_j$ denote the Hadamard, phase, delay (idle operation for time synchronization) and Pauli gates on qubit-$j$, respectively; $CNOT_{i,j}$ denotes the controlled-NOT gate, where qubit-$i$ (qubit-$j$) is the control (target) qubit; $SP_{\sigma,j}$ and $M_{\sigma,j}$ denote the state preparation and measurement on qubit-$j$ in the $\sigma = X,Z$ basis, respectively; and $SP_{A,j}$ denotes the state preparation on qubit-$j$ in the magic state $\vert A\rangle = (\vert 0\rangle + e^{i\pi/4}\vert 1\rangle)/\sqrt{2}$, which is the eigenstate of the Clifford operator $(X+Y)/\sqrt{2}$ with the eigenvalue $+1$. 
\label{exp:operationset}
\end{example}

\begin{definition}
{\bf Operation sets.} In what follows, we will use the following notations for operation sets: 
\begin{eqnarray}
\mathbb{O}_{1} &=& \{\text{Single-qubit gates}\}\cup\{\text{Delay gates}\}, \\
\mathbb{O}_{2} &=& \{\text{Two-qubit gates}\}, \\
\mathbb{O}_{S.P.} &=& \{\text{State preparations}\}, \\
\mathbb{O}_{Mea.} &=& \{\text{Measurements}\}, \\
\mathbb{O}_{P} &=& \{\text{Pauli gates, i.e. operators in $\mathbb{P}_n$}\}, \\
\mathbb{O}_{S} &=& (\{\text{Single-qubit and two-qubit Clifford gates,} \notag \\
&& \text{Single-qubit state preparations in the eigenstate of a Pauli operator,} \notag \\
&& \text{Single-qubit measurements of a Pauli operator}\} - \mathbb{O}_{P})\cup\{\text{Delay gates}\}, \\
\mathbb{O}_{non-S} &=& \{\text{Single-qubit gates in the Clifford hierarchy at
the third level,} \notag \\
&& \text{Single-qubit state preparations in the eigenstate of a Hermitian Clifford operator,} \notag \\
&& \text{Single-qubit measurements of a Hermitian Clifford operator}\} - (\mathbb{O}_{S}\cup\mathbb{O}_{P}).
\end{eqnarray}
Notice that Pauli gates $\mathbb{O}_{P}$ include the identity gate $\openone$, and they are excluded from $\mathbb{O}_{S}$. The identity gate and delay gates are both idle operations, however, we assume that the identity gate has a negligible error while delay gates potentially have significant errors because of the execution time. 
\end{definition}

We remark that although all the operation sets listed above consist solely of primitive single- and two-qubit operations, in general, an operation set may also include multi-qubit operations, such as a layer of parallel primitive operations. 
 
\subsection{Completely positive maps}

We let each operation output a classical message, i.e. an outcome, to simplify expressions. Each operation $\alpha\in \mathbb{O}$ has an outcome set $m_\alpha$. If $\alpha$ is a measurement, $m_\alpha$ is the set of measurement outcomes; otherwise, $m_\alpha$ has only one element, then the message is trivial. 

\begin{example}
For the operation set in Example~\ref{exp:operationset}, the outcome set is $m_\alpha = \{\pm 1\}$ when the operation $\alpha\in \{M_{X,j},M_{Z,j}\vert j\in Q\}$ is a measurement, and the outcome set is $m_\alpha = \{0\}$ (the element could be an arbitrary letter, which does not have any physical meaning) when $\alpha\in \mathbb{O} - \{M_{X,j},M_{Z,j}\vert j\in Q\}$ is not a measurement. 
\end{example}

Each operation $\alpha\in \mathbb{O}$ is described by a set of completely positive maps 
\begin{eqnarray}
\{\mathcal{M}(\alpha,\mu)\vert\mu\in m_\alpha\} \notag
\end{eqnarray}
satisfying the condition that $\sum_{\mu\in m_\alpha}\mathcal{M}(\alpha,\mu)$ is a trace-preserving completely positive map. Given an input state $\rho$, the outcome $\mu$ occurs with a probability of $\Tr[\mathcal{M}(\alpha,\mu)\rho]$, and the corresponding the output state is 
\begin{eqnarray}
\frac{\mathcal{M}(\alpha,\mu)\rho}{\Tr[\mathcal{M}(\alpha,\mu)\rho]}. \notag
\end{eqnarray}

\subsection{Circuits}
\label{app:circuits}

\begin{definition}
{\bf Static circuit.} A static circuit of $N$ operations is an $N$-tuple $C = (c_1,c_2,\ldots,c_N)$, where $c_j\in \mathbb{O}$ is the $j$th operation in the circuit. 
\end{definition}

Let $\rho_i$ be the initial state of $n$ logical qubits. For a static circuit, the final state reads 
\begin{eqnarray}
\rho_f(\mu) = \mathcal{M}(c_N,\mu_N)\cdots\mathcal{M}(c_2,\mu_2)\mathcal{M}(c_1,\mu_1)\rho_i,
\label{eq:rhof}
\end{eqnarray}
where $\mu = (\mu_1,\mu_2,\ldots,\mu_N)$ denotes outcomes, and $\mu_j\in m_{c_j}$ is the outcome of the $j$th operation. Notice that the final state depends on outcomes. 

We remark that expressing the final state in the form of Eq.~(\ref{eq:rhof}) assumes that errors are temporally uncorrelated. 

A general quantum circuit may include not only gates but also mid-circuit state preparations and measurements. During execution, the circuit may adapt based on measurement outcomes, enabling feedback operations. Additionally, the circuit could depend on a random variable (or a set of variables) in the context of randomized compiling. We refer to such circuits as randomized dynamic quantum circuits. In these circuits, each operation depends on a random variable $\lambda$ and measurement outcomes $\mu$. 

\begin{definition}
{\bf Dynamic circuit.} A dynamic circuit of $N$ operations is an $N$-tuple $C = (c_1,c_2,\ldots,c_N)$. In the circuit, the $j$th operation is $c_j(\mu_{<j})\in \mathbb{O}$, which depends on previous measurement outcomes denoted by $\mu_{<j} = (\mu_1,\mu_2,\ldots,\mu_{j-1})$. 
\end{definition}

For a dynamic circuit, the final state reads 
\begin{eqnarray}
\rho_f(\mu) = \mathcal{M}(c_N(\mu_{<N}),\mu_N)\cdots\mathcal{M}(c_2(\mu_{<2}),\mu_2)\mathcal{M}(c_1(\mu_{<1}),\mu_1)\rho_i,
\end{eqnarray}
where $\mu_{<1} = ()$ is a $0$-tuple. 

\begin{definition}
{\bf Randomized dynamic circuit.} A randomized dynamic circuit of $N$ operations is a 2-tuple $(w,C)$, where $w(\lambda)$ is a normalized weight function of a variable $\lambda$ (called internal variable), and $C = (c_1,c_2,\ldots,c_N)$. In each circuit shot, a value of the variable $\lambda$ is generated according to the distribution $w(\lambda)$ in advance, then the $j$th operation in the circuit is $c_j(\lambda,\mu_{<j})\in \mathbb{O}$, which depends on the variable $\lambda$ and previous measurement outcomes $\mu_{<j}$. 
\label{def:RDC}
\end{definition}

For a randomized dynamic circuit, the final state reads 
\begin{eqnarray}
\rho_f(\lambda,\mu) = \mathcal{M}(c_N(\lambda,\mu_{<N}),\mu_N)\cdots\mathcal{M}(c_2(\lambda,\mu_{<2}),\mu_2)\mathcal{M}(c_1(\lambda,\mu_{<1}),\mu_1)\rho_i.
\label{eq:noisy_circuit}
\end{eqnarray}

\begin{definition}
{\bf Parametrized randomized dynamic circuit.} A parametrized randomized dynamic circuit of $N$ operations is a 2-tuple $(w,C)$, where $(w(\theta),C(\theta))$ is a randomized dynamic circuit of $N$ operations that depends on a variable $\theta$ (called external variable). Before implementing the circuit, the value of $\theta$ must be specified. Then, the distribution of $\lambda$ is given by $w(\theta,\lambda)$, and the $j$th operation in the circuit is $c_j(\theta,\lambda,\mu_{<j})\in \mathbb{O}$. 
\label{def:PRDC}
\end{definition}

\begin{definition}
{\bf Composite circuit.} A composite circuit is a randomized dynamic circuit. If composed of $\tilde{N}$ sub-circuits, the composite circuit is a 3-tuple $(W,\tilde{w},\tilde{C})$, where $W(\theta)$ is a normalized weight function of a variable $\theta$, $\tilde{w} = (w^{(1)},w^{(2)},\ldots,w^{(\tilde{N})})$ and $\tilde{C} = (C^{(1)},C^{(2)},\ldots,C^{(\tilde{N})})$. Each pair $(w^{(l)},C^{(l)})$, a sub-circuit, is a parametrized randomized dynamic circuit; measurement outcomes in the sub-circuit is denoted by $\mu^{(l)}$; and external variable of the sub-circuit is $(\theta,\mu^{(<l)})$, where $\mu^{(<l)}$ denotes measurement outcomes of previous sub-circuits. In each circuit shot, a value of $\theta$ is generated according to the distribution $W(\theta)$ in advance, then the sub-circuits $(w^{(l)}(\theta,\mu^{(<l)}),C^{(l)}(\theta,\mu^{(<l)}))$ are applied one by one in the order of $l = 1,2,\ldots,\tilde{N}$. 

Suppose the $l$th sub-circuit is a circuit of $N^{(l)}$ operations. The composite circuit is a randomized dynamic circuit $(w,C)$ of $N = \sum_{l=1}^{\tilde{N}} N^{(l)}$ operations. 
\end{definition}

\begin{definition}
{\bf Twirled operation.} A twirled operation is a parametrized dynamic circuit $(u(\alpha),T(\alpha))$, where the external variable is an operation $\alpha\in \mathbb{O}$. For operations in $\mathbb{O}_{P}\cup\mathbb{O}_{S}\cup\mathbb{O}_{non-S}$, we list the corresponding circuits in Table~\ref{tab:twirled_operations}. 

Notice that operation lists in Table~\ref{tab:twirled_operations} have different lengths, and we can make them the same length by adding trivial operations $\openone$ to the list. For example, we replace $(\alpha^\dag P\alpha,\alpha,P)$ with $(\openone,\openone,\alpha^\dag P\alpha,\alpha,P)$. Here, $\openone$ denotes that there is not any physical operation applied, and the only purpose of adding trivial operations is to satisfy Definition~\ref{def:PRDC}, in which operation lists have the same length $N$ for all values of the external variable. The operation number in the twirled operation is $N = 5$, which is the length of the longest operation list in Table~\ref{tab:twirled_operations}. 
\label{def:twirled_operation}
\end{definition}

\begin{table*}[!htbp]
    \setlength{\tabcolsep}{12pt}
    \renewcommand{\arraystretch}{1.5}
    \centering
    \caption{\label{tab:twirled_operations} {\bf Twirled operations.} The distribution is always uniform. When $\alpha\in \mathbb{O}_P$, the circuit is deterministic. When $\alpha\in \mathbb{O}_{S.P.}$ ($\alpha\in \mathbb{O}_{Mea.}$), $\kappa$ is the basis of state preparation (measurement), i.e. the prepared state is the eigenstate of $\kappa$ with the eigenvalue of $+1$ (the operation $\kappa$ is measured). When $\alpha\in \mathbb{O}_{S}\cap(\mathbb{O}_{S.P.}\cup\mathbb{O}_{Mea.})$, $\kappa$ is a Pauli operator; and when $\alpha\in \mathbb{O}_{non-S}\cap(\mathbb{O}_{S.P.}\cup\mathbb{O}_{Mea.})$, $\kappa$ is a Hermitian Clifford operator but not Pauli. 
    }
    \begin{tabular}{c|cccc}
        External variable & Internal variable & Distribution $u(\alpha,\bullet)$ & Operation list $T(\alpha)$ \\
        \hline \hline
        $\alpha\in \mathbb{O}_{P}$ & $P\in\{\openone\}$ & $u(\alpha,P) = 1$ & $(\alpha)$ \\
        \hline
        $\alpha\in \mathbb{O}_{S}\cap(\mathbb{O}_{1}\cup\mathbb{O}_{2})$ & $P\in\mathbb{P}_{Q(\alpha)}$ & $u(\alpha,P) = \frac{1}{\vert\mathbb{P}_{Q(\alpha)}\vert}$ & $(\alpha^\dag P\alpha,\alpha,P)$ \\
        \hline
        $\alpha\in \mathbb{O}_{S}\cap\mathbb{O}_{S.P.}$ & $P\in\{\openone,\kappa\}$ & $u(\alpha,P) = \frac{1}{2}$ & $(\alpha,P)$ \\
        \hline
        $\alpha\in \mathbb{O}_{S}\cap\mathbb{O}_{Mea.}$ & $P\in\{\openone,\kappa\}$ & $u(\alpha,P) = \frac{1}{2}$ & $(P,\alpha)$ \\
        \hline
        $\alpha\in \mathbb{O}_{non-S}\cap\mathbb{O}_{1}$ & $(P,P')\in\mathbb{P}_{Q(\alpha)}\times\mathbb{P}_{Q(\alpha)}$ & $u(\alpha,P,P') = \frac{1}{\vert\mathbb{P}_{Q(\alpha)}\vert^2}$ & $(\alpha^\dag P\alpha P'\alpha^\dag P\alpha,\alpha^\dag P\alpha,P',\alpha,P)$ \\
        \hline
        $\alpha\in \mathbb{O}_{non-S}\cap\mathbb{O}_{S.P.}$ & $(P,P')\in\{\openone,\kappa\}\times\mathbb{P}_{Q(\alpha)}$ & $u(\alpha,P,P') = \frac{1}{2\vert\mathbb{P}_{Q(\alpha)}\vert}$ & $(\alpha,PP'P,P,P')$ \\
        \hline
        $\alpha\in \mathbb{O}_{non-S}\cap\mathbb{O}_{Mea.}$ & $(P,P')\in\{\openone,\kappa\}\times\mathbb{P}_{Q(\alpha)}$ & $u(\alpha,P,P') = \frac{1}{2\vert\mathbb{P}_{Q(\alpha)}\vert}$ & $(PP'P,P,P',\alpha)$ \\
        \hline
    \end{tabular}
\end{table*}

\begin{definition}
{\bf Twirled circuit.} A twirled circuit is a composite circuit $(W,\tilde{w},\tilde{C})$. Each sub-circuit is in the form $w^{(l)}(\theta,\mu^{(<l)}) = u(\alpha_l(\theta,\mu^{(<l)}))$ and $C^{(l)}(\theta,\mu^{(<l)}) = T(\alpha_l(\theta,\mu^{(<l)}))$, i.e. $(w^{(l)}(\theta,\mu^{(<l)}),C^{(l)}(\theta,\mu^{(<l)}))$ is a twirled operation that the external-variable operation $\alpha_l(\theta,\mu^{(<l)})\in \mathbb{O}$ depends on $\theta$ and measurement outcomes of previous sub-circuits. 

Suppose the twirled circuit is composed of $\tilde{N}$ twirled operations, the circuit is a randomized dynamic circuit $(w,C)$ of $N = 5\tilde{N}$ operations. 
\label{def:twirled_circuit}
\end{definition}

\subsection{Noise}

An operation with noise can be expressed as 
\begin{eqnarray}
\mathcal{M}(\alpha,\mu) = \mathcal{N}^L(\alpha)\mathcal{M}^I(\alpha,\mu)\mathcal{N}^R(\alpha),
\label{eq:noisy_operation}
\end{eqnarray}
where $\mathcal{M}^I(\alpha,\mu)$ is the ideal operation, $\mathcal{N}^L(\alpha)$ and $\mathcal{N}^R(\alpha)$ are trace-preserving completely positive maps that representing the noise. When $\alpha$ is a measurement, the noise occurs before the operation, i.e. $\mathcal{N}^L(\alpha) = [\openone]$; otherwise, the noise occurs after the operation, i.e. $\mathcal{N}^R(\alpha) = [\openone]$. Here, $\openone$ is the identity operation acting on the Hilbert space of $n$ logical qubits, and $[U]\bullet = U\bullet U^\dag$. 

For each operation, there is only one non-trivial noise map in $\mathcal{N}^L(\alpha)$ and $\mathcal{N}^R(\alpha)$. We use $\mathcal{N}(\alpha)$ to denote the non-trivial noise map of the operation $\alpha$. When $\alpha$ is a measurement, $\mathcal{N}(\alpha) = \mathcal{N}^R(\alpha)$; otherwise, $\mathcal{N}(\alpha) = \mathcal{N}^L(\alpha)$. 

\begin{definition}
{\bf Operation support.} We use $Q(\alpha)\subseteq Q$ to denote the support of the operation $\alpha \in \mathbb{O}$, which represents the subset of qubits on which the ideal operation acts non-trivially. 
\end{definition}

\begin{definition}
{\bf Noisy operation support.} We use $\tilde{Q}(\alpha)\subseteq Q$ to denote the support of the operation $\alpha \in \mathbb{O}$ with noise. Let $Q_\mathcal{N}(\alpha)\subseteq Q$ be the subset of qubits on which the noise map $\mathcal{N}(\alpha)$ acts non-trivially. Then, $\tilde{Q}(\alpha) = Q(\alpha)\cup Q_\mathcal{N}(\alpha)$. 
\end{definition}

\begin{definition}
{\bf Pauli noise.} The noise in an operation $\alpha$ is said to be Pauli errors if and only if the corresponding noise map is in the form 
\begin{eqnarray}
\mathcal{N}(\alpha) = \sum_{\tau\in \mathbb{P}_n} \epsilon(\alpha,\tau)[\tau],
\end{eqnarray}
where $\epsilon(\alpha,\tau)$ is the rate of the Pauli error $[\tau]$. We can rewrite the map as 
\begin{eqnarray}
\mathcal{N}(\alpha) = \left(1-p(\alpha)\right)[\openone] + p(\alpha)\mathcal{E}(\alpha),
\end{eqnarray}
where 
\begin{eqnarray}
\mathcal{E}(\alpha) = \frac{1}{p(\alpha)}\sum_{\tau\in \mathbb{P}_n-\{\openone\}} \epsilon(\alpha,\tau)[\tau]
\end{eqnarray}
is a map denoting errors, and $p(\alpha) = \sum_{\tau\in \mathbb{P}_n-\{\openone\}} \epsilon(\alpha,\tau)$ is the total error rate of the operation. 
\label{def:Pauli_noise}
\end{definition}

\section{Universal representation of noisy randomized dynamic circuits}
\label{app:NRDC}

A general quantum circuit may include not only gates but also mid-circuit state preparations and measurements. During execution, the circuit may adapt based on measurement outcomes, enabling feedback operations. Additionally, the circuit could depend on a random variable (or a set of variables) in the context of randomized compiling. We refer to such circuits as randomized dynamic quantum circuits. In these circuits, each operation depends on a random variable $\lambda$ and measurement outcomes $\mu$. 

In a randomized dynamic quantum circuit, the spacetime noise $\mathcal{N}_{st}$ depends on both $\lambda$ and $\mu$. If we mitigate errors by inverting $\mathcal{N}_{st}$, we have to measure the total error rates for all possible $\lambda$ and $\mu$. To solve this problem, we introduce the maximum spacetime noise map $\mathcal{N}_{max}$; see Definition~\ref{def:MSN}. 

We can mitigate errors in a randomized dynamic quantum circuit by applying the inverse of the maximum spacetime noise, as justified by Theorem~\ref{the:MSN} (see Sec.~\ref{app:NRDC_proof} for the proof). By applying the inverse, we replace $\mathcal{N}_{max}$ with $\mathcal{N}_{max}^{-1} \mathcal{N}_{max}$ in the $F$ function, effective realizing an error-free final state. Since the function $F$ is linear, we can apply $\mathcal{N}_{max}^{-1}$ via the Taylor expansion. This approach allows us to mitigate errors by only measuring one parameter, the total error rate of $\mathcal{N}_{max}$. 

\begin{theorem}
Given a randomized dynamic circuit with temporally uncorrelated errors, the unnormalized final state is a linear function of the maximum spacetime noise map, i.e., 
\begin{equation}
\rho_f(\lambda, \mu) = F(\lambda, \mu, \mathcal{N}_{max}),
\label{eq:rhof_F_MSN}
\end{equation}
where $F(\lambda, \mu, \bullet)$ is a linear function. Note that the final state is unnormalized due to measurements in the circuit. 
\label{the:MSN}
\end{theorem}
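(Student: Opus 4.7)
The plan is to expand $\rho_f(\lambda,\mu)$ slot by slot using the noise decomposition of Eq.~(\ref{eq:noisy_operation}), and then to argue that the dependence on the collected noise data, once repackaged as $\mathcal{N}_{max}$, is manifestly linear because extracting the noise active in a given realization amounts to a sequence of fixed, linear operations: partial traces, compositions with fixed ideal channels, and tensor products with identities.

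First, for each fixed realization $(\lambda,\mu)$ I would write $\alpha_j := c_j(\lambda,\mu_{<j})$ for the concrete operation chosen in slot $j$, and then expand $\rho_f(\lambda,\mu)$ using Eq.~(\ref{eq:noisy_operation}) as an alternating product of the ideal maps $\mathcal{M}^I(\alpha_j,\mu_j)$ and the operation-noise maps $\mathcal{N}(\alpha_j)$ acting on $\rho_i$. Once $(\lambda,\mu)$ is fixed, the ideal sandwich is determined entirely by the classical data, and the only remaining quantum freedom sits in the product $\mathcal{N}(\alpha_1)\otimes\cdots\otimes\mathcal{N}(\alpha_N)$ of noise channels actually invoked by this realization.

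Second, I would appeal to Definition~\ref{def:MSN}, which packages the noise channels of every operation that could potentially appear at every slot into a single fixed tensor-product object $\mathcal{N}_{max}$ on an enlarged spacetime register. For the given realization, the required string $\mathcal{N}(\alpha_1)\otimes\cdots\otimes\mathcal{N}(\alpha_N)$ is recovered from $\mathcal{N}_{max}$ by a classical selection step, namely a partial trace over, or tensor contraction with, all factors corresponding to slot-operation pairs that are not actually invoked. Since partial trace, composition with fixed channels, and tensor product with identity are all linear operations on the input map, the overall map that sends $\mathcal{N}_{max}$ to $\rho_f(\lambda,\mu)$ is linear, which establishes Eq.~(\ref{eq:rhof_F_MSN}) with $F(\lambda,\mu,\cdot)$ linear in its third slot.

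The main obstacle is the careful bookkeeping of how $\mathcal{N}_{max}$ encodes the unused noise factors and how the selector attached to slot $j$ recovers the active noise without spoiling linearity. Because $\alpha_j$ depends on the earlier outcomes $\mu_{<j}$, the selector is outcome-conditioned, and one has to verify that this conditioning involves only the classical record and not the quantum noise registers. Here the hypothesis of \emph{temporally uncorrelated} noise is doing the heavy lifting: it guarantees that the noise at different slots factorizes across $\mathcal{N}_{max}$, so that extracting the active factor at one slot does not entangle with factors at other slots. Once this independence is written down cleanly, linearity of $F(\lambda,\mu,\cdot)$ reduces to the linearity of partial trace, channel composition, and tensor product, and the claim follows.
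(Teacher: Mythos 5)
Your proposal follows essentially the same route as the paper's proof: the composition of noisy operations is multilinear in the invoked noise maps, hence linear in their tensor product, and the unused factors of $\mathcal{N}_{max}$ are disposed of by a fixed linear contraction (the paper's functional $\mathcal{T}\bullet = \langle\langle \openone\vert \bullet \vert \openone/2^n\rangle\rangle$) after a permutation sorts invoked from uninvoked factors. The only detail you leave implicit is that this contraction must evaluate to exactly $1$ on each discarded factor---which holds precisely because the noise maps are trace-preserving---since otherwise $F(\lambda,\mu,\mathcal{N}_{max})$ would still be linear but would not reproduce $\rho_f(\lambda,\mu)$.
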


\subsection{Maximum spacetime noise}

\begin{definition}
{\bf Maximum operation number.} Given a randomized dynamic circuit $(w,C)$, a value of the random variable $
\lambda$ and an outcome $\mu$, the number of the operation $\alpha\in\mathbb{O}$ in the circuit is 
\begin{eqnarray}
N_\alpha(\lambda,\mu) = \sum_{j=1}^N\delta_{\alpha,c_j(\lambda,\mu_{<j})},
\end{eqnarray}
where $\delta$ takes one if $\alpha = c_j(\lambda,\mu_{<j})$ or zero otherwise. The maximum number of the operation $\alpha$ is 
\begin{eqnarray}
N_\alpha^{max} = \max_{\lambda,\mu} \left\{N_\alpha(\lambda,\mu)\right\}.
\end{eqnarray}
\end{definition}

\begin{definition}
{\bf Maximum spacetime noise.} The maximum spacetime noise map of temporally uncorrelated errors is defined as 
\begin{eqnarray}
\mathcal{N}_{max} = \bigotimes_{\alpha\in\mathbb{O}} \mathcal{N}(\alpha)^{\otimes N_\alpha^{max}}.
\end{eqnarray}
\label{def:MSN}
\end{definition}

\subsection{Proof of Theorem~\ref{the:MSN}}
\label{app:NRDC_proof}

We use $\vert\rho\rangle\rangle$ to denote the vectorization of the matrix $\rho$, and we use $\bar{\bar{\mathcal{M}}}$ to denote the matrix acting on vectors $\vert\rho\rangle\rangle$ that represents the map $\mathcal{M}$. Then, the final state reads 
\begin{eqnarray}
\vert\rho_f(\lambda,\mu)\rangle\rangle &=& \bar{\bar{\mathcal{M}}}(C,\lambda,\mu) \vert\rho_i\rangle\rangle,
\end{eqnarray}
where 
\begin{eqnarray}
\bar{\bar{\mathcal{M}}}(C,\lambda,\mu) &=& \bar{\bar{\mathcal{N}}}^L(c_N(\lambda,\mu_{<N}))\bar{\bar{\mathcal{M}}}^I(c_N(\lambda,\mu_{<N}),\mu_N)\bar{\bar{\mathcal{N}}}^R(c_N(\lambda,\mu_{<N}))\times\cdots \notag \\
&&\times \bar{\bar{\mathcal{N}}}^L(c_1(\lambda,\mu_{<1}))\bar{\bar{\mathcal{M}}}(c_1(\lambda,\mu_{<1}),\mu_1)\bar{\bar{\mathcal{N}}}^R(c_1(\lambda,\mu_{<1})),
\end{eqnarray}
and we have taken into account the noise. 

Now, we consider a product of two matrices, $\bar{\bar{\mathcal{M}}}_2\bar{\bar{\mathcal{M}}}_1$. Let $\{\vert j\rangle\rangle\}$ be an orthonormal basis. The product can be rewritten as 
\begin{eqnarray}
\bar{\bar{\mathcal{M}}}_2\bar{\bar{\mathcal{M}}}_1 &=& \sum_{a,b,c} \vert a\rangle\rangle\langle\langle a\vert \bar{\bar{\mathcal{M}}}_2 \vert b\rangle\rangle\langle\langle b\vert \bar{\bar{\mathcal{M}}}_1 \vert c\rangle\rangle\langle\langle c\vert \notag \\
&=& \sum_{a,b,c} \left(\langle\langle a\vert\otimes\langle\langle b\vert \bar{\bar{\mathcal{M}}}_2\otimes \bar{\bar{\mathcal{M}}}_1 \vert b\rangle\rangle\otimes\vert c\rangle\rangle\right) \vert a\rangle\rangle\langle\langle c\vert.
\end{eqnarray}
Similarly, for a product of $K$ matrices, we have 
\begin{eqnarray}
\bar{\bar{\mathcal{M}}}_K\cdots\bar{\bar{\mathcal{M}}}_2\bar{\bar{\mathcal{M}}}_1 &=& \sum_{b_1,b_2,\ldots,b_{K+1}} \vert b_{K+1}\rangle\rangle\langle\langle b_{K+1}\vert \bar{\bar{\mathcal{M}}}_K \vert b_k\rangle\rangle\cdots\langle\langle b_3\vert \bar{\bar{\mathcal{M}}}_2 \vert b_2\rangle\rangle\langle\langle b_2\vert \bar{\bar{\mathcal{M}}}_1 \vert b_1\rangle\rangle\langle\langle b_1\vert \notag \\
&=& \sum_{b_1,b_2,\ldots,b_{K+1}} \left(\langle\langle b_{K+1}\vert\otimes\cdots\otimes\langle\langle b_3\vert\otimes\langle\langle b_2\vert \bar{\bar{\mathcal{M}}}_K\otimes\cdots\otimes\bar{\bar{\mathcal{M}}}_2\otimes\bar{\bar{\mathcal{M}}}_1 \right. \notag \\
&&\times\left.\vert b_K\rangle\rangle\otimes\cdots\otimes\vert b_2\rangle\rangle\otimes\vert b_1\rangle\rangle\right) \vert b_{K+1}\rangle\rangle\langle\langle b_1\vert.
\end{eqnarray}

\begin{definition}
{\bf Linear maps on completely positive maps.} We define two linear maps
\begin{eqnarray}
\mathcal{P}_K\bullet &=& \sum_{b_1,b_2,\ldots,b_{K+1}} \left(\langle\langle b_{K+1}\vert\otimes\cdots\otimes\langle\langle b_3\vert\otimes\langle\langle b_2\vert \bullet \vert b_K\rangle\rangle\otimes\cdots\otimes\vert b_2\rangle\rangle\otimes\vert b_1\rangle\rangle\right) \vert b_{K+1}\rangle\rangle\langle\langle b_1\vert, \\
\mathcal{T}\bullet &=& \langle\langle \openone\vert \bullet \vert \openone/2^n\rangle\rangle.
\end{eqnarray} 
\label{def:maps}
\end{definition}

Using the map $\mathcal{P}_K$, we can rewrite the map $\bar{\bar{\mathcal{M}}}(C,\lambda,\mu)$ as 
\begin{eqnarray}
\bar{\bar{\mathcal{M}}}(C,\lambda,\mu) &=& \mathcal{P}_K \bar{\bar{\mathcal{M}}}_{st}(C,\lambda,\mu),
\end{eqnarray}
where 
\begin{eqnarray}
\bar{\bar{\mathcal{M}}}_{st}(C,\lambda,\mu) &=& \bar{\bar{\mathcal{N}}}^L(c_N(\lambda,\mu_{<N}))\otimes\bar{\bar{\mathcal{M}}}^I(c_N(\lambda,\mu_{<N}),\mu_N)\otimes\bar{\bar{\mathcal{N}}}^R(c_N(\lambda,\mu_{<N}))\otimes\cdots \notag \\
&&\otimes \bar{\bar{\mathcal{N}}}^L(c_1(\lambda,\mu_{<1}))\otimes\bar{\bar{\mathcal{M}}}(c_1(\lambda,\mu_{<1}),\mu_1)\otimes\bar{\bar{\mathcal{N}}}^R(c_1(\lambda,\mu_{<1})).
\end{eqnarray}

\begin{proposition}
Given a randomized dynamic circuit $(w,C)$, a value of the random variable $
\lambda$ and an outcome $\mu$, there exists a permutation operation $\mathcal{S}(\lambda,\mu)$ such that 
\begin{eqnarray}
&&\mathcal{S}(\lambda,\mu)\bar{\bar{\mathcal{N}}}_{max}\otimes\bar{\bar{[\openone]}}^{\otimes\left(\sum_{\alpha\in\mathbb{O}}N_\alpha(\lambda,\mu)\right)}\otimes\bar{\bar{\mathcal{M}}}_{st}^{I}(C,\lambda,\mu) \notag \\
&=& \left[\bigotimes_{\alpha\in\mathbb{O}} \bar{\bar{\mathcal{N}}}(\alpha)^{\otimes\left(N_\alpha^{max}-N_\alpha(\lambda,\mu)\right)}\right]\otimes\bar{\bar{\mathcal{M}}}_{st}(C,\lambda,\mu),
\end{eqnarray}
where 
\begin{eqnarray}
\bar{\bar{\mathcal{M}}}_{st}^I(C,\lambda,\mu) &=& \bar{\bar{\mathcal{M}}}^I(c_N(\lambda,\mu_{<N}),\mu_N)\otimes\cdots\otimes\bar{\bar{\mathcal{M}}}^I(c_1(\lambda,\mu_{<1}),\mu_1).
\end{eqnarray}
Here, $\mathcal{S}(\lambda,\mu)$ is a linear map that permutes matrices in the Kronecker product, and $[\openone]$ denotes the trivial noise maps. 
\end{proposition}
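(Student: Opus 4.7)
The plan is to treat both sides of the claimed identity as tensor products over some index set of elementary factors (noise maps $\bar{\bar{\mathcal{N}}}(\alpha)$, identity maps $\bar{\bar{[\openone]}}$, and ideal operation maps $\bar{\bar{\mathcal{M}}}^I(\alpha,\mu)$). If I can show the two sides carry the same multiset of tensor factors, then by definition they differ only by a permutation of tensor factors, which is exactly the content of $\mathcal{S}(\lambda,\mu)$.

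First I would do the counting. Writing $N=\sum_{\alpha\in\mathbb{O}}N_\alpha(\lambda,\mu)$, the LHS has $\sum_\alpha N_\alpha^{max}$ factors from $\bar{\bar{\mathcal{N}}}_{max}$, plus $N$ identity factors from $\bar{\bar{[\openone]}}^{\otimes N}$, plus $N$ ideal-operation factors from $\bar{\bar{\mathcal{M}}}_{st}^I(C,\lambda,\mu)$, for a total of $\sum_\alpha N_\alpha^{max}+2N$ factors. On the RHS, $\bigotimes_\alpha\bar{\bar{\mathcal{N}}}(\alpha)^{\otimes(N_\alpha^{max}-N_\alpha)}$ contributes $\sum_\alpha N_\alpha^{max}-N$ factors, while $\bar{\bar{\mathcal{M}}}_{st}(C,\lambda,\mu)$ contributes $3N$ factors (one $\mathcal{N}^L$, one $\mathcal{M}^I$, one $\mathcal{N}^R$ per circuit operation), also totaling $\sum_\alpha N_\alpha^{max}+2N$.

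Second I would match the multisets. By the convention introduced around Eq.~\eqref{eq:noisy_operation}, for each operation $c_j(\lambda,\mu_{<j})$ exactly one of $\mathcal{N}^L(c_j)$, $\mathcal{N}^R(c_j)$ equals $\mathcal{N}(c_j)$ and the other equals $[\openone]$. Hence inside $\bar{\bar{\mathcal{M}}}_{st}(C,\lambda,\mu)$ there are precisely $N_\alpha(\lambda,\mu)$ occurrences of $\bar{\bar{\mathcal{N}}}(\alpha)$ for each $\alpha$, together with $N$ identity factors and the $N$ ideal factors $\bar{\bar{\mathcal{M}}}^I(c_j(\lambda,\mu_{<j}),\mu_j)$. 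Adding the $N_\alpha^{max}-N_\alpha(\lambda,\mu)$ additional copies of $\bar{\bar{\mathcal{N}}}(\alpha)$ sitting outside the $\bar{\bar{\mathcal{M}}}_{st}$ factor, the RHS contains exactly $N_\alpha^{max}$ copies of $\bar{\bar{\mathcal{N}}}(\alpha)$ for each $\alpha$, $N$ identities, and the same $N$ ideal-operation factors. This is precisely the multiset appearing on the LHS.

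Third, once the multisets of tensor factors coincide, there is a bijection from the positions on the LHS to the positions on the RHS mapping each factor to an equal one. The induced permutation of tensor slots defines a linear map $\mathcal{S}(\lambda,\mu)$ which establishes the claimed equality; since the correspondence between circuit operations $c_j(\lambda,\mu_{<j})$ and labels $\alpha\in\mathbb{O}$ depends on $\lambda$ and $\mu$, so does $\mathcal{S}$. The main (and only) difficulty is bookkeeping: one has to keep careful track of where each $\bar{\bar{\mathcal{N}}}(\alpha)$ in $\bar{\bar{\mathcal{N}}}_{max}$ is sent, distinguishing those that pair with a $c_j=\alpha$ inside $\bar{\bar{\mathcal{M}}}_{st}$ from the surplus $N_\alpha^{max}-N_\alpha$ copies that remain outside, and similarly routing each LHS identity factor to the identity slot in $\bar{\bar{\mathcal{M}}}_{st}$ corresponding to whichever of $\mathcal{N}^L(c_j),\mathcal{N}^R(c_j)$ is trivial. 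All of this is routine combinatorial accounting once the multiset equality above is in hand.
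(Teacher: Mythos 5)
Your proposal is correct: the factor count and multiset matching are exactly right (each $\bar{\bar{\mathcal{M}}}(c_j,\mu_j)$ contributes one copy of $\bar{\bar{\mathcal{N}}}(c_j)$, one trivial map, and one ideal map, so both sides carry $N_\alpha^{max}$ copies of each $\bar{\bar{\mathcal{N}}}(\alpha)$, $\sum_\alpha N_\alpha(\lambda,\mu)$ identities, and the same ideal factors), and the permutation $\mathcal{S}(\lambda,\mu)$ is the induced reordering of Kronecker slots. The paper states this proposition without an explicit proof, treating it as the same bookkeeping observation, so your argument is precisely the intended one made explicit.
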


Since $\mathcal{N}(\alpha)$ are trace-preserving completely positive maps, 
\begin{eqnarray}
\mathcal{T}\bar{\bar{\mathcal{N}}}(\alpha) = \Tr[\openone\mathcal{N}(\alpha)\openone/2^n] = 1.
\end{eqnarray}
We can further rewrite the map $\bar{\bar{\mathcal{M}}}(C,\lambda,\mu)$ as 
\begin{eqnarray}
\bar{\bar{\mathcal{M}}}(C,\lambda,\mu) = \mathcal{T}^{\otimes\left(\sum_{\alpha\in\mathbb{O}}N_\alpha^{max}-N_\alpha(\lambda,\mu)\right)}\otimes\mathcal{P}_K \left[\mathcal{S}(\lambda,\mu)\bar{\bar{\mathcal{N}}}_{max}\otimes\bar{\bar{[\openone]}}^{\otimes\left(\sum_{\alpha\in\mathbb{O}}N_\alpha(\lambda,\mu)\right)}\otimes\bar{\bar{\mathcal{M}}}_{st}^{I}(C,\lambda,\mu)\right].
\end{eqnarray}
Therefore, the function reads 
\begin{eqnarray}
F(\lambda,\mu,\bullet) = \mathrm{vec}^{-1}\mathcal{T}^{\otimes\left(\sum_{\alpha\in\mathbb{O}}N_\alpha^{max}-N_\alpha(\lambda,\mu)\right)}\otimes\mathcal{P}_K \left[\mathcal{S}(\lambda,\mu)\bullet\otimes\bar{\bar{[\openone]}}^{\otimes\left(\sum_{\alpha\in\mathbb{O}}N_\alpha(\lambda,\mu)\right)}\otimes\bar{\bar{\mathcal{M}}}_{st}^{I}(C,\lambda,\mu)\right]\vert\rho_i\rangle\rangle,
\label{eq:F_function}
\end{eqnarray}
where $\mathrm{vec}^{-1}$ is the inverse map of vectorization. 

\section{Observable evaluation}
\label{app:observable}

On a quantum computer, we can evaluate observables that are functions of measurement outcomes. For example, suppose we want to evaluate the observable $A = Z_1 + X_2Y_3$. Accordingly, we need to measure the three operators $Z_1$, $X_2$ and $Y_3$ in the quantum circuit, yielding measurement outcomes $\mu_1$, $\mu_2$ and $\mu_3$, respectively. Then, the expected value of the observable reads $\langle A\rangle = \mathrm{E}[\mu_1+\mu_2\mu_3]$. We can find that $A$ corresponds to a function of measurement outcomes. In certain cases, the function depends on a random variable. For example, suppose we want to evaluate the observable $A = Z_1 + X_1Y_2/3$. We can introduce a random variable: $\lambda = 0$ ($\lambda = 1$) with a probability of $1/2$; when $\lambda = 0$, we measure $Z_1$ in the quantum circuit, yielding the measurement outcome $\mu = (\mu_1)$; and when $\lambda = 1$, we measure $X_1$ and $Y_2$ in the quantum circuit, yielding measurement outcomes $\mu = (\mu_1,\mu_2)$. Then, the expected value of the observable reads $\langle A\rangle = \mathrm{E}[a(\lambda,\mu)]$, where 
\begin{eqnarray}
a(\lambda,\mu) = \begin{cases}
2\mu_1, & \text{if $\lambda = 0$;} \\
2\mu_1\mu_2/3, & \text{if $\lambda = 1$.}
\end{cases}
\end{eqnarray}

In general, the expected value of an observable is in the form 
\begin{eqnarray}
\langle A\rangle = \mathrm{E}[a(\lambda,\mu)],
\end{eqnarray}
where $a$ is the function corresponding to the observable $A$, $\lambda$ is a random variable, and $\mu$ denotes measurement outcomes in the circuit. Given a randomized dynamic circuit $(w,C)$, the probability of the outcome $\mu$ is $\Tr\rho_f(\lambda,\mu)$ given an value of $\lambda$, then 
\begin{eqnarray}
\langle A\rangle = \sum_{\lambda,\mu}w(\lambda)a(\lambda,\mu)\Tr\rho_f(\lambda,\mu) = \sum_{\lambda,\mu}w(\lambda)a(\lambda,\mu)\Tr F(\lambda,\mu,\mathcal{N}_{max}).
\end{eqnarray}
The ideal expected value is 
\begin{eqnarray}
\langle A\rangle_I = \sum_{\lambda,\mu}w(\lambda)a(\lambda,\mu)\Tr F(\lambda,\mu,[\openone^{max}]),
\end{eqnarray}
where $[\openone^{max}] \equiv [\openone]^{\otimes N_{total}}$ denotes that the maximum spacetime noise map is identity, i.e. the circuit is error-free. Here, $N_{total} = \sum_{\alpha\in\mathbb{O}} N^{max}_\alpha$. 

\section{Protocol with an ideal error sampler}
\label{app:ideal_protocol}

\begin{figure}[htbp]
\centering
\includegraphics[width=0.5\linewidth]{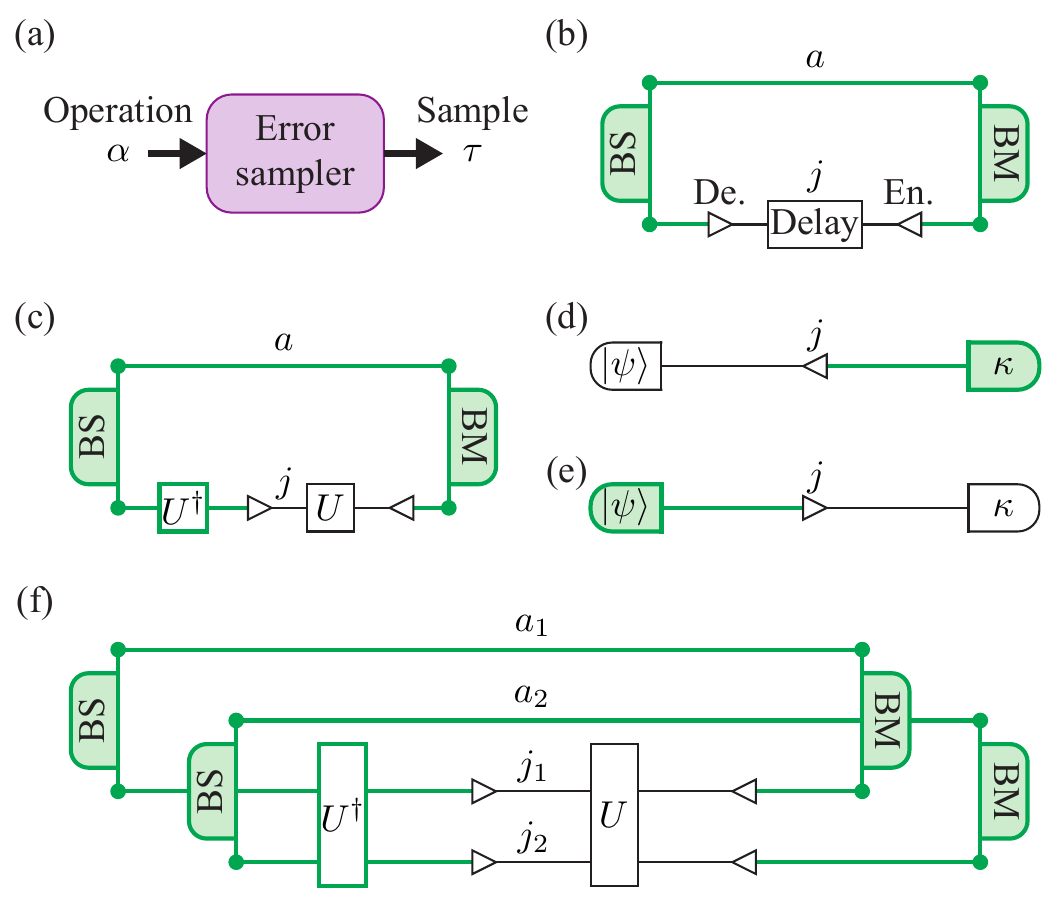}
\caption{
(a) Error Sampler. It takes an operation $\alpha\in \mathbb{O}$ as the input, and the output is a Pauli operator $\tau\in\mathbb{P}_n$. (b-e) Circuits for realizing the error sampler. Thin black lines represent logical qubits, while thick green lines represent super qubits. In (b), (c) and (f), each pair of qubits is prepared in the Bell state  (BS) $(\ket{0}\otimes\ket{0}+\ket{1}\otimes\ket{1})/\sqrt{2}$ and measured in the basis $\{X\otimes X,Z\otimes Z\}$, called Bell measurement (BS). Triangles denote encoding (En.) and decoding (De.) operations that enable transitions between logical qubits and super qubits. $U$ is the unitary operator of a quantum gate, and $\ket{\psi}$ is the eigenstate of the operator $\kappa$ with the eigenvalue $+1$. In the circuits, Pauli twirling is applied to encoding, decoding and non-Pauli stabilizer operations $\alpha\in\mathbb{O}_{S}$ on logical qubits. The output Pauli operator $\tau$ is determined from the measurement outcomes according to Tables \ref{tab:SqErrorsampler}, \ref{tab:TqErrorsampler} and \ref{tab:SpErrorsampler}. 
}
\label{fig:error_sampler}
\end{figure}

In this protocol, we assume that 
\begin{itemize}
\item[i)] Pauli gates are error-free; 
\item[ii)] The noise is Pauli in all non-Pauli operations; 
\item[iii)] There is an ideal error sampler from which we can faithfully sample the Pauli errors of the target noise map. 
\end{itemize}

The error sampler takes an operation $\alpha\in \mathbb{O}$ as the input, and the output is a Pauli operator $\tau\in\mathbb{P}_n$; see Fig.~\ref{fig:error_sampler}(a). Each operation $\alpha$ has an associated Pauli error model $\mathcal{N}(\alpha)$. Ideally, the error sampler produces the Pauli operator $\tau$ according to the distribution described by $\mathcal{N}(\alpha)$: The operator is $\tau$ with a probability of $1-p(\alpha)$ if $\tau = \openone$ or a probability of $\epsilon(\alpha,\tau)$ if $\tau \neq \openone$. The ideal error sampler is described by Algorithm~\ref{alg:ideal_error_sampler} (see Sec.~\ref{app:codes}). To implement the protocol with the ideal error sampler, run the program as described in Algorithm~\ref{alg:spacetime_noise_inversion} as follows: 
\begin{center}
\textsc{InverseNoiseExpansion}($w$,$C$,$a$,$M_P$,$M$,\textsc{Ideal\_ErrorSampler},\textsc{Ideal\_ProcessedErrorSampler}).
\end{center}
Note that we have discriminated the sampling from the noise map $\mathcal{N}(\alpha)$ of an operation $\alpha$, maximum spacetime noise map $\mathcal{N}_{max}$, and the $k$-th order non-trivial noise map $\mathcal{E}^k$ using the terms error sampler, spacetime error sampler, and processed error sampler, respectively. 

\section{Protocol with a practical error sampler}
\label{app:practical_protocol}

In this section, we present a protocol based on a practical error sampler, which further captures various imperfections such as non-Pauli errors and additional errors in error sampler circuits, and thus also presented in Fig.~2 in the main text. The assumptions for the practical error sampler are follows: 
\begin{itemize}
\item[i)] Pauli gates are error-free; 
\item[ii)] The operation set satisfies $\mathbb{O}\subseteq \mathbb{O}_{P}\cup\mathbb{O}_{S}\cup\mathbb{O}_{non-S}$, i.e. only single-qubit and two-qubit operations are used; 
\item[iii)] The circuit is twirled (see Definition~\ref{def:twirled_circuit}); 
\item[iv)] Through encoding and decoding operations, one can transfer states between logical qubits and super qubits, which are qubits encoded in quantum error correction codes with a larger distance than logical qubits such that errors in super qubits are negligible. 
\end{itemize}
The generalization to operation sets with multi-qubit Clifford gates is straightforward. The practical error sampler is realized using quantum circuits in figures (b-f); for a detailed pseudocode, see Algorithm~\ref{alg:practical_error_sampler} in Sec.~\ref{app:codes}. To implement the protocol with the practical error sampler, run the program as described in Algorithm~\ref{alg:spacetime_noise_inversion} as follows: 
\begin{center}
\textsc{InverseNoiseExpansion} ($w$,$C$,$a$,$M_P$,$M$,\textsc{Practical\_ErrorSampler},\textsc{Practical\_ProcessedErrorSampler}).
\end{center}

\begin{table*}[!htbp]
    \setlength{\tabcolsep}{12pt}
    \renewcommand{\arraystretch}{1.5}
    \centering
    \caption{\label{tab:SqErrorsampler} {\bf Single-qubit gate error sampling.}}
    \begin{tabular}{c|ccc}
        \hline \hline
        Operation & Measurement basis & Outcome $\mu$ & Pauli error $\tau$ \\
        \hline
        $[U]$ & $(X_aX_j,Z_aZ_j)$ & $(+1,+1)$ & $\openone$ \\
        & & $(+1,-1)$ & $X_j$ \\
        & & $(-1,+1)$ & $Z_j$ \\
        & & $(-1,-1)$ & $Y_j$ \\
        \hline
    \end{tabular}
\end{table*}

If the operation is a single-qubit gate, i.e. $\alpha\in \mathbb{O}_1$, we can sample the error with the circuit in Fig.~\ref{fig:error_sampler} (c). In the figure, $U$ denotes the unitary operator of the gate. Let $\rho_{Bell} = \ketbra{\phi_{Bell}}{\phi_{Bell}}$ be the Bell state, and $\ket{\phi_{Bell}} = (\ket{0}_a\otimes\ket{0}_j+\ket{1}_a\otimes\ket{1}_j)/\sqrt{2}$. Here, $a$ denotes the ancilla logical qubit, and the single-qubit gate acts on the $j$th logical qubit. The final state of the circuit is $\mathcal{N}(\alpha)\rho_{Bell}$. By measuring the two stabilizer operators of the Bell state $X_aX_j$ and $Z_aZ_j$, we can read out the Pauli error on the Bell state, i.e. the error caused by $\mathcal{N}(\alpha)$. The correspondence between measurement outcomes and the error is illustrated in Table~\ref{tab:SqErrorsampler}. 

\begin{table*}[!htbp]
    \setlength{\tabcolsep}{12pt}
    \renewcommand{\arraystretch}{1.5}
    \centering
    \caption{\label{tab:TqErrorsampler} {\bf Two-qubit gate error sampling.}}
    \begin{tabular}{c|ccc}
        \hline \hline
        Operation & Measurement basis & Outcome $\mu$ & Pauli error $\tau$ \\
        \hline
        $[U]$ & $(X_{a_1}X_{j_1},Z_{a_1}Z_{j_1},X_{a_2}X_{j_2},Z_{a_2}Z_{j_2})$ & $(+1,+1,+1,+1)$ & $\openone$ \\
        & & $(+1,+1,+1,-1)$ & $X_{j_2}$ \\
        & & $(+1,+1,-1,+1)$ & $Z_{j_2}$ \\
        & & $(+1,+1,-1,-1)$ & $Y_{j_2}$ \\
        & & $(+1,-1,+1,+1)$ & $X_{j_1}$ \\
        & & $(+1,-1,+1,-1)$ & $X_{j_1}X_{j_2}$ \\
        & & $(+1,-1,-1,+1)$ & $X_{j_1}Z_{j_2}$ \\
        & & $(+1,-1,-1,-1)$ & $X_{j_1}Y_{j_2}$ \\
        & & $(-1,+1,+1,+1)$ & $Z_{j_1}$ \\
        & & $(-1,+1,+1,-1)$ & $Z_{j_1}X_{j_2}$ \\
        & & $(-1,+1,-1,+1)$ & $Z_{j_1}Z_{j_2}$ \\
        & & $(-1,+1,-1,-1)$ & $Z_{j_1}Y_{j_2}$ \\
        & & $(-1,-1,+1,+1)$ & $Y_{j_1}$ \\
        & & $(-1,-1,+1,-1)$ & $Y_{j_1}X_{j_2}$ \\
        & & $(-1,-1,-1,+1)$ & $Y_{j_1}Z_{j_2}$ \\
        & & $(-1,-1,-1,-1)$ & $Y_{j_1}Y_{j_2}$ \\
        \hline
    \end{tabular}
\end{table*}

Similarly, if the operation is a two-qubit gate, i.e. $\alpha\in \mathbb{O}_2$, we can sample the error with the circuit in Fig.~\ref{fig:error_sampler} (d). This time, we need two copies of the Bell state on four logical qubits: $a_1$ and $a_2$ are ancilla logical qubits, and the two-qubit gate acts on logical qubits $j_1$ and $j_2$. By measuring operators $X_{a_1}X_{j_1}$, $Z_{a_1}Z_{j_1}$, $X_{a_2}X_{j_2}$ and $Z_{a_2}Z_{j_2}$, we can read out the Pauli error. The correspondence between measurement outcomes and the error is illustrated in Table~\ref{tab:TqErrorsampler}. 

\begin{table*}[!htbp]
    \setlength{\tabcolsep}{12pt}
    \renewcommand{\arraystretch}{1.5}
    \centering
    \caption{\label{tab:SpErrorsampler} {\bf State preparation and measurement error sampling.}}
    \begin{tabular}{cc|cc}
        \hline \hline
        State $\ket{\psi}$ &  Measurement basis $\kappa$ & Outcome $\mu$ & Pauli error $\tau$ \\
        \hline
        $\ket{0}$& $Z$ &  $+1$ & $\openone$ \\
        & & $-1$ & $X$ \\
        \hline
        $\ket{+}$& $X$ &  $+1$ & $\openone$ \\
        & & $-1$ & $Z$ \\
        \hline
        $(\ket{0}+e^{i\pi/4}\ket{1})/\sqrt{2}$ & $(X+Y)/\sqrt{2}$ & $+1$ & $\openone$ \\
        & & $-1$ & $Z$ \\
        \hline
        $(\sqrt{2+\sqrt{2}}\ket{0}+\sqrt{2-\sqrt{2}}\ket{1})/2$ & $(X+Z)/\sqrt{2}$ & $+1$ & $\openone$ \\
        & & $-1$ & $Y$ \\
        \hline
        $(\sqrt{2+\sqrt{2}}\ket{0}+i\sqrt{2-\sqrt{2}}\ket{1})/2$ & $(Y+Z)/\sqrt{2}$ & $+1$ & $\openone$ \\
        & & $-1$ & $X$ \\
        \hline
    \end{tabular}
\end{table*}

If the operation is a state preparation, i.e. $\alpha\in \mathbb{O}_{S.P.}$, we can sample the error with the circuit in Fig.~\ref{fig:error_sampler} (e). The operation $\alpha$ prepares the single-qubit state $\ket{\psi}$. Under the assumption $\mathbb{O}\subseteq \mathbb{O}_{P}\cup \mathbb{O}_{S}\cup \mathbb{O}_{non-S}$, there always exists a Hermitian operator $\tau$ such that i) $\tau$ has two non-degenerate eigenstates with eigenvalues $\pm 1$, respectively; ii) $\ket{\psi}$ is the eigenstate of the operator $\kappa$ with the eigenvalue $+1$; and iii) there exists a Pauli operator $E$ satisfying $\{E,\kappa\} = 0$. Then, if the measurement outcome is $+1$ ($-1$), the output error is $\tau = \openone$ ($\tau = E$). The correspondence between measurement outcomes and the error is illustrated in Table~\ref{tab:SpErrorsampler}. We note that there are other eigenstates of Hermitian Clifford operators that are not listed in Table~\ref{tab:SpErrorsampler}, which have similar outcome-error correspondences. 

Measurement errors are sampled in a way similar to state preparation errors. If the operation is a measurement, i.e. $\alpha\in \mathbb{O}_{Mea.}$, we can sample the error with the circuit in Fig.~\ref{fig:error_sampler} (f). The operation measures the operator $\kappa$. Under the assumption $\mathbb{O}\subseteq \mathbb{O}_{P}\cup \mathbb{O}_{S}\cup \mathbb{O}_{non-S}$, we prepare the logical qubit in the eigenstate $\ket{\psi}$. If the measurement outcome is $+1$ ($-1$), the output error is $\tau = \openone$ ($\tau = E$). The correspondence between measurement outcomes and the error is illustrated in Table~\ref{tab:SpErrorsampler}. 

In addition to sampling errors in operations, we also need to sample errors in encoding and decoding operations in order to eliminate their impacts when they are not negligible. We can sample encoding and decoding errors with the circuit in Fig.~\ref{fig:error_sampler} (b). The correspondence between measurement outcomes and the error is the same as single-qubit gates, which is illustrated in Table~\ref{tab:SqErrorsampler}. 

\section{Pseudo codes}
\label{app:codes}

\subsection{Generic algorithms for spacetime noise inversion}

\begin{algorithm}[H]
\caption{Spacetime error sampler.}
\label{alg:spacetime_error_sampler}
\begin{algorithmic}[1]
\Statex
\Function{SpacetimeErrorSampler}{$O$,$N$,\textsc{ErrorSampler}}
\Comment $O = (\alpha_1,\alpha_2,\ldots,\alpha_{K})$ is a list of operations, $\alpha_i\in\mathbb{O}-\mathbb{O}_{P}$ for all $i = 1,2,\ldots,K$, $N = (N_1,N_2,\ldots,N_K)$ is a list of operation numbers, and \textsc{ErrorSampler} is an algorithm that outputs errors associated with computational operations (see Algorithms \ref{alg:ideal_error_sampler} and \ref{alg:practical_error_sampler}). 
\State Create a list of variables $\sigma = \left(\sigma(1,\bullet),\sigma(2,\bullet),\ldots,\sigma(K,\bullet)\right)$, and $\sigma(i,\bullet) = \left(\sigma(i,1),\sigma(i,2),\dots \sigma(i,N_i)\right)$ for all $i = 1,2,\ldots,K$. 
\For{$i=1$ to $K$}
    \For{$j=1$ to $N_i$}
        \State $\sigma(i,j) \gets$ \Call{ErrorSampler}{$\alpha_i$}
    \EndFor
\EndFor
\State \Return $\sigma$
\EndFunction
\end{algorithmic}
\end{algorithm}

\begin{algorithm}[H]
\caption{Estimator of the total error rate.}
\label{alg:estimator_of_total_error_rate}
\begin{algorithmic}[1]
\Statex
\Function{TotalErrorRate}{$O$,$N$,$M_P$,\textsc{ErrorSampler}}
\Comment $M_P$ is the number of spacetime error instances for estimating the total error rate. 
\State $M_{error} \gets 0$
\For{$l=1$ to $M_P$}
    \State $\sigma \gets$ \Call{SpacetimeErrorSampler}{$O$,$N$,\textsc{ErrorSampler}}
    \If{there exists $(i,j)$ such that $\sigma(i,j) \neq \openone$}
    \State $M_{error} \gets M_{error}+1$
    \EndIf
\EndFor
\State $\hat{P} = \frac{M_{error}}{M_{P}}$ 
\State \Return $\hat{P}$
\EndFunction
\end{algorithmic}
\end{algorithm}

\begin{algorithm}[H]
\caption{Quantum operation.}
\label{alg:Quantum_operation}
\begin{algorithmic}[1]
\Statex
\Function{QuantumOperation}{$\alpha$}
\Comment $\alpha\in \mathbb{O}$
\State Apply the operation $\alpha$. 
\If{$\alpha\in \mathbb{O}_{Mea.}$}
    \State Record the measurement outcome $\mu$. 
\Else
    \State $\mu \gets 0$ 
\EndIf
\State \Return $\mu$
\EndFunction
\end{algorithmic}
\end{algorithm}

\begin{algorithm}[H]
\caption{Circuit implementation.}
\label{alg:circuit_implementation}
\begin{algorithmic}[1]
\Statex
\Function{CircuitImplementation}{$\lambda$,$C$,$O$,$\sigma$}
\Comment $O = (\alpha_1,\alpha_2,\ldots,\alpha_{K})$ is a list of operations in $\mathbb{O}-\mathbb{O}_{P}$, $K = \vert\mathbb{O}-\mathbb{O}_{P}\vert$, $\sigma = \left(\sigma(1,\bullet),\sigma(2,\bullet),\ldots,\sigma(K,\bullet)\right)$, and each element $\sigma(i,\bullet) = \left(\sigma(i,1),\sigma(i,2),\dots \sigma(i,N_{\alpha_i}^{max})\right)$ is a list of Pauli operators. 

\State $\mathrm{Flag} \gets (1,1,\ldots,1)$
\Comment $\mathrm{Flag}$ is a $K$-tuple. 
\State $\mu_{<1}\gets ()$

\For{$l=1$ to $N$}
\Comment $N$ is the number of operations in $C$. 
    \If{$c_l(\lambda,\mu_{<l})\in\mathbb{O}_{P}$}
    \Comment The operation is a Pauli gate. 
        \State $\mu_l \gets$ \Call{QuantumOperation}{$c_l(\lambda,\mu_{<l})$}
    \Else
        \State $i \gets$ the label of $c_l(\lambda,\mu_{<l})$ in $O$
        \If{$c_l(\lambda,\mu_{<l})\in\mathbb{O}_{Mea.}$}
        \Comment The operation is a measurement. 
            \State $\tau \gets \sigma\left(i,\mathrm{Flag}(i)\right)$
            \State \Call{QuantumOperation}{$\tau$}
            \State $\mu_l \gets$ \Call{QuantumOperation}{$c_l(\lambda,\mu_{<l})$}
            \Comment Apply the error first then the operation. 
        \Else
        \Comment The operation is a gate or state preparation operation but not a Pauli gate. 
            \State $\tau \gets \sigma\left(i,\mathrm{Flag}(i)\right)$
            \State $\mu_l \gets$ \Call{QuantumOperation}{$c_l(\lambda,\mu_{<l})$}
            \State \Call{QuantumOperation}{$\tau$}
            \Comment Apply the operation first then the error. 
        \EndIf
        \State $\mathrm{Flag}(i) \gets \mathrm{Flag}(i) + 1$
    \EndIf
    \State $\mu_{<l+1} \gets (\mu_{<l},\mu_l)$
\EndFor

\State \Return $\mu \gets \mu_{<N+1}$
\EndFunction
\end{algorithmic}
\end{algorithm}

\begin{algorithm}[H]
\caption{Spacetime noise inversion.}
\label{alg:spacetime_noise_inversion}
\begin{algorithmic}[1]
\Statex
\Function{InverseNoiseExpansion}{$w$,$C$,$a$,$M_P$,$M$,\textsc{ErrorSampler},\textsc{ProcessedErrorSampler}}
\Comment $(w,C)$ represents a randomized dynamic circuit, $a$ is a function corresponding to the observable, $(M_P,M)$ are number of spacetime error instances, and \textsc{ProcessedErrorSampler} is an algorithm that outputs processed spacetime errors used in spacetime noise inversion (see Algorithms \ref{alg:ideal_processed_error_sampler} and \ref{alg:practical_processed_error_sampler}). 

\State Generate a list $O = (\alpha_1,\alpha_2,\ldots,\alpha_K)$ of operations in $\mathbb{O}-\mathbb{O}_{P}$, where $K = \vert\mathbb{O}-\mathbb{O}_{P}\vert$. 
\State $N \gets \left(N_{\alpha_1}^{max},N_{\alpha_2}^{max},\ldots,N_{\alpha_K}^{max}\right)$
\Comment Evaluate maximum operation numbers. 

\State $\hat{P} \gets$ \Call{TotalErrorRate}{$O$,$N$,$M_P$,\textsc{ErrorSampler}}
\Comment Estimate the total error rate. 

\State $\gamma \gets \frac{1}{1-2\hat{P}}$
\Comment Evaluate the normalization factor. 

\For{$l = 1$ to $M$}
    \State Generate a non-negative integer $k$ according to the probability mass function $P(k) = (1-2\hat{P})\frac{\hat{P}^k}{(1-\hat{P})^{k+1}}$. 
    \State $\eta_l \gets (-1)^k$
    \State $\sigma \gets$ \Call{ProcessedSampler}{$O$,$N$,$k$}
    \State Generate $\lambda \sim w$.     
    \State $\mu_l \gets$ \Call{CircuitImplementation}{$\lambda$,$C$,$O$,$\sigma$}
\EndFor
\State $\hat{A}_{QEM} \gets \frac{\gamma}{M}\sum_{l=1}^M \eta_l a(\lambda,\mu_l)$
\State \Return $\hat{A}_{QEM}$
\EndFunction
\end{algorithmic}
\end{algorithm}

\subsection{Algorithms for the ideal error sampler}

\begin{algorithm}[H]
\caption{Ideal error sampler.}
\label{alg:ideal_error_sampler}
\begin{algorithmic}[1]
\Statex
\Function{Ideal\_ErrorSampler}{$\alpha$}
\Comment $\alpha\in \mathbb{O}-\mathbb{O}_{P}$. 
\State Generate $\tau \sim \mathcal{N}(\alpha)$
\State \Return $\tau$
\EndFunction
\end{algorithmic}
\end{algorithm}

\begin{algorithm}[H]
\caption{Ideal processed error sampler.}
\label{alg:ideal_processed_error_sampler}
\begin{algorithmic}[1]
\Statex
\Function{Ideal\_ProcessedErrorSampler}{$O$,$N$,$k$}
\If{k=0}
    \State Create a list of variables $\sigma = \left(\sigma(1,\bullet),\sigma(2,\bullet),\ldots,\sigma(K,\bullet)\right)$, and $\sigma(i,\bullet) = \left(\sigma(i,1),\sigma(i,2),\dots \sigma(i,N_i)\right)$ for all $i = 1,2,\ldots,K$; take $\sigma(i,j) = \openone$ for all entries. 
\Else
    \For{$l=1$ to $k$}
        \State Flag $\gets 0$
        \While{Flag is $0$}
            \State $\sigma_l \gets$ \Call{SpacetimeErrorSampler}{$O$,$N$,\textsc{Ideal\_ErrorSampler}}
            \If{there exists $(i,j)$ such that $\sigma(i,j) \neq \openone$}
                \State Flag $\gets 1$
            \EndIf
        \EndWhile
    \EndFor
    \State $\sigma \gets \prod_{j=1}^{k}\sigma_l$
    \Comment{Entry-wise product.}
\EndIf
\State \Return $\sigma$
\EndFunction
\end{algorithmic}
\end{algorithm}

\subsection{Algorithms for the practical error sampler}

\begin{algorithm}[H]
\caption{Practical error sampler.}
\label{alg:practical_error_sampler}
\begin{algorithmic}[1]
\Statex
\Function{Practical\_ErrorSampler}{$\alpha$}
\Comment $\alpha\in(\mathbb{O}-\mathbb{O}_{P})\cup\{\tilde{Q}(\beta):\beta\in\mathbb{O}-\mathbb{O}_{P}\}$; when $\alpha = \tilde{Q}(\beta)$, an instance of the error associated with encoding and decoding operations on the qubit subset $\tilde{Q}(\beta)$ is generated. 
\If{$\alpha\in \mathbb{O}-\mathbb{O}_{P}$}
    \If{$\alpha\in \mathbb{O}_1$}
        \State Run the circuit in Fig.~\ref{fig:error_sampler} (c) for one shot to generate the measurement outcome $\mu$. 
        \Comment{$Q(\alpha) = \{j\}$}
        \State Map $\mu$ to a Pauli operator $\tau$ according to Table~\ref{tab:SqErrorsampler}. 
    \ElsIf{$\alpha\in \mathbb{O}_2$}
        \State Run the circuit in Fig.~\ref{fig:error_sampler} (f) for one shot to generate the measurement outcome $\mu$. 
        \Comment{$Q(\alpha) = \{j_1,j_2\}$}
        \State Map $\mu$ to a Pauli operator $\tau$ according to Table~\ref{tab:TqErrorsampler}. 
    \ElsIf{$\alpha\in \mathbb{O}_{S.P.}$}
        \State Run the circuit in Fig.~\ref{fig:error_sampler} (d) for one shot to generate the measurement outcome $\mu$. 
        \Comment{$Q(\alpha) = \{j\}$}
        \State Map $\mu$ to a Pauli operator $\tau$ according to Table~\ref{tab:SpErrorsampler}. 
    \ElsIf{$\alpha\in \mathbb{O}_{Mea.}$}
        \State Run the circuit in Fig.~\ref{fig:error_sampler} (e) for one shot to generate the measurement outcome $\mu$. 
        \Comment{$Q(\alpha) = \{j\}$}
        \State Map $\mu$ to a Pauli operator $\tau$ according to Table~\ref{tab:SpErrorsampler}. 
    \EndIf
    \State $\tau' \gets \tau$
    \For{$j\in \tilde{Q}(\alpha)-Q(\alpha)$}
        \State Run the circuit in Fig.~\ref{fig:error_sampler} (b) for one shot to generate the measurement outcome $\mu$, taking the delay time as the same as the operation. 
        \State Map $\mu$ to a Pauli operator $\tau$ according to Table~\ref{tab:SqErrorsampler}. 
        \State $\tau' \gets \tau\tau'$
    \EndFor
    \State $\tau \gets \tau'$
    \Comment{The circuits in lines 3–19 are executed in parallel such that the operation $\alpha$ and delay operations are performed simultaneously.}
\ElsIf{$\alpha\in \{\tilde{Q}(\beta):\beta\in\mathbb{O}-\mathbb{O}_{P}\}$}
    \State $\tau' \gets \openone$
    \For{$j\in \alpha$}
        \State Run the circuit in Fig.~\ref{fig:error_sampler} (b) for one shot to generate the measurement outcome $\mu$, taking the delay time as zero in the circuit. 
        \State Map $\mu$ to a Pauli operator $\tau$ according to Table~\ref{tab:SqErrorsampler}. 
        \State $\tau' \gets \tau\tau'$
    \EndFor
    \State $\tau \gets \tau'$
    \Comment{The circuits in lines 22–26 are executed in parallel.}
\EndIf
\State \Return $\tau$
\EndFunction
\end{algorithmic}
\end{algorithm}

\begin{algorithm}[H]
\caption{Practical processed error sampler.}
\label{alg:practical_processed_error_sampler}
\begin{algorithmic}[1]
\Statex
\Function{Practical\_ProcessedErrorSampler}{$O$,$N$,$k$}
\Comment $O = (\alpha_1,\alpha_2,\ldots,\alpha_{K})$ is a list of operations, $\alpha_i\in\mathbb{O}-\mathbb{O}_{P}$ for all $i = 1,2,\ldots,K$ and $N = (N_1,N_2,\ldots,N_K)$ is a list of operation numbers. 
\State Create a list of variables $\sigma = \left(\sigma(1,\bullet),\sigma(2,\bullet),\ldots,\sigma(K,\bullet)\right)$, and $\sigma(i,\bullet) = \left(\sigma(i,1),\sigma(i,2),\dots \sigma(i,N_i)\right)$ for all $i = 1,2,\ldots,K$. 

\For{$i=1$ to $K$}
    \If{$\alpha_i\notin \mathbb{O}_{Mea.}$}
    \Comment We only sample encoding and decoding errors for gate and state preparation operations. 
        \For{$j=1$ to $N_i$}
            \State $\sigma(i,j) \gets$ \Call{Practical\_ErrorSampler}{$\tilde{Q}(\alpha_i)$}
        \EndFor
    \Else
        \For{$j=1$ to $N_i$}
            \State $\sigma(i,j) \gets \openone$
        \EndFor
    \EndIf
\EndFor

\If{$k\neq 0$}
    \For{$l=1$ to $k$}
        \State Flag $\gets 0$
        \While{Flag is $0$}
            \State $\sigma_l \gets$ \Call{SpacetimeErrorSampler}{$O$,$N$,\textsc{Practical\_ErrorSampler}}
            \If{there exists $(i,j)$ such that $\sigma(i,j) \neq \openone$}
                \State Flag $\gets 1$
            \EndIf
        \EndWhile
    \EndFor
    \State $\sigma \gets \sigma\prod_{j=1}^{k_{S}}\sigma_l$
    \Comment{Entry-wise product.}
\EndIf
\State \Return $\sigma$
\EndFunction
\end{algorithmic}
\end{algorithm}

\section{Error and sampling cost with an ideal error sampler---Proof of Theorem~1}
\label{app:error_cost_ideal}

\begin{theorem}
{\bf Formal version of Theorem~1 in the main text.} Suppose the following conditions hold: 
\begin{itemize}
\item Errors are temporally uncorrelated, i.e.~final states of noisy circuits satisfy Eq.~(\ref{eq:noisy_circuit}), in which the map of each noisy operation is given by Eq.~(\ref{eq:noisy_operation}); 
\item Pauli gates are error-free; 
\item Errors in all other operations are Pauli (Definition~\ref{def:Pauli_noise}). 
\end{itemize}
Assume the existence of an ideal error sampler and apply the protocol in Sec.~\ref{app:ideal_protocol} to an arbitrary randomized dynamic circuit (Definition~\ref{def:RDC}) generated by an arbitrary operation set $\mathbb{O}$. 
Define the following notations: 
\begin{itemize}
\item Let $P$ denote the total error rate of the maximum spacetime noise, and let $\hat{P}$ be its estimate; 
\item Let $M$ be the number of circuit runs used to evaluate $\hat{A}_{QEM}$, and let $M_P$ be the number of spacetime error instances used to estimate the total error rate; 
\item Let $M_{es}$ be the total number of spacetime error instances generated from the error sampler. 
\end{itemize}
For the error-mitigated estimator, the bias has the upper bound 
\begin{eqnarray}
\left\vert\mathrm{E}\left[\hat{A}_{QEM}\vert\hat{P}\right] - \mean{A}_I\right\vert &\leq & \norm{a}_{L^\infty} \left\vert\frac{1}{1-2\hat{P}} - \frac{1}{1-2P}\right\vert.~
\label{eq:bias}
\end{eqnarray}
When $P<1/2$, for any positive numbers $\delta$ and $f$, the error $\vert\hat{A}_{QEM}-\mean{A}_I\vert$ is smaller than $\delta\norm{a}_{L^\infty}$ with a probability at least $1-f$ under conditions 
\begin{eqnarray}
M_P\geq \frac{1}{2t_P^2}\ln\frac{4}{f}
\label{eq:MP_bound}
\end{eqnarray}
and 
\begin{eqnarray}
M\geq \frac{8}{\delta^2(1-2P-2t_P)^2}\ln\frac{4}{f},
\label{eq:M_bound}
\end{eqnarray}
where 
\begin{eqnarray}
t_P = \min\left\{\frac{\delta(1-2P)^2}{4+2\delta(1-2P)},\frac{1}{2}-P\right\}.
\end{eqnarray}
The expected value and variance of the sampling cost are given by: 
\begin{eqnarray}
\mathrm{E}[M_{es}\vert\hat{P}] &=& M_P + \frac{M\hat{P}}{P(1-2\hat{P})}, \label{eq:costE} \\
\mathrm{Var}(M_{es}\vert\hat{P}) &=& \frac{M\hat{P}(2-P-3\hat{P}+2P\hat{P})}{P^2(1-2\hat{P})^2}. \label{eq:costV}
\end{eqnarray}
\label{the:error_cost_ideal}
\end{theorem}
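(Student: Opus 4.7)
The plan is to first identify the channel that the protocol realizes in expectation, then bound its deviation from the identity, and finally control fluctuations and count samples. By Theorem~\ref{the:MSN}, the unnormalized final state of any randomized dynamic circuit is a linear function $F(\lambda,\mu,\cdot)$ of the spacetime noise map. Algorithm~\ref{alg:spacetime_noise_inversion} draws an order $k$ from $\pi(k)=(1-2\hat{P})\hat{P}^k/(1-\hat{P})^{k+1}$, inserts $k$ i.i.d.\ non-trivial Pauli errors from $\mathcal{E}$, and weights the sample by $\gamma(-1)^k$ with $\gamma=1/(1-2\hat{P})$. Linearity of $F$, i.i.d.\ averaging of the inserted errors, and summing the resulting geometric series in $\mathcal{E}$ give
\begin{equation}
\E[\hat{A}_{QEM}\mid\hat{P}]=\Phi\!\left(\mathcal{N}_{\hat{P}}^{-1}\mathcal{N}_{max}\right),
\end{equation}
where $\mathcal{N}_{\hat{P}}\equiv(1-\hat{P})[\openone]+\hat{P}\mathcal{E}$, the series $\mathcal{N}_{\hat{P}}^{-1}=\sum_{k\ge 0}(-\hat{P})^k(1-\hat{P})^{-(k+1)}\mathcal{E}^k$ converges absolutely for $\hat{P}<1/2$, and $\Phi(\mathcal{M})\equiv\sum_{\lambda,\mu}w(\lambda)a(\lambda,\mu)\Tr[F(\lambda,\mu,\mathcal{M})]$ is a linear functional with $\Phi([\openone^{max}])=\mean{A}_I$ and $|\Phi(\mathcal{K})|\le\norm{a}_{L^\infty}$ for every trace-preserving CP map $\mathcal{K}$.

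For the bias, $\mathcal{N}_{max}-\mathcal{N}_{\hat{P}}=(P-\hat{P})(\mathcal{E}-[\openone])$ gives
\[
\mathcal{N}_{\hat{P}}^{-1}\mathcal{N}_{max}-[\openone]=(P-\hat{P})\,\mathcal{N}_{\hat{P}}^{-1}(\mathcal{E}-[\openone]).
\]
Expanding the geometric series and collecting powers of $\mathcal{E}$ writes $\mathcal{N}_{\hat{P}}^{-1}(\mathcal{E}-[\openone])$ as a signed mixture of the CPTP maps $\mathcal{E}^k$ whose absolute-coefficient sum equals $2/(1-2\hat{P})$. The triangle inequality together with the bound $|\Phi(\mathcal{E}^k)|\le\norm{a}_{L^\infty}$ therefore gives $|\E[\hat{A}_{QEM}\mid\hat{P}]-\mean{A}_I|\le 2|P-\hat{P}|\norm{a}_{L^\infty}/(1-2\hat{P})$, and since $1-2P\le 1$ this is bounded above by $\norm{a}_{L^\infty}|1/(1-2\hat{P})-1/(1-2P)|$, which is \eqref{eq:bias}.

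The concentration and cost claims are then routine. Split the budget $\delta\norm{a}_{L^\infty}$ equally between bias and statistical fluctuation. Since $\hat{P}$ is the empirical mean of $M_P$ i.i.d.\ Bernoulli trials with parameter $P$, Hoeffding yields \eqref{eq:MP_bound}; the choice $t_P=\delta(1-2P)^2/(4+2\delta(1-2P))$ is the largest value for which substituting $\hat{P}=P\pm t_P$ into the bias bound returns exactly $\delta\norm{a}_{L^\infty}/2$ (an elementary algebraic check). The estimator $\hat{A}_{QEM}$ is an average of $M$ i.i.d.\ samples of magnitude at most $\gamma\norm{a}_{L^\infty}\le\norm{a}_{L^\infty}/(1-2P-2t_P)$, so Hoeffding plus a union bound yields \eqref{eq:M_bound} and the overall probability statement. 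For the cost, decompose $M_{es}=M_P+\sum_{l=1}^M M_l^{es}$; conditional on the drawn $k$, Algorithm~\ref{alg:ideal_processed_error_sampler} makes $M_l^{es}$ a sum of $k$ i.i.d.\ $\mathrm{Geom}(P)$ trials with mean $k/P$ and variance $k(1-P)/P^2$, while $\pi$ itself has mean $\hat{P}/(1-2\hat{P})$ and variance $\hat{P}(1-\hat{P})/(1-2\hat{P})^2$, so the towers of expectation and variance deliver \eqref{eq:costE} and \eqref{eq:costV} after elementary algebra.

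The main obstacle is the bias step. A naive bound via $\norm{\mathcal{N}_{\hat{P}}^{-1}}_{\diamond}\norm{\mathcal{E}-[\openone]}_{\diamond}$ is too loose, so the argument must reorganize the Monte Carlo expansion carefully to produce the tight constant $2/(1-2\hat{P})$ and to exploit that each term $\mathcal{E}^k$ is CPTP, bounding $|\Phi(\mathcal{E}^k)|$ by $\norm{a}_{L^\infty}$ directly. The remaining work reduces to textbook Hoeffding bounds and geometric-series moment calculations.
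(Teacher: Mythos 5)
Your proposal is correct and follows essentially the same route as the paper's proof: Theorem~\ref{the:MSN} plus linearity to identify the realized map as $\hat{\mathcal{N}}^{-1}\mathcal{N}_{max}$, a geometric-series expansion in $\mathcal{E}^k$ with each $\vert\Phi(\mathcal{E}^k)\vert\leq\norm{a}_{L^\infty}$ (the paper phrases this via the $L^1_{Pauli}$-norm), Hoeffding bounds for $\hat{P}$ and $\hat{A}_{QEM}$ with a union bound, and negative-binomial moment calculations for $M_{es}$. The only difference is cosmetic: you factor the bias as $(P-\hat{P})\,\mathcal{N}_{\hat{P}}^{-1}(\mathcal{E}-[\openone])$ and obtain the slightly tighter intermediate constant $2\vert P-\hat{P}\vert/(1-2\hat{P})$ before relaxing to the stated bound, whereas the paper expands $\hat{\mathcal{N}}^{-1}-\mathcal{N}^{-1}$ directly and sums the absolute coefficient differences to $\left\vert\frac{1}{1-2\hat{P}}-\frac{1}{1-2P}\right\vert$.
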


\subsection{Error in $\hat{P}$}

In the estimation of the maximum total error rate $P$, the bias is 
\begin{eqnarray}
\mathrm{E}[\hat{P}] - P = 0.
\end{eqnarray}
According to Hoeffding's inequality, when $M_P\geq \frac{1}{2t_P^2}\ln\frac{4}{f}$, 
\begin{eqnarray}
\mathrm{Pro}\left(\vert\hat{P}-P\vert \geq t_P\right) \leq 2\exp\left(-2t_p^2M_P\right) \leq \frac{f}{2}.
\end{eqnarray}

\subsection{Error in $\hat{A}_{QEM}$}

Since Pauli gates are error-free, the maximum spacetime noise map is in the form 
\begin{eqnarray}
\mathcal{N}_{max} = \mathcal{N}_{P}\otimes\mathcal{N}_{non-P},
\label{eq:NPNnonP}
\end{eqnarray}
where 
\begin{eqnarray}
\mathcal{N}_{P} = \bigotimes_{\alpha\in\mathbb{O}_{P}} \mathcal{N}(\alpha)^{\otimes N_\alpha^{max}} = [\openone_{P}] \equiv [\openone]^{\otimes N_{P}}
\end{eqnarray}
is an identity map, $N_{P} = \sum_{\alpha\in\mathbb{O}_{P}} N^{max}_\alpha$, and 
\begin{eqnarray}
\mathcal{N}_{non-P} = \bigotimes_{\alpha\in\mathbb{O}-\mathbb{O}_{P}} \mathcal{N}(\alpha)^{\otimes N_\alpha^{max}}.
\label{eq:NnonP}
\end{eqnarray}
Because the noise is Pauli, the noise map $\mathcal{N}_{non-P}$ is in the form 
\begin{eqnarray}
\mathcal{N}_{non-P} = \sum_{\tau\in \mathbb{P}_n^{\otimes N_{non-P}}} \epsilon(\tau)[\tau],
\end{eqnarray}
where $N_{non-P} = \sum_{\alpha\in\mathbb{O}-\mathbb{O}_{P}} N^{max}_\alpha$, and $\epsilon(\tau)$ is the rate of the Pauli error $[\tau]$. Similar to the noise map of an operation, we can rewrite the map as 
\begin{eqnarray}
\mathcal{N}_{non-P} = (1-P)[\openone_{non-P}] + P\mathcal{E}_{non-P},
\end{eqnarray}
where $[\openone_{non-P}] \equiv [\openone_{non-P}]^{\otimes N_{non-P}}$, 
\begin{eqnarray}
\mathcal{E}_{non-P} = \frac{1}{P}\sum_{\tau\in \mathbb{P}_n^{\otimes N_{non-P}}-\{\openone_{non-P}\}} \epsilon(\tau)[\tau]
\end{eqnarray}
is a map denoting spacetime errors, and $P = \sum_{\tau\in \mathbb{P}_n^{\otimes N_{non-P}}-\{\openone_{non-P}\}} \epsilon(\tau)$ is the maximum total error rate of the circuit. 

\begin{definition}
{\bf $L^1_{Pauli}$-norm.} For a Pauli noise map in the form 
\begin{eqnarray}
\mathcal{K} = \sum_{\tau\in \mathbb{P}_n} \nu(\tau)[\tau],
\end{eqnarray}
where $\nu(\tau)$ is a real-valued function, its $L^1_{Pauli}$-norm reads 
\begin{eqnarray}
\norm{\mathcal{K}}_{L^1_{Pauli}} = \sum_{\tau\in \mathbb{P}_n} \abs{\nu(\tau)}. 
\end{eqnarray}
\end{definition}

We note that the $L^1_{\text{Pauli}}$-norm is equivalent to the diamond norm~\cite{PhysRevA.85.042311}, which is defined by
\begin{eqnarray}
\|\mathcal{K}\|_\diamond = \max_{X \,:\, \|X\|_1 \leq 1} \left\| \left( \mathcal{K} \otimes \openone \right) X \right\|_1.
\end{eqnarray}

\begin{lemma}
$L^1_{Pauli}$-norm is submultiplicative. Let $\mathcal{K}_1$ and $\mathcal{K}_2$ be two Pauli noise maps, then 
\begin{eqnarray}
\norm{\mathcal{K}_1\mathcal{K}_2}_{L^1_{Pauli}} \leq \norm{\mathcal{K}_1}_{L^1_{Pauli}} \norm{\mathcal{K}_2}_{L^1_{Pauli}}. 
\end{eqnarray}
\end{lemma}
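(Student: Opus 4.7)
The plan is to exploit the two special structural features of Pauli maps: first, that the adjoint action $[\tau]\bullet = \tau\bullet\tau^\dagger$ is insensitive to global phases, and second, that $\mathbb{P}_n$ is closed under multiplication up to a global phase in $\{\pm 1, \pm i\}$. Together these imply $[\tau_1][\tau_2] = [\tau_1\tau_2]$, and $\tau_1\tau_2 = \omega_{\tau_1,\tau_2}\tau_3$ for a unique $\tau_3\in\mathbb{P}_n$ with phase $\omega_{\tau_1,\tau_2}$, so that $[\tau_1][\tau_2] = [\tau_3]$. This means the composition of two Pauli maps is again a Pauli map of the same form.

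Concretely, I would start by writing
\begin{equation}
\mathcal{K}_1\mathcal{K}_2 = \sum_{\tau_1,\tau_2\in\mathbb{P}_n}\nu_1(\tau_1)\nu_2(\tau_2)[\tau_1][\tau_2] = \sum_{\sigma\in\mathbb{P}_n}\mu(\sigma)[\sigma],
\end{equation}
where, collecting terms according to the (phase-reduced) product,
\begin{equation}
\mu(\sigma) = \sum_{\substack{\tau_1,\tau_2\in\mathbb{P}_n\\ \tau_1\tau_2\,\propto\,\sigma}}\nu_1(\tau_1)\nu_2(\tau_2).
\end{equation}
Here $\tau_1\tau_2\propto\sigma$ means equality up to a phase in $\{\pm 1,\pm i\}$, i.e., the unique representative of $\tau_1\tau_2$ in $\mathbb{P}_n$ is $\sigma$. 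By the triangle inequality applied to $\mu(\sigma)$ and then summing over $\sigma$,
\begin{equation}
\norm{\mathcal{K}_1\mathcal{K}_2}_{L^1_{Pauli}} = \sum_{\sigma}|\mu(\sigma)| \leq \sum_{\sigma}\sum_{\tau_1\tau_2\propto\sigma}|\nu_1(\tau_1)||\nu_2(\tau_2)|.
\end{equation}

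Finally, I would observe that for every ordered pair $(\tau_1,\tau_2)\in\mathbb{P}_n\times\mathbb{P}_n$ there is exactly one $\sigma\in\mathbb{P}_n$ with $\tau_1\tau_2\propto\sigma$, so the double sum over $\sigma$ and over pairs with $\tau_1\tau_2\propto\sigma$ is just the unconstrained sum over all pairs. Hence
\begin{equation}
\norm{\mathcal{K}_1\mathcal{K}_2}_{L^1_{Pauli}} \leq \sum_{\tau_1}|\nu_1(\tau_1)|\sum_{\tau_2}|\nu_2(\tau_2)| = \norm{\mathcal{K}_1}_{L^1_{Pauli}}\norm{\mathcal{K}_2}_{L^1_{Pauli}},
\end{equation}
which is the claim.

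The only subtle point is the phase accounting in the product of Pauli operators, since $\mathbb{P}_n$ is usually taken modulo global phases; once it is made clear that these phases are absorbed by the adjoint action $[\cdot]$ and hence do not affect the grouping, the argument reduces to the standard convolution-style triangle-inequality estimate familiar from submultiplicativity of $\ell^1$ norms of measures under convolution on the (projective) Pauli group.
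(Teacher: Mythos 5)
Your proof is correct and is essentially the same as the paper's: both expand $\mathcal{K}_1\mathcal{K}_2$ as a sum over products $[\tau_1][\tau_2]$, group terms by the resulting Pauli map (the paper uses a Kronecker delta $\delta_{[\tau],[\tau_1][\tau_2]}$ where you use the phase-equivalence $\tau_1\tau_2\propto\sigma$), and apply the triangle inequality followed by the observation that each ordered pair contributes to exactly one group. Your explicit handling of the global-phase bookkeeping is a point the paper leaves implicit in the $[\cdot]$ notation, but the argument is identical.
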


\begin{proof}
While this lemma can be easily shown using the equivalence between the $L^1_{Pauli}$-norm and the diamond norm, we instead provide a proof directly from the definition of the $L^1_{Pauli}$-norm.
Let $\nu_1(\tau)$ and $\nu_2(\tau)$ be coefficient functions of $\mathcal{K}_1$ and $\mathcal{K}_2$, respectively. Then, 
\begin{eqnarray}
\mathcal{K}_1\mathcal{K}_2 = \sum_{\tau,\tau_1,\tau_2\in \mathbb{P}_n} \delta_{[\tau],[\tau_1][\tau_2]}\nu_1(\tau_1)\nu_2(\tau_2)[\tau].
\end{eqnarray}
Its norm is 
\begin{eqnarray}
\norm{\mathcal{K}_1\mathcal{K}_2}_{L^1_{Pauli}} &=& \sum_{\tau\in \mathbb{P}_n}\left\vert\sum_{\tau_1,\tau_2\in \mathbb{P}_n}\delta_{[\tau],[\tau_1][\tau_2]}\nu_1(\tau_1)\nu_2(\tau_2)\right\vert \notag \\
&\leq & \sum_{\tau,\tau_1,\tau_2\in \mathbb{P}_n}\delta_{[\tau],[\tau_1][\tau_2]}\vert\nu_1(\tau_1)\vert\vert\nu_2(\tau_2)\vert = \norm{\mathcal{K}_1}_{L^1_{Pauli}} \norm{\mathcal{K}_2}_{L^1_{Pauli}}.
\end{eqnarray}
\end{proof}

With the ideal sampler, the expected value of $\hat{A}_{QEM}$, the output of Algorithm~\ref{alg:spacetime_noise_inversion}, is 
\begin{eqnarray}
\mathrm{E}\left[\hat{A}_{QEM}\vert\hat{P}\right] &=& \sum_{k = 0}^\infty (-1)^k \frac{\hat{P}^k}{(1-\hat{P})^{k+1}} \sum_{\lambda,\mu} w(\lambda)a(\lambda,\mu) \Tr F\left(\lambda,\mu,([\openone_{P}]\otimes\mathcal{E}_{non-P}^k)\mathcal{N}_{max}\right) \notag \\
&=& \sum_{\lambda,\mu} w(\lambda)a(\lambda,\mu) \Tr F\left(\lambda,\mu,[\openone_{P}]\otimes(\hat{\mathcal{N}}^{-1}\mathcal{N}_{non-P})\right),
\end{eqnarray}
where 
\begin{eqnarray}
\hat{\mathcal{N}} = (1-\hat{P})[\openone_{non-P}] + \hat{P}\mathcal{E}_{non-P}.
\end{eqnarray}
The ideal expected value of the observable is 
\begin{eqnarray}
\mean{A}_I &=& \sum_{\lambda,\mu} w(\lambda)a(\lambda,\mu) \Tr F\left(\lambda,\mu,[\openone_{P}]\otimes(\mathcal{N}_{non-P}^{-1}\mathcal{N}_{non-P})\right).
\end{eqnarray}
Therefore, the bias of $\hat{A}_{QEM}$ has the upper bound 
\begin{eqnarray}
\left\vert\mathrm{E}\left[\hat{A}_{QEM}\vert\hat{P}\right] - \mean{A}_I\right\vert &\leq & \norm{a}_{L^\infty} \norm{\hat{\mathcal{N}}^{-1}-\mathcal{N}_{non-P}^{-1}}_{L^1_{Pauli}},
\label{eq:bias1}
\end{eqnarray}
where $\norm{a}_{L^\infty} = \max_{\lambda,\mu}\abs{a(\lambda,\mu)}$. Here, we have used that for all Pauli operators $\tau\in \mathbb{P}_n^{\otimes N_{non-P}}$, the factor $\Tr F\left(\lambda,\mu,[\openone_{P}]\otimes([\tau]\mathcal{N}_{non-P})\right)$ (the trace of the output state when Pauli gates are inserted into the circuit according to $\tau$) is a normalized weight function of $\mu$; then, 
\begin{eqnarray}
\left\vert\sum_{\lambda,\mu} w(\lambda)a(\lambda,\mu) \Tr F\left(\lambda,\mu,[\openone_{P}]\otimes([\tau]\mathcal{N}_{non-P})\right)\right\vert \leq \norm{a}_{L^\infty},
\end{eqnarray}
which holds even if $\mathcal{N}_{non-P}$ is not Pauli noise. 

Using expansions, the difference between two inverse maps is 
\begin{eqnarray}
\hat{\mathcal{N}}^{-1}-\mathcal{N}_{non-P}^{-1} = \sum_{k = 0}^\infty (-1)^k \left[\frac{\hat{P}^k}{(1-\hat{P})^{k+1}} - \frac{P^k}{(1-P)^{k+1}}\right] \mathcal{E}_{non-P}^k.
\label{eq:inverse_diff}
\end{eqnarray}
Its $L^1_{Pauli}$-norm has the upper bound  
\begin{eqnarray}
\norm{\hat{\mathcal{N}}^{-1}-\mathcal{N}_{non-P}^{-1}}_{L^1_{Pauli}} &\leq & \sum_{k = 0}^\infty \left\vert\frac{\hat{P}^k}{(1-\hat{P})^{k+1}} - \frac{P^k}{(1-P)^{k+1}}\right\vert \notag \\
&=& \left\vert\frac{1}{1-2\hat{P}} - \frac{1}{1-2P}\right\vert.
\end{eqnarray}
Here, we have used that $\norm{\mathcal{E}_{non-P}}_{L^1_{Pauli}} = 1$. Substitute the upper bound of $\norm{\hat{\mathcal{N}}^{-1}-\mathcal{N}_{non-P}^{-1}}_{L^1_{Pauli}}$ into inequality (\ref{eq:bias1}), we obtain the following upper bound of the bias: 
\begin{eqnarray}
\left\vert\mathrm{E}\left[\hat{A}_{QEM}\vert\hat{P}\right] - \mean{A}_I\right\vert &\leq & \norm{a}_{L^\infty} \left\vert\frac{1}{1-2\hat{P}} - \frac{1}{1-2P}\right\vert.
\label{eq:bias2}
\end{eqnarray}

When $\vert\hat{P}-P\vert < t_P$, where 
\begin{eqnarray}
t_P = \min\left\{\frac{\delta(1-2P)^2}{4+2\delta(1-2P)},\frac{1}{2}-P\right\},
\end{eqnarray}
the bias upper bound becomes 
\begin{eqnarray}
\left\vert\mathrm{E}\left[\hat{A}_{QEM}\vert\hat{P}\right] - \mean{A}_I\right\vert &\leq & \norm{a}_{L^\infty} \frac{2\vert\hat{P}-P\vert}{(1-2P)(1-2P-2\vert\hat{P}-P\vert)} < \frac{\delta}{2}\norm{a}_{L^\infty}. 
\end{eqnarray}
According to Hoeffding's inequality, when $\vert\hat{P}-P\vert < t_P$ and $M\geq \frac{8}{\delta^2(1-2P-2t_P)^2}\ln\frac{4}{f}$, 
\begin{eqnarray}
\mathrm{Pro}\left(\vert\hat{A}_{QEM} - \mathrm{E}\left[\hat{A}_{QEM}\vert\hat{P}\right]\vert \geq \frac{\delta}{2}\norm{a}_{L^\infty}\right) \leq 2\exp\left(-\frac{\delta^2(1-2P-2t_P)^2M}{8}\right) \leq \frac{f}{2},
\end{eqnarray}
where the factor $1-2P-2t_P$ is due to the normalization factor $\gamma$ in the algorithm. Therefore, 
\begin{eqnarray}
\mathrm{Pro}\left(\vert\hat{A}_{QEM} - \mean{A}_I\vert \geq \delta\norm{a}_{L^\infty}\right) \leq f.
\end{eqnarray}

\subsection{Expected value and variance of the sampling cost}

To implement SNI, we have to generate a number of spacetime errors from the error sampler. In the following, we refer to $\openone_{non-P}$ as the trivial error and $\mathcal{E}_{non-P}$ as the non-trivial error. The probability of obtaining $k$ non-trivial spacetime errors in $m$ (trivial and non-trivial) spacetime errors (calling Algorithm~\ref{alg:spacetime_error_sampler} for $m$ times in lines from 5 to 10 in Algorithm~\ref{alg:ideal_processed_error_sampler}) is 
\begin{eqnarray}
\mathrm{Pro}(m\vert k) = {{m-1}\choose{k-1}} P^{k}(1-P)^{m-k},
\end{eqnarray}
according to the negative binomial distribution. Taking into account the distribution of $k$, the joint distribution is 
\begin{eqnarray}
\mathrm{Pro}(k,m) = \mathrm{Pro}(k) \mathrm{Pro}(m\vert k),
\end{eqnarray}
where 
\begin{eqnarray}
\mathrm{Pro}(k) = (1-2\hat{P})\frac{\hat{P}^k}{(1-\hat{P})^{k+1}}.
\end{eqnarray}
Then, the expected value and variance of $m$ are 
\begin{eqnarray}
\mathrm{E}[m\vert\hat{P}] = \sum_{k=0}^\infty \mathrm{Pro}(k) \frac{k}{P} = \frac{\hat{P}}{P(1-2\hat{P})}
\end{eqnarray}
and 
\begin{eqnarray}
\mathrm{Var}(m\vert\hat{P}) &=& \sum_{k=0}^\infty \mathrm{Pro}(k) \left[\frac{k(1-P)}{P^2} + \frac{k^2}{P^2}\right] - \mathrm{E}[m]^2 \notag \\
&=& \frac{(1-P)\hat{P}}{P^2(1-2\hat{P})} + \frac{\hat{P}}{P^2(1-2\hat{P})^2} - \left[\frac{\hat{P}}{P(1-2\hat{P})}\right]^2 \notag \\
&=& \frac{\hat{P}}{P^2(1-2\hat{P})}\left(2-P + \frac{\hat{P}}{1-2\hat{P}}\right),
\end{eqnarray}
respectively. Let $M_{es}$ be the total number of spacetime errors generated from the error sampler. It consists of the $M_P$ instances for evaluating the maximum total error rate and instances used for generating the $M$ circuit runs. Then, the expected value and variance of $M_{es}$ are 
\begin{eqnarray}
\mathrm{E}[M_{es}\vert\hat{P}] = M_P + M\mathrm{E}[m\vert\hat{P}] = M_P + M\frac{\hat{P}}{P(1-2\hat{P})}
\end{eqnarray}
and 
\begin{eqnarray}
\mathrm{Var}(M_{es}\vert\hat{P}) &=& M\mathrm{Var}(m\vert\hat{P}) = M\frac{\hat{P}}{P^2(1-2\hat{P})}\left(2-P + \frac{\hat{P}}{1-2\hat{P}}\right),
\end{eqnarray}
respectively. 

When $\vert\hat{P}-P\vert < t_P$, we have upper bounds 
\begin{eqnarray}
\mathrm{E}[M_{es}\vert\hat{P}] &\leq & M_P + \frac{M(P+t_P)}{P(1-2P-2t_P)}, \\
\mathrm{Var}(M_{es}\vert\hat{P}) &\leq & M\frac{P+t_P}{P^2(1-2P-2t_P)}\left(2-P + \frac{P+t_P}{1-2P-2t_P}\right).
\end{eqnarray}
According to the Chebyshev's inequality, the probability that $M_{es}$ exceeds $\mathrm{E}[M_{es}] + \kappa \sqrt{\mathrm{Var}(M_{es})}$ is 
\begin{eqnarray}
\mathrm{Pro}\left(M_{es}\geq \mathrm{E}[M_{es}\vert\hat{P}] + s \sqrt{\mathrm{Var}(M_{es}\vert\hat{P})}\right) \leq \frac{1}{s}.
\end{eqnarray}
Approximating the distribution of $M_{es}$ using the normal distribution, the probability becomes 
\begin{eqnarray}
\mathrm{Pro}\left(M_{es}\geq \mathrm{E}[M_{es}\vert\hat{P}] + s \sqrt{\mathrm{Var}(M_{es}\vert\hat{P})}\right) \lesssim e^{-\frac{s^2}{2}}.
\end{eqnarray}

\section{Error and sampling cost with a practical error sampler}
\label{app:error_cost_practical}

\begin{corollary}
Suppose the following conditions hold: 
\begin{itemize}
\item Errors are temporally uncorrelated, i.e.~final states of noisy circuits satisfy Eq.~(\ref{eq:noisy_circuit}), in which the map of each noisy operation is given by Eq.~(\ref{eq:noisy_operation}); 
\item Pauli gates are error-free; 
\item Super qubits are error-free; 
\end{itemize}
We remark that non-Pauli operations on logical qubits, as well as encoding and decoding operations, may introduce non-Pauli errors. 
Apply the practical-sampler protocol in Sec.~\ref{app:practical_protocol} to an arbitrary twirled circuit (Definition~\ref{def:twirled_circuit}) generated by an arbitrary operation set $\mathbb{O} \subseteq \mathbb{O}_{P} \cup \mathbb{O}_{S} \cup \mathbb{O}_{non-S}$. 
Define the following notations: 
\begin{itemize}
\item Let $P$ be the total error rate of the maximum spacetime noise after accounting for twirling and inserted encoding and decoding errors [defined by Eq.~(\ref{eq:N_eff_nonP})]; 
\item Other notations are the same as in Theorem~\ref{the:error_cost_ideal}. 
\end{itemize}
Then, bounds in Eqs. (\ref{eq:bias}), (\ref{eq:MP_bound}), and (\ref{eq:M_bound}) and cost estimators in Eqs.~(\ref{eq:costE}) and (\ref{eq:costV}) hold. 
\label{cor:error_cost_practical}
\end{corollary}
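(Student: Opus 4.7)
My plan is to reduce the practical-sampler setting to the ideal-sampler setting of Theorem~\ref{the:error_cost_ideal} by constructing an effective Pauli spacetime noise $\mathcal{N}_{\max}^{\text{eff}}$ whose inverse the practical protocol genuinely implements via Monte Carlo, and whose total rate is exactly the $P$ in the statement. Once this reduction is in place, the bias bound, the concentration estimates on $M_P$ and $M$, and the mean and variance of the sampling cost all follow verbatim from the proof of Theorem~\ref{the:error_cost_ideal}, since those arguments only used linearity of $F(\lambda,\mu,\bullet)$ (Theorem~\ref{the:MSN}), the $L^1_{Pauli}$-norm bound on the inverse expansion, Hoeffding on $\hat{P}$ and $\hat{A}_{QEM}$, and the negative-binomial structure of the sampler calls, none of which depend on the physical mechanism generating Pauli samples.

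The first step is to absorb twirling into the effective noise. Using the framework of Appendix~\ref{app:NRDC}, every twirled operation in a twirled circuit can be written as an ideal operation composed with a Pauli noise map obtained by averaging the physical (generally non-Pauli) noise over the twirl group; for $\alpha \in \mathbb{O}_S$ this follows from standard Pauli twirling with error-free Pauli gates, and for $\alpha \in \mathbb{O}_{non-S}$ the expanded twirl list in Table~\ref{tab:twirled_operations} yields an \emph{indefinite} spacetime noise whose maximum spacetime representative (Definition~\ref{def:MSN}) is Pauli. The key observation is that Theorem~\ref{the:MSN} already expresses the unnormalized final state as a linear function of $\mathcal{N}_{\max}$, so replacing the physical non-Pauli noise by its twirled counterpart changes nothing about the right-hand side of Eq.~(\ref{eq:rhof_F_MSN}), and we obtain an effective Pauli $\mathcal{N}_{\max}^{\text{eff}}$.

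The second step is to show that the practical error sampler, combined with the noise-boosting of encoding/decoding errors illustrated in Fig.~\ref{fig:protocol}(b), faithfully samples the Pauli distribution of $\mathcal{N}_{\max}^{\text{eff}}$. Because super qubits are error-free by assumption, the Bell-state circuits in Fig.~\ref{fig:error_sampler} isolate precisely the noise of the benchmarked operation, and the inserted encoding/decoding errors bring the effective noise of each operation in the computation circuit into agreement with the sampler output. Formally, one checks that both distributions decompose into the same tensor product of per-operation Pauli channels, so the effective total rate of $\mathcal{N}_{\max}^{\text{eff}}$ defined through Eq.~(\ref{eq:N_eff_nonP}) equals the $P$ that $\hat{P}$ estimates; this is exactly the hypothesis needed to reuse the identity $\mathcal{N}_{\max}=[\openone_P]\otimes\mathcal{N}_{non-P}$ from Eq.~(\ref{eq:NPNnonP}) with $\mathcal{N}_{non-P}$ now Pauli.

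With these two reductions, the proof of Theorem~\ref{the:error_cost_ideal} applies unchanged: the Taylor-series identity Eq.~(\ref{eq:inverse_diff}), the submultiplicativity of the $L^1_{Pauli}$-norm, the Hoeffding bounds, and the negative-binomial computation of $\mathrm{E}[M_{es}\vert\hat{P}]$ and $\mathrm{Var}(M_{es}\vert\hat{P})$ go through with $\mathcal{N}_{non-P}$ replaced by its effective twirled, noise-boosted Pauli counterpart. The main obstacle is the second step for operations in $\mathbb{O}_{non-S}$: because twirling such operations requires non-Pauli stabilizer operations that are themselves noisy, the per-operation noise after twirling is a priori circuit-dependent rather than a fixed Pauli channel. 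Resolving this cleanly requires the maximum-spacetime-noise formalism to collect all twirl-related noise into a single Pauli tensor factor and verify that the practical sampler, which benchmarks each constituent operation independently, reproduces exactly that tensor factor. Once this bookkeeping is in place, no new probabilistic or norm inequalities are needed beyond those already established for the ideal case.
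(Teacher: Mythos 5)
Your high-level strategy---reduce to Theorem~\ref{the:error_cost_ideal} by constructing an effective Pauli spacetime noise whose total rate is the $P$ of the statement, then reuse the Hoeffding and negative-binomial arguments verbatim---is the same as the paper's, and your treatment of the encoding/decoding noise boosting and of stabilizer operations (exactly twirled by error-free Pauli gates) is essentially correct. The gap is in your second step for $\alpha\in\mathbb{O}_{non-S}$. You assert that ``both distributions decompose into the same tensor product of per-operation Pauli channels,'' and later propose to ``collect all twirl-related noise into a single Pauli tensor factor.'' This cannot be done: because the twirl of a non-stabilizer operation is implemented with noisy non-Pauli stabilizer operations, the effective noise $\mathcal{N}_{eff}(\alpha)$ of such an operation \emph{in the computation circuit} remains non-Pauli, and no bookkeeping within the maximum-spacetime-noise formalism turns it into a Pauli tensor factor. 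You correctly flag this as ``the main obstacle'' but then defer it as bookkeeping, when it is in fact the substantive content of the corollary.

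The paper's resolution consists of two separate arguments that your proposal does not supply. First, the practical sampler nonetheless outputs samples from the \emph{Pauli-twirled} channel $\overline{\mathcal{N}}_{eff}(\alpha)$ even when $\mathcal{N}_{eff}(\alpha)$ is non-Pauli: inserting any element $V\in\{\openone, X_aX_j, Y_aY_j, Z_aZ_j\}$ of the Bell state's stabilizer group before the Bell measurement leaves the outcome distribution invariant, so the measurement statistics coincide with those of the twirled channel. Hence the sampler's distribution (Pauli, with total rate $P$) and the circuit's noise (non-Pauli) genuinely differ, contrary to your claim that they agree as channels. Second, one must show that applying $\overline{\mathcal{N}}_{eff,non-P}^{-1}$ (built from the sampler) to the non-Pauli $\mathcal{N}_{eff,non-P}$ (present in the circuit) still reproduces $\mean{A}_I$. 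The paper does this by factoring $\mathcal{N}_{eff,non-P}=\mathcal{N}_{eff,S}\otimes\mathcal{N}_{eff,non-S}$: the stabilizer factor is exactly Pauli, so $\overline{\mathcal{N}}_{eff,S}^{-1}\mathcal{N}_{eff,S}=[\openone_{S}]$ cancels first, rendering the non-Pauli stabilizer operations effectively error-free; only then do they realize \emph{ideal} twirling of the non-stabilizer operations at the level of expectation values, so that $\mathcal{N}_{eff,non-S}$ may be replaced by $\overline{\mathcal{N}}_{eff,non-S}$ inside the sum over the twirl randomness. Without these two steps your reduction does not go through, so the proposal is incomplete at its central point.
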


Except for the bias of $\hat{A}_{QEM}$, all other analyses in Sec.~\ref{app:error_cost_ideal} can be directly applied to the practical-sampler protocol. To analyze the bias of $\hat{A}_{QEM}$ in the practical-sampler protocol, we have to solve two issues. First, the errors in operations could be non-Pauli under the practical-sampler assumption. Since we apply Pauli twirling to non-Pauli stabilizer operations $\mathbb{O}_{S}$, errors in these operations are effectively Pauli (Notice that Pauli gates are error-free). However, although we also apply twirling to non-stabilizer operations $\mathbb{O}_{non-S}$, the post-twirling non-stabilizer operations may have non-Pauli errors because stabilizer operations used in twirling may not be error-free. Second, in the practical error sampler, errors are generated by the computing operations and encoding/decoding operations. Therefore, we need to take into account errors in encoding/decoding operations. In what follows, we prove the corollary. 

\subsection{Effective noise in the error sampler}

Let $\mathcal{N}_{en/de,j}$ be the noise maps associated with encoding/decoding operations on the $j$th qubit, respectively. For a single-qubit gate $\alpha$, the effective operation determining the distribution of errors is $\mathcal{N}_{en,j}\mathcal{N}(\alpha)\mathcal{M}^I(\alpha,\mu)\mathcal{N}_{de,j}$; see Fig.~\ref{fig:error_sampler}(b). Accordingly, the noise map of the effective operation is 
\begin{eqnarray}
\mathcal{N}_{eff}(\alpha) = \mathcal{N}_{en,j}\mathcal{N}(\alpha)\mathcal{M}^I(\alpha,\mu)\mathcal{N}_{de,j}{\mathcal{M}^I(\alpha,\mu)}^{-1}.
\end{eqnarray}
Similarly, effective noise maps of two-qubit gates [Fig.~\ref{fig:error_sampler}(c)], state preparations [Fig.~\ref{fig:error_sampler}(d)] and measurements [Fig.~\ref{fig:error_sampler}(e)] are 
\begin{eqnarray}
\mathcal{N}_{eff}(\alpha) &=& \mathcal{N}_{en,j_1}\mathcal{N}_{en,j_2}\mathcal{N}(\alpha)\mathcal{M}^I(\alpha,\mu)\mathcal{N}_{de,j_1}\mathcal{N}_{de,j_2}{\mathcal{M}^I(\alpha,\mu)}^{-1},
\end{eqnarray}
\begin{eqnarray}
\mathcal{N}_{eff}(\alpha) &=& \mathcal{N}_{en,j}\mathcal{N}(\alpha)
\end{eqnarray}
and 
\begin{eqnarray}
\mathcal{N}_{eff}(\alpha) &=& \mathcal{N}(\alpha)\mathcal{N}_{en,j},
\end{eqnarray}
respectively. 

Since we apply Pauli twirling to encoding, decoding and non-Pauli stabilizer operations, the effective noise of a non-Pauli stabilizer operation is Pauli, i.e. $\mathcal{N}_{eff}(\alpha)$ is Pauli if $\alpha\in \mathbb{O}_{S}$. However, if $\alpha\in \mathbb{O}_{non-S}$, $\mathcal{N}_{eff}(\alpha)$ may not be Pauli. Even for such a non-stabilizer operation, corresponding errors are still generated according to a Pauli noise map. To see this, let us consider a non-Clifford single-qubit gate; see Fig.~\ref{fig:error_sampler}(b). If we randomly choose $V = \openone,X_aX_j,Y_aY_j,Z_aZ_j$ and apply it after the Bell state preparation and before the Bell measurement, we effectively realize the twirling $\mathcal{N}_{eff}(\alpha)$. Because $V$ is in the stabilizer group of the Bell state, applying $V$ does not change the distribution of measurement outcomes. Therefore, the error distribution follows the Pauli noise map $\overline{\mathcal{N}}_{eff}(\alpha)$, which is the post-twirling noise map of $\mathcal{N}_{eff}(\alpha)$. Similarly, error distributions of non-stabilizer state preparations and measurements also follow corresponding post-twirling noise maps. 

Overall, the maximum spacetime noise map corresponding to the practical error sampler is 
\begin{eqnarray}
\overline{\mathcal{N}}_{eff}^{max} = \bigotimes_{\alpha\in\mathbb{O}} \overline{\mathcal{N}}_{eff}(\alpha)^{\otimes N_\alpha^{max}},
\end{eqnarray}
which is Pauli. Notice that $\overline{\mathcal{N}}_{eff}(\alpha) = \mathcal{N}_{eff}(\alpha)$ if $\alpha$ is a non-Pauli stabilizer operation. Similar to the ideal-sampler protocol, we can rewrite $\overline{\mathcal{N}}_{eff}^{max}$ as 
\begin{eqnarray}
\overline{\mathcal{N}}_{eff}^{max} = [\openone_{P}]\otimes\overline{\mathcal{N}}_{eff,non-P},
\end{eqnarray}
where 
\begin{eqnarray}
\overline{\mathcal{N}}_{eff,non-P} &=& \bigotimes_{\alpha\in\mathbb{O}-\mathbb{O}_{P}} \overline{\mathcal{N}}_{eff}(\alpha)^{\otimes N_\alpha^{max}} \notag \\
&=& (1-P)[\openone_{non-P}] + P\mathcal{E}_{eff,non-P}.
\label{eq:N_eff_nonP}
\end{eqnarray}

\subsection{Effective noise in the circuit}

\begin{figure}[htbp]
\centering
\includegraphics[width=0.5\linewidth]{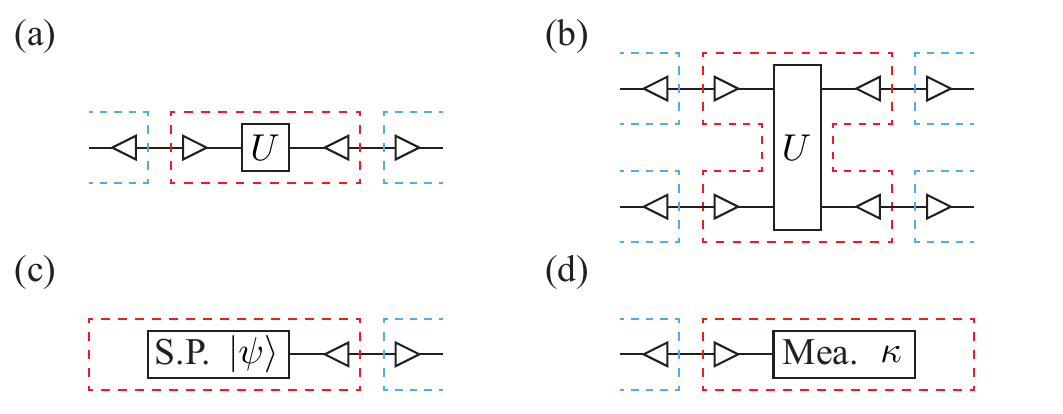}
\caption{
Effective operations after inserting encoding/decoding errors. Triangles represent encoding/decoding noise maps, respectively; see Fig.~\ref{fig:error_sampler}. 
}
\label{fig:en_de_errors}
\end{figure}

In the circuit, we insert Pauli errors after state preparations and non-Pauli gates according to errors generated from encoding/decoding operations (lines from 3 to 9 in Algorithm~\ref{alg:practical_processed_error_sampler}). The encoding/decoding errors are generated according to the noise map $\mathcal{N}_{en,j}\mathcal{N}_{de,j}$. Because $\mathcal{N}_{en,j}$ and $\mathcal{N}_{de,j}$ commute with each other (they are Pauli), the noise inserted into the circuit is effectively $\mathcal{N}_{de,j}\mathcal{N}_{en,j}$ as shown in Fig.~\ref{fig:en_de_errors}. 

For a single-qubit gate, the operation before the gate must be a state preparation or another gate; see Fig.~\ref{fig:en_de_errors}(a). If the previous operation is not a Pauli gate, the encoding/decoding noise $\mathcal{N}_{de,j}\mathcal{N}_{en,j}$ is inserted before the gate; the noise is also inserted after the gate. Then, we can find that the decoding noise before the gate, the gate itself and the encoding noise after the gate constitute the same effective operation in the error sampler, i.e. the effective noise of the gate is $\mathcal{N}_{eff}(\alpha)$. If the previous operation is a Pauli gate, we look for the closest non-Pauli operation before the gate, which contributes the decoding noise before the gate to the effective noise; notice that the decoding noise commutes with Pauli gates. It is similar for two-qubit gates, state preparations and measurements, as shown in Fig.~\ref{fig:en_de_errors}(b), (c) and (d), respectively. 

All encoding/decoding noises inserted into the circuit are taken into account in corresponding effective noises of operations. For a single-qubit gate, the encoding (decoding) noise before (after) the gate is taken into account in the operation before (after) the gate. It is similar for two-qubit gates, state preparations and measurements. 

After inserting encoding/decoding noise into the circuit, the maximum spacetime noise map becomes 
\begin{eqnarray}
\mathcal{N}_{eff}^{max} &=& \bigotimes_{\alpha\in\mathbb{O}} \mathcal{N}_{eff}(\alpha)^{\otimes N_\alpha^{max}} \notag \\
&=& [\openone_{P}]\otimes\mathcal{N}_{eff,non-P},
\end{eqnarray}
where 
\begin{eqnarray}
\mathcal{N}_{eff,non-P} = \bigotimes_{\alpha\in\mathbb{O}-\mathbb{O}_{P}} \mathcal{N}_{eff}(\alpha)^{\otimes N_\alpha^{max}}.
\end{eqnarray}

\subsection{Bias of $\hat{A}_{QEM}$ in the practical-sampler protocol}

With the practical sampler, the expected value of $\hat{A}_{QEM}$ is 
\begin{eqnarray}
\mathrm{E}\left[\hat{A}_{QEM}\vert\hat{P}\right] = \sum_{\lambda,\mu} w(\lambda)a(\lambda,\mu) \Tr F\left(\lambda,\mu,[\openone_{P}]\otimes(\hat{\mathcal{N}}_{eff}^{-1}\mathcal{N}_{eff,non-P})\right),
\end{eqnarray}
where 
\begin{eqnarray}
\hat{\mathcal{N}}_{eff}' = (1-\hat{P})[\openone_{non-P}] + \hat{P}\mathcal{E}_{eff,non-P}.
\end{eqnarray}
In Sec.~\ref{app:twirling}, we prove that 
\begin{eqnarray}
\mean{A}_I &=& \sum_{\lambda,\mu} w(\lambda)a(\lambda,\mu) \Tr F\left(\lambda,\mu,[\openone_{P}]\otimes(\overline{\mathcal{N}}_{eff,non-P}^{-1}\mathcal{N}_{eff,non-P})\right).
\end{eqnarray}
Therefore, the bias upper bound in Eq. (\ref{eq:bias2}) applies to the practical-sampler protocol. 

\subsection{Twirling on non-stabilizer operations}
\label{app:twirling}

We can rewrite the noise map as follows: 
\begin{eqnarray}
\mathcal{N}_{eff,non-P} = \mathcal{N}_{eff,S}\otimes\mathcal{N}_{eff,non-S},
\end{eqnarray}
where 
\begin{eqnarray}
\mathcal{N}_{eff,S/non-S} &=& \bigotimes_{\alpha\in\mathbb{O}\cap\mathbb{O}_{S/non-S}} \mathcal{N}_{eff}(\alpha)^{\otimes N_\alpha^{max}}.
\end{eqnarray}
Then, we have 
\begin{eqnarray}
&& \sum_{\lambda,\mu} w(\lambda)a(\lambda,\mu) \Tr F\left(\lambda,\mu,[\openone_{P}]\otimes(\overline{\mathcal{N}}_{eff,non-P}^{-1}\mathcal{N}_{eff,non-P})\right) \notag \\
&=& \sum_{\lambda,\mu} w(\lambda)a(\lambda,\mu) \Tr F\left(\lambda,\mu,[\openone_{P}]\otimes(\overline{\mathcal{N}}_{eff,S}^{-1}\mathcal{N}_{eff,S})\otimes(\overline{\mathcal{N}}_{eff,non-S}^{-1}\mathcal{N}_{eff,non-S})\right) \notag \\
&=& \sum_{\lambda,\mu} w(\lambda)a(\lambda,\mu) \Tr F\left(\lambda,\mu,[\openone_{P}]\otimes[\openone_{S}]\otimes(\overline{\mathcal{N}}_{eff,non-S}^{-1}\mathcal{N}_{eff,non-S})\right),
\end{eqnarray}
where $[\openone_{S}] \equiv [\openone_{S}]^{\otimes N_{S}}$ denotes that non-Pauli stabilizer operations are error-free, and $N_{S} = \sum_{\alpha\in\mathbb{O}\cap\mathbb{O}_{S}} N^{max}_\alpha$. Here, we have used that $\mathcal{N}_{eff,S} = \overline{\mathcal{N}}_{eff,S}$ (because Pauli twirling is applied on non-Pauli stabilizer operations). Therefore, Pauli gates and non-Pauli stabilizer operations are effective error-free. 

Because Pauli gates and non-Pauli stabilizer operations are effective error-free, they can realize ideal twirling on non-stabilizer operations, i.e. 
\begin{eqnarray}
&& \sum_{\lambda,\mu} w(\lambda)a(\lambda,\mu) \Tr F\left(\lambda,\mu,[\openone_{P}]\otimes[\openone_{S}]\otimes(\overline{\mathcal{N}}_{eff,non-S}^{-1}\mathcal{N}_{eff,non-S})\right) \notag \\
&=& \sum_{\lambda,\mu} w(\lambda)a(\lambda,\mu) \Tr F\left(\lambda,\mu,[\openone_{P}]\otimes[\openone_{S}]\otimes(\overline{\mathcal{N}}_{eff,non-S}^{-1}\overline{\mathcal{N}}_{eff,non-S})\right) \notag \\
&=& \sum_{\lambda,\mu} w(\lambda)a(\lambda,\mu) \Tr F\left(\lambda,\mu,[\openone_{P}]\otimes[\openone_{non-P}]\right) \notag \\
&=& \mean{A}_I.
\end{eqnarray}

\section{Rigorous results regarding temporally correlated errors}
\label{app:correlations}

The bounds and cost estimators in Theorem~\ref{the:error_cost_ideal} remain valid in the presence of temporally correlated errors, provided these are captured by the average noise channel $\mathcal{N}_{ave}$; see Corollary~\ref{cor:correlation}. Moreover, these results extend to more general noise models, assuming the availability of an ideal sampler for Pauli-type spacetime errors. 

However, the error bounds no longer hold rigorously in the presence of temporally correlated non-Pauli errors. This is because exact twirling of non-stabilizer operations requires stabilizer operations to be effectively error-free. When errors are temporally uncorrelated, errors in stabilizer and non-stabilizer operations are independent, allowing us to interpret the process as first mitigating the stabilizer operations, which are then used to twirl the non-stabilizer operations (see the proof in Sec.~\ref{app:error_cost_practical}). While unbiasedness is not theoretically guaranteed under temporally correlated non-Pauli noise, numerical simulations show no observable bias; see Sec.~\ref{app:numerics}. 

\begin{corollary}
Suppose the following conditions hold: 
\begin{itemize}
\item There exists a spacetime noise map $\mathcal{N}_{max}$ such that final states of noisy circuits satisfy Eq.~(\ref{eq:rhof_F_MSN}), where $F$ is given by Eq.~(\ref{eq:F_function}); 
\item Pauli gates are error-free; 
\item Errors in all other operations are Pauli (Definition~\ref{def:Pauli_noise}). 
\end{itemize}
Assume the existence of an ideal spacetime error sampler that generates spacetime errors according to $\mathcal{N}_{max}$. Then, statements in Theorem~\ref{the:error_cost_ideal} still hold. 
\label{cor:correlation}
\end{corollary}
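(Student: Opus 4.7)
The plan is to observe that the temporal-independence hypothesis entered the proof of Theorem~\ref{the:error_cost_ideal} only through two structural consequences: (i) the final state admits the linear representation $\rho_f(\lambda,\mu)=F(\lambda,\mu,\mathcal{N}_{max})$, and (ii) there exists an ideal sampler producing samples distributed according to $\mathcal{N}_{max}$. Both are assumed outright in the corollary, so I would simply rerun the arguments of Sec.~\ref{app:error_cost_ideal} with $\mathcal{N}_{max}$ now a general Pauli map on the spacetime slots rather than a tensor product of per-operation channels.

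First I would factor out the error-free Pauli-gate slots: since those noise slots are trivial and the spacetime noise is Pauli, there is a Pauli channel $\mathcal{N}_{non-P}$ on the remaining slots with $\mathcal{N}_{max}=[\openone_P]\otimes\mathcal{N}_{non-P}$, and I would decompose $\mathcal{N}_{non-P}=(1-P)[\openone_{non-P}]+P\mathcal{E}_{non-P}$ exactly as in Sec.~\ref{app:error_cost_ideal}, the only difference being that $\mathcal{E}_{non-P}$ need no longer factorize across operations. The Taylor expansion (\ref{eq:inverse_diff}) and the submultiplicativity of the $L^1_{Pauli}$-norm do not rely on any product structure, so the bound $\norm{\hat{\mathcal{N}}^{-1}-\mathcal{N}_{non-P}^{-1}}_{L^1_{Pauli}}\leq \vert 1/(1-2\hat{P})-1/(1-2P)\vert$ goes through verbatim; combined with linearity of $F$ this yields the bias bound (\ref{eq:bias2}).

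For the estimator $\hat{P}$, the ideal spacetime sampler produces independent Bernoulli trials of the event ``the sampled spacetime error is non-trivial'' with success probability $P$, so Hoeffding's inequality gives condition (\ref{eq:MP_bound}) unchanged. For the concentration of $\hat{A}_{QEM}$, each circuit run contributes a bounded random variable of magnitude at most $\norm{a}_{L^\infty}/(1-2\hat{P})$ by the same normalization argument, yielding condition (\ref{eq:M_bound}). The sampling-cost calculation is purely combinatorial in the negative-binomial distribution governing the number of sampler calls needed to produce $k$ non-trivial outcomes, so Eqs.~(\ref{eq:costE}) and (\ref{eq:costV}) carry over without modification.

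The main step requiring care---and the principal obstacle---is verifying that the post-selection procedure used inside Algorithm~\ref{alg:ideal_processed_error_sampler} still produces samples from $\mathcal{E}_{non-P}^k$ when the underlying spacetime sampler can have correlated output across slots. This reduces to two observations: conditioning a single $\mathcal{N}_{non-P}$ draw on the event of non-triviality returns a sample from $\mathcal{E}_{non-P}$, because the latter is defined as precisely this conditional distribution; and the $k$ successive sampler invocations are independent by construction, so their entry-wise product realizes $\mathcal{E}_{non-P}^k$ as a Pauli map on the full spacetime register. Once this is in place, linearity of $F$ propagates each such sample through the same identity used in the temporally uncorrelated proof, completing the argument.
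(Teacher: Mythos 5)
Your proposal is correct and follows essentially the same route as the paper: the paper's own proof simply observes that $\mathcal{N}_{max}$ still factors as $[\openone_P]\otimes\mathcal{N}_{non-P}$ and that every step of the proof of Theorem~\ref{the:error_cost_ideal} survives when $\mathcal{N}_{non-P}$ is not a product map, which is precisely the claim you verify step by step. Your explicit check that post-selection on non-triviality still yields samples from $\mathcal{E}_{non-P}^k$ is a detail the paper leaves implicit, but it introduces no new idea beyond the paper's argument.
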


\begin{proof}
Since Pauli gates are assumed to be error-free, the noise map $\mathcal{N}_{max}$ takes the form given in Eq.~(\ref{eq:NPNnonP}). However, the non-stabilizer component $\mathcal{N}_{non-P}$ may not necessarily be a product map as expressed in Eq.~(\ref{eq:NnonP}) (in the presence of temporally correlated errors). Importantly, all conclusions in the proof of Theorem~\ref{the:error_cost_ideal} remain valid even if $\mathcal{N}_{non-P}$ does not have a product form.
\end{proof}

We remark that when error parameters are unstable, the average final state satisfies Eq.~(\ref{eq:rhof_F_MSN}) by taking $\mathcal{N}_{max} = \mathcal{N}_{ave}$. For randomized dynamic circuits, the average noise map is $\mathcal{N}_{ave} = \int d\mathbf{p} g(\mathbf{p}) \mathcal{N}_{max}(\mathbf{p})$, where $\mathcal{N}_{max}(\mathbf{p})$ is the maximum spacetime noise map of the given error rates $\mathbf{p}$. 

\section{Impact of Pauli-gate and super-qubit errors}
\label{app:pauli_SQ_errors}

In Corollary~\ref{cor:error_cost_practical}, errors on Pauli gates and super-qubit operations are neglected. In this section, we analyze their impact. We use the diamond norm as the measure of errors in Pauli gates and super-qubit operations, and we make the following assumptions: 
\begin{itemize}
\item Let $\mathcal{M}_\sigma$ be the completely positive map describing the {\it logical} Pauli gate $\sigma$ with noise. There exists a positive number $\epsilon_P$ such that, the error in $\mathcal{M}_\sigma$ is upper bounded by $\norm{\mathcal{M}_\sigma - [\sigma]}_\diamond \leq \epsilon_P$ for all $\sigma$. 
\item There exists a positive integer $n_P$ such that the number of Pauli gates in each error sampler circuit is not larger than $n_P$. 
\item Given a computational circuit with $N$ operations, we randomly modify the circuit by inserting Pauli gates to mitigate errors. There exists a positive number $\chi_P$ such that the number of Pauli gates in each modified computational circuit is not larger than $\chi_P N$. 
\item There exists a positive number $\epsilon_S$ such that, the error in an operation on super qubits is upper bounded by $\norm{\mathcal{N}^L - [\openone]}_\diamond + \norm{\mathcal{N}^R - [\openone]}_\diamond \leq \epsilon_S$, where $\mathcal{N}^L$ and $\mathcal{N}^R$ are noise maps associated with the operation [see Eq.~(\ref{eq:noisy_operation})]. 
\item There exists a positive integer $n_S$ such that the number of super-qubit operations in each error sampler circuit is not larger than $n_S$. 
\end{itemize}
We focus on applications to surface codes, though the analysis can be generalized to qLDPC codes in which errors on logical qubits within each block are correlated; see Sec.~\ref{app:surface_correlation}. Under the assumption of an operation set $\mathbb{O} \subseteq \mathbb{O}_{P} \cup \mathbb{O}_{S} \cup \mathbb{O}_{non-S}$, each error sampler circuit includes up to fifteen super-qubit operations, which is given by Fig.~\ref{fig:error_sampler}(f): four initialization operations, two controlled-NOT gates for preparing the Bell states, the gate $U^\dag$, idle operations on the two ancilla super qubits, two controlled-NOT gates for measuring the Bell states and four measurement operations. Each error sampler circuit includes up to five logical operations, which is also given by Fig.~\ref{fig:error_sampler}(f): two decoding operations, the gate $U$ and two encoding operations. Each operation requires up to four logical Pauli gates to implement twirling. Therefore, in this case, we can take $n_P = 20$, $\chi_P = 5$ and $n_S = 15$. 

{\bf Error sampler circuits.} Let $\rho$ be the final state of an error sampler circuit when Pauli gates and super-qubit operations are error-free, and let $\rho'$ be the final state when errors in these operations are switched on. Note that by final state, we mean the effective final state after the noise maps associated with final measurements, therefore, measurement errors have been taken into account. According to a property of the diamond norm, we have the trace norm $\norm{\rho' - \rho}_1 \leq n_P\epsilon_P + n_S\epsilon_S$: Specifically, the property is $\norm{\mathcal{M}'\rho'-\mathcal{M}\rho}_1 \leq \norm{\mathcal{M}' - \mathcal{M}}_\diamond + \norm{\rho'-\rho}_1$ when $\mathcal{M}$ and $\mathcal{M}'$ are trace-preserving completely positive maps, and $\rho$ and $\rho'$ are normalized reduced density matrices. These errors in the state deviate the distribution of error samples generated by the circuit, and the distance between two distributions quantified by the $L^1$ norm is upper bounded by the trace norm~\cite{quantum_nielsen_2010}. Let $\mathcal{N}$ and $\mathcal{N}'$ be Pauli noise maps describing the distributions of errors observed from states $\rho$ and $\rho'$, respectively, we have 
\begin{eqnarray}
\norm{\mathcal{N}' - \mathcal{N}}_\diamond = \norm{\mathcal{N}' - \mathcal{N}}_{L^1_{Pauli}} \leq n_P\epsilon_P + n_S\epsilon_S.
\label{eq:N1N0}
\end{eqnarray}

{\bf Spacetime error model.} The distribution of spacetime errors generated by error sampler circuits is described by the spacetime noise map given in Eq.~(\ref{eq:N_eff_nonP}), in which each $\overline{\mathcal{N}}_{eff}(\alpha)$ is the noise map corresponding to an error sampler circuit. We already have an upper bound on the error in each $\overline{\mathcal{N}}_{eff}(\alpha)$ as given in Eq.~(\ref{eq:N1N0}). Let $\overline{\mathcal{N}}_{eff,non-P}$ be the spacetime noise map when Pauli gates and super-qubit operations are error-free, and let $\overline{\mathcal{N}}_{eff,non-P}'$ be the map when errors in these operations are switched on. Their difference is 
\begin{eqnarray}
\norm{\overline{\mathcal{N}}_{eff,non-P}' - \overline{\mathcal{N}}_{eff,non-P}}_{L^1_{Pauli}} \leq N(n_P\epsilon_P + n_S\epsilon_S).
\end{eqnarray}
Then the bias introduced by the difference is upper bounded by [see Eq.~(\ref{eq:bias1})] 
\begin{eqnarray}
&& \norm{a}_{L^\infty} \norm{\overline{\mathcal{N}}_{eff,non-P}^{\prime-1}-\overline{\mathcal{N}}_{eff,non-P}^{-1}}_{L^1_{Pauli}} \notag \\
&\leq & \norm{a}_{L^\infty} \left\vert\frac{1}{1-2P'} - \frac{1}{1-2P}\right\vert
+ \norm{a}_{L^\infty} \frac{1}{(1-2P)^2} \left[2\left\vert\frac{P}{P'} - 1\right\vert + N(n_P\epsilon_P + n_S\epsilon_S)\right].
\end{eqnarray}
The first term is due to the difference in the corresponding total error rates $P'$ and $P$, and the second term is due to the error in $\mathcal{E}_{non-P}$ [see Eq.~(\ref{eq:inverse_diff})]. Here, we have used $\norm{\mathcal{E}_{non-P}^{\prime k}-\mathcal{E}_{non-P}^k}_{L^1_{Pauli}} \leq k \norm{\mathcal{E}_{non-P}'-\mathcal{E}_{non-P}}_{L^1_{Pauli}}$ when $\norm{\mathcal{E}_{non-P}'}_{L^1_{Pauli}} = \norm{\mathcal{E}_{non-P}}_{L^1_{Pauli}} = 1$, and 
\begin{eqnarray}
\norm{\mathcal{E}_{non-P}'-\mathcal{E}_{non-P}}_{L^1_{Pauli}} \leq 2\left\vert\frac{1}{P'} - \frac{1}{P}\right\vert + \frac{1}{P} \norm{\overline{\mathcal{N}}_{eff,non-P}' - \overline{\mathcal{N}}_{eff,non-P}}_{L^1_{Pauli}}.
\end{eqnarray}
Note that $\abs{P'-P} \leq \norm{\overline{\mathcal{N}}_{eff,non-P}' - \overline{\mathcal{N}}_{eff,non-P}}_{L^1_{Pauli}}$. 

When $N(n_P\epsilon_P + n_S\epsilon_S)$ is small, the impact on variance and sampling cost remains limited, except when the total error rate $P$ is close to either $0$ or $1/2$ (see Sec.~\ref{app:modifications} for methods to address cases where $P$ approaches these extremes). 

{\bf Computational circuits.} Pauli-gate and super-qubit errors impact the computational circuits in two ways. First, the inserted encoding and decoding noise is changed by errors in error sampler circuits. Second, the logical Pauli gates in the computational circuits directly cause errors. Let $\tilde{\rho}$ and $\tilde{\rho}'$ be final states of a computational circuit without and with these errors. We have 
\begin{eqnarray}
\norm{\tilde{\rho}' - \tilde{\rho}}_1 \leq 2N(n_P\epsilon_P + n_S\epsilon_S) + \chi_P N \epsilon_P,
\end{eqnarray}
where the first term is due to the inserted encoding and decoding noise (each operation acts on at most two logical qubits, therefore, associates with at most two encoding and decoding noise maps), and the second term is due to logical Pauli gates in the circuit. Accordingly, the bias contributed by these errors is upper bounded by 
\begin{eqnarray}
\frac{1}{1-2P'} \norm{a}_{L^\infty} \norm{\tilde{\rho}' - \tilde{\rho}}_1 \leq \frac{1}{1-2P'} \norm{a}_{L^\infty} N(2n_P\epsilon_P + 2n_S\epsilon_S + \chi_P \epsilon_P),
\end{eqnarray}
where $\frac{1}{1-2P'}$ is due to the PEC normalization factor. 

\section{Modifications to the protocol when $P$ is too large or too small}
\label{app:modifications}

The analysis in Sec.~\ref{app:error_cost_ideal} shows that SNI requires a total error rate satisfying $P < 1/2$. 
To overcome this limitation of SNI, we introduce the following modification to the protocol. We can find a partition of non-Pauli operations in the circuit to satisfy the following condition: For each subset of operations, its spacetime noise map is $\mathcal{N}_{non-P,j}$; the overall spacetime noise map is 
\begin{eqnarray}
\mathcal{N}_{non-P} = \mathcal{N}_{non-P,1}\otimes\mathcal{N}_{non-P,2}\otimes\cdots;
\end{eqnarray}
and the error rate of each noise map $\mathcal{N}_{non-P,j}$ is sufficiently smaller than $1/2$. Then we can individually apply the inverse map to each noise map $\mathcal{N}_{non-P,j}$. However, in this approach, we need to evaluate multiple parameters (error rates of $\mathcal{N}_{non-P,j}$) instead of a single parameter. 

When $P$ is small, generating non-trivial spacetime errors becomes inefficient, leading to a large $\mathrm{Var}(M_{es})$. To overcome this issue, we can take some instances of trivial errors (identity map) as non-trivial errors when generating non-trivial spacetime errors (lines 5 and 6 in Algorithm~\ref{alg:estimator_of_total_error_rate} and lines 9 and 10 in Algorithm~\ref{alg:ideal_processed_error_sampler}): If the error is trivial, we take it as a non-trivial error with a probability of $q$, i.e. we replace the if-statement with \\
${\;\;\;\;\;\;\;\;}$${\;\;\;\;\;\;\;\;}$${\;\;\;\;\;\;\;\;}$${\;\;\;\;\;\;\;\;}$${\;\;\;\;\;\;\;\;}$${\;\;\;\;\;\;\;\;}$${\;\;\;\;\;\;\;\;}$Generate $\nu \sim \textbf{Bernoulli}(q)$ \\
${\;\;\;\;\;\;\;\;}$${\;\;\;\;\;\;\;\;}$${\;\;\;\;\;\;\;\;}$${\;\;\;\;\;\;\;\;}$${\;\;\;\;\;\;\;\;}$${\;\;\;\;\;\;\;\;}$${\;\;\;\;\;\;\;\;}$\textbf{if} there exists $(i,j)$ such that $\sigma(i,j) \neq \openone$ or $\nu$ is $1$ \textbf{then} \\
This modification corresponds to rewriting the spacetime noise as 
\begin{eqnarray}
\mathcal{N}_{non-P} = (1-P')[\openone_{non-P}] + P'\mathcal{E}'_{non-P},
\end{eqnarray}
where $P' = P + q(1-P)$ and 
\begin{eqnarray}
\mathcal{E}'_{non-P} = \frac{P\mathcal{E}_{non-P} + q(1-P)[\openone_{non-P}]}{P'}.
\end{eqnarray}
Therefore, we can increase $P$ to $P'$ in this way, and all results in this section still hold after replacing $P$ with $P'$. 

\subsection{Inaccurate error-rate estimation in the multi-segement protocol}

We can understand the impact of inaccurate error-rate estimation as follows. Consider the single-segment error mitigation scheme. Let $\mathcal{N}$ denote the true spacetime error model. Ideally, if the total error rate is estimated exactly, we can perfectly invert the noise by applying $\mathcal{N}^{-1}$, yielding an unbiased quantum computation. However, if the error rate is estimated inaccurately, the inversion becomes imperfect, resulting in a residual bias. Let $\hat{\mathcal{N}}$ denote the noise model corresponding to the estimated (inaccurate) error rate. The bias in the final observable is upper bounded by
\begin{eqnarray}
\norm{a}_{L^\infty} \left\Vert \hat{\mathcal{N}}^{-1} - \mathcal{N}^{-1} \right\Vert_{L^1_{Pauli}} \leq \norm{a}_{L^\infty} \left| \frac{1}{1 - 2\hat{P}} - \frac{1}{1 - 2P} \right| \simeq \norm{a}_{L^\infty} \frac{2\delta P}{(1 - 2P)^2},
\end{eqnarray}
where $P$ is the true total error rate, $\hat{P}$ is its estimate, and $\delta P = |\hat{P} - P|$ is the estimation error (see Sec.~\ref{app:error_cost_ideal}). This expression shows that the bias grows with $\delta P$, and diverges as $P$ approaches $1/2$. Hence, ensuring that the total error rate remains well below $1/2$ (e.g.~$P \leq 0.4$) is essential for reliable error mitigation.

With two segments, the true spacetime error model can be expressed as $\mathcal{N} = \mathcal{N}_1 \otimes \mathcal{N}_2$: The entire circuit is divided into two sub-circuits (segments) with $\mathcal{N}_1$ acting on the first sub-circuit and $\mathcal{N}_2$ on the second one. Accordingly, the bias is upper bounded by
\begin{eqnarray}
&&\norm{a}_{L^\infty} \left\Vert \hat{\mathcal{N}}_1^{-1} \otimes \hat{\mathcal{N}}_2^{-1} - \mathcal{N}_1^{-1} \otimes \mathcal{N}_2^{-1} \right\Vert_{L^1_{Pauli}} \notag \\
&\leq & \norm{a}_{L^\infty} \left( \left\Vert \hat{\mathcal{N}}_1^{-1} \otimes \hat{\mathcal{N}}_2^{-1} - \mathcal{N}_1^{-1} \otimes \hat{\mathcal{N}}_2^{-1} \right\Vert_{L^1_{Pauli}} + \left\Vert \mathcal{N}_1^{-1} \otimes \hat{\mathcal{N}}_2^{-1} - \mathcal{N}_1^{-1} \otimes \mathcal{N}_2^{-1} \right\Vert_{L^1_{Pauli}} \right) \notag \\
&\leq & \norm{a}_{L^\infty} \left( \frac{1}{1 - 2\hat{P}_2} \left| \frac{1}{1 - 2\hat{P}_1} - \frac{1}{1 - 2P_1} \right| + \frac{1}{1 - 2P_1} \left| \frac{1}{1 - 2\hat{P}_2} - \frac{1}{1 - 2P_2} \right| \right) \notag \\
&\simeq & \frac{2 \norm{a}_{L^\infty}}{(1 - 2P_1)(1 - 2P_2)} \left( \frac{\delta P_1}{1 - 2P_1} + \frac{\delta P_2}{1 - 2P_2} \right).
\end{eqnarray}
Here, $P_j$ is the error rate of the $j$th segment, $\hat{P}_j$ is its estimate, and $\delta P_j = |\hat{P}_j - P_j|$. 

The above analysis can be generalized to $S$ segments. In this case, the bias is upper bounded by
\begin{eqnarray}
\norm{a}_{L^\infty} \left\Vert \bigotimes_{j=1}^S \hat{\mathcal{N}}_j^{-1} - \bigotimes_{j=1}^S \mathcal{N}_j^{-1} \right\Vert_{L^1_{Pauli}}
\lesssim \frac{2\norm{a}_{L^\infty}}{\prod_{j=1}^S (1 - 2P_j)} \sum_{j=1}^S \frac{\delta P_j}{1 - 2P_j}.
\end{eqnarray}
From this estimate, we draw several conclusions:
\begin{enumerate}
\item {\it Bias accumulation with segmentation.} The total bias increases with the number of segments, as the accuracy of each segment’s error-rate estimate contributes to the overall bias. This issue is also present in conventional PEC, where multiple parameter estimates are required. By contrast, our method (in its non-segmented form) requires measuring only a single parameter, offering a clear practical advantage.
\item {\it Bias amplification by quasi-probability overhead.} The bias is amplified by the factor $\left[\prod_{j=1}^S (1 - 2P_j)\right]^{-1}$. For simplicity, assuming each segment has the same error rate $P_0$, the amplification factor becomes approximately $e^{2P_0 S}$ (assume a small $P_0$). This is precisely the quasi-probability overhead factor associated with the segmented protocol, and it matches the scaling of conventional PEC. Thus, it is essential that $P_0 S$ remains moderate to keep the overhead and bias reasonable.
\item {\it Extension of tolerable total error.} Despite the limitations above, the segmentation approach can significantly increase the tolerable total error rate of the full circuit. With segmentation, the condition of applying our method becomes $P_0<50\%$. Regarding the total error rate $1 - (1 - P_0)^S \simeq 1 - e^{-P_0 S}$, it can exceed $50\%$ and approach unity as $S$ increases.
\end{enumerate}

\section{Benchmarking of surface-code operations}
\label{app:benchmarking_SC}
\subsection{On-site benchmarking}

\begin{figure}[!htbp]
\centering
\includegraphics[width=4 in]{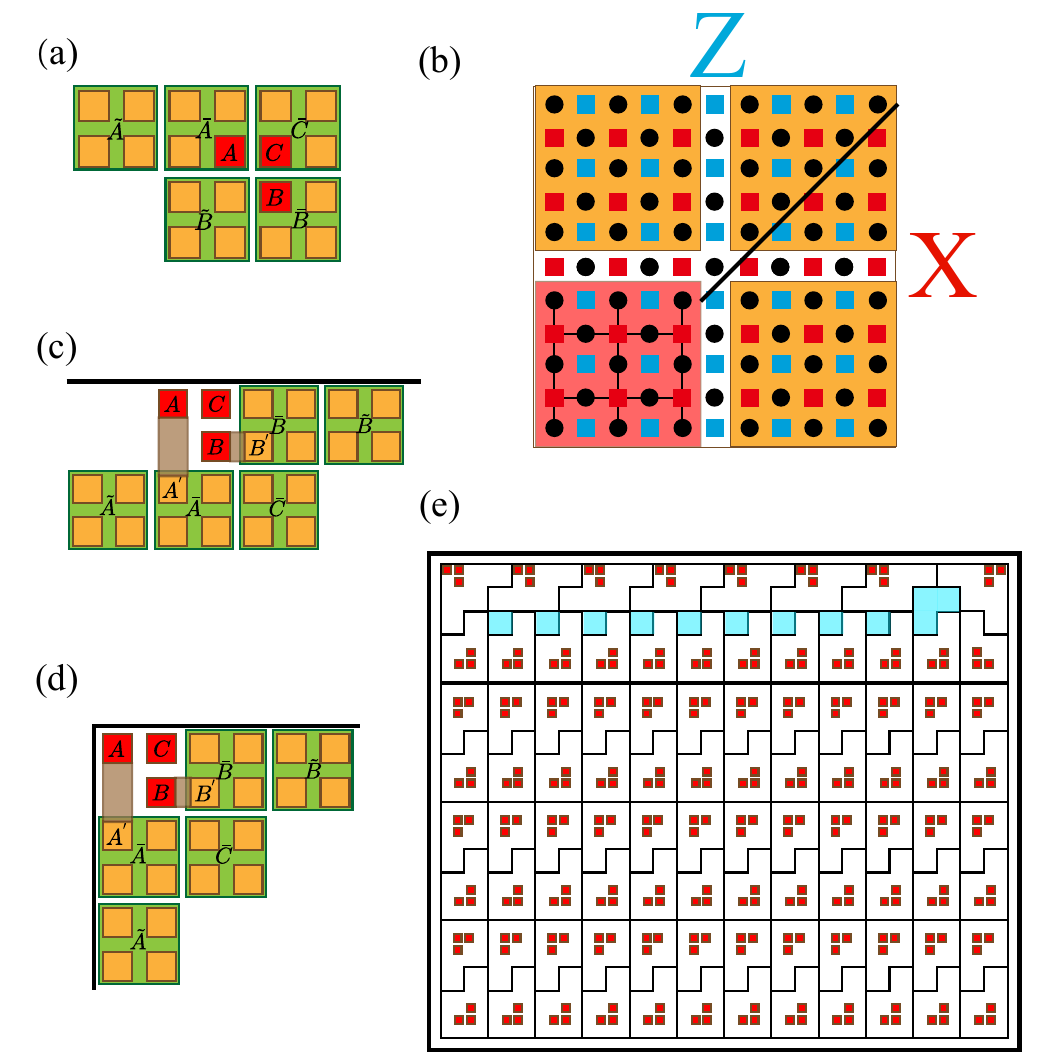}
\caption{
Qubit layout for sampling CNOT gate errors on surface codes. Orange and red squares denote logical qubits, and green squares denote super qubits. Brown regions are reserved for implementing state transfer.
(a) Layout for a CNOT gate located away from the boundary.
(b) Encoding and decoding operations. The data qubits outside the logical qubit (red square) are divided into two regions by the back-diagonal line. Encoding and decoding operations are performed by initializing and measuring, respectively, these qubits in either the $X$ or $Z$ basis as illustrated in the figure~\cite{Li_2015}. To complete the encoding operation, we perform one round of parity-check measurements on the super qubit after initializing the data qubits. 
(c) Layout for a CNOT gate near an edge.
(d) Layout for a CNOT gate at a corner of the qubit array.
(e) Example layout for parallel benchmarking of multiple CNOT gates. Blue squares mark unused qubits, on which error sampling for certain single-qubit operations can be performed to optimize resource utilization.
}
\label{fig:CNOT}
\end{figure}

In platforms such as superconducting qubits, error rates are qubit-dependent, making it essential to benchmark each operation as implemented on specific qubits. Such on-site benchmarking is possible. As an example, we consider logical CNOT gates on surface codes. Although a CNOT gate acts on two logical qubits, its implementation via lattice surgery requires an additional ancilla block~\cite{horsman_surface_2012}. In Fig.~\ref{fig:CNOT}(a), the CNOT gate to be benchmarked is performed specifically between logical qubits $A$ and $B$, with logical qubit $C$ serving as the ancilla.

To benchmark the CNOT gate, we need to use super qubits. For simplicity, we consider the case that super qubits have the distance $d_S = 2d$, meaning each super qubit occupies the physical qubits of four logical qubits [as illustrated in Fig.~\ref{fig:CNOT}(a)]. We emphasize that while this serves as an example, practical implementations may achieve sufficient performance with $d_S \approx 1.3 d$ (see Sec.~\ref{app:applications_SC}).

The CNOT gate benchmarking requires five super qubits and proceeds through six steps. 
\begin{enumerate}
\item Prepare Bell states on super-qubit pairs $(\tilde{A},\bar{A})$ and $(\tilde{B},\bar{B})$ through transversal $X$ initializations and joint $ZZ$ measurements on adjacent super qubits. The joint measurements are implemented via lattice surgery~\cite{horsman_surface_2012}.
\item Execute the CNOT gate between $\bar{A}$ and $\bar{B}$ using ancilla super qubit $\bar{C}$.
\item Transform super qubits $\bar{A}$ and $\bar{B}$ into logical qubits $A$ and $B$, respectively, which is achieved by measuring out extra physical qubits. The pattern of measurement basis is illustrated in Fig.~\ref{fig:CNOT}(b)~\cite{Li_2015}. This step is the decoding operation.
\item Perform the target CNOT gate between $A$ and $B$ using ancilla logical qubit $C$.
\item Encode $A$ and $B$ back into $\bar{A}$ and $\bar{B}$, which involves reinitialising extra physical qubits following the same pattern as decoding. The subsequent round of parity-check measurements on $\bar{A}$ and $\bar{B}$ completes this encoding operation.
\item Conduct Bell measurements on $(\tilde{A},\bar{A})$ and $(\tilde{B},\bar{B})$ via joint $ZZ$ measurements and transversal $X$ measurements. Then, we can extract CNOT gate error information from measurement outcomes.
\end{enumerate}
In this way, we can benchmark the specific CNOT gate implemented on logical qubits $A$ and $B$ (mediated by the ancilla $C$).

The above protocol can be applied to CNOT gates acting on logical qubits located away from the boundary of a two-dimensional qubit array. For CNOT gates acting near the boundary, the arrangement must be modified. For example, we consider logical qubits $A$, $B$, and $C$ near an edge of the qubit array. In this case, the CNOT gate can be benchmarked as illustrated in Fig.~\ref{fig:CNOT}(c). The modified protocol proceeds as follows:
\begin{enumerate} 
\item Prepare Bell states on super-qubit pairs $(\tilde{A},\bar{A})$ and $(\tilde{B},\bar{B})$;
\item Execute the CNOT gate on $\bar{A}$ and $\bar{B}$;
\item Decode $\bar{A}$ and $\bar{B}$ into logical qubits $A'$ and $B'$, respectively;
\item Transport $A'$ and $B'$ to the locations of $A$ and $B$;
\item Execute the CNOT gate on $A$ and $B$;
\item Transport $A$ and $B$ back to the locations of $A'$ and $B'$;
\item Re-encode $A'$ and $B'$ into $\bar{A}$ and $\bar{B}$; 
\item Conduct Bell measurements on $(\tilde{A},\bar{A})$ and $(\tilde{B},\bar{B})$. 
\end{enumerate}
These transport operations can be implemented via lattice surgery~\cite{litinski_game_2019}. Note that we can eliminate the impact of encoding and decoding errors. By incorporating transport operations into encoding and decoding phases, our method can effectively mitigate the associated errors, thus ensuring unbiased computation results. It is similar for a CNOT gate located at a corner of array as shown in Fig.~\ref{fig:CNOT}(d). 

For parallel benchmarking, we arrange the qubits as illustrated in Fig.~\ref{fig:CNOT}(e). Each benchmarking round can only target a subset of operations. It is desirable to develop an algorithm that optimizes the benchmarking schedule to maximize the utilization of the quantum computer and benchmark as many operations as possible in parallel. However, this optimization is beyond the scope of the present work. Here, we illustrate an example pattern that benchmarks multiple CNOT gates simultaneously.

\subsection{Cost of benchmarking}


Consider the protocol in Fig.~\ref{fig:CNOT}(a) for benchmarking a logical CNOT gate, it uses twenty logical qubits (i.e.~five super qubits) to benchmark an operation acting on three logical qubits. We estimate the total time cost to be $d_S + 2d_S + 2d + d_S = 4d_S+2d$ rounds of parity-check measurements. Each term corresponds to a specific step in the protocol, and we have neglected the time cost of encoding and decoding. We note that benchmarking a gate near the boundary requires additional time overhead due to the need to transport logical qubits. A small prefactor may also apply, depending on the measurement error rate. We can find that a logical CNOT gate has a spacetime cost of about $3(2d)^2 \times 2d = 24d^3$ (involving three logical qubits over $2d$ rounds), and the benchmarking has a spacetime cost of about $5(2d_S)^2 \times (4d_S+2d) = 80d_S^3+40d_S^2d$. This yields an overhead factor of $(80d_S^3+40d_S^2d)/(24d^3)=10(d_S/d)^3/3+5(d_S/d)^2/3$. As shown in Fig.~\ref{fig:CNOT}(a), the overhead take $33$ assuming a code distance ratio of $d_S/d = 2$. If instead we use $d_S/d \approx 1.3$, the overhead factor reduces to $10$.

The above analysis applies generally to logical operations on surface codes implemented via lattice surgery, whose spacetime costs scale as $O(d^3)$~\cite{horsman_surface_2012}. Since our benchmarking protocol uses super qubits with distance $d_S$, the total benchmarking cost scales as $O(d_S^3)$. Consequently, the overhead factor is $O((d_S/d)^3)$, with a prefactor of approximately 3–4 as discussed above.

A subtle issue in benchmarking arises with $T$ gates, which require the preparation of corresponding magic states. The magic states prepared for distance-$d$ logical qubits may not meet the fidelity requirements for distance-$d_S$ super qubits used in benchmarking. As a result, an additional round of magic state distillation may be necessary prior to their use in benchmarking. In the worst-case scenario, this could increase the magic state cost by up to a factor of fifteen~\cite{horsman_surface_2012}, introducing an additional overhead in the benchmarking procedure. That said, this represents a conservative estimate. Various techniques, such as magic state cultivation and advanced magic state distillation, can significantly reduce the cost of preparing high-fidelity magic states~\cite{quantum_ogorman_2017,magic_litinski_2019,magic_gidney_2024,constantoverhead_wills_2024,PhysRevA.109.062438}.

Lastly, we offer an alternative perspective on interpreting the benchmarking cost. In our error mitigation approach, the accuracy of the super qubits fundamentally determines the accuracy of the computation. Thus, given a quantum computer with $K$ logical qubits at code distance $d$, our protocol effectively upgrades its computational accuracy to the level of a system with $K$ logical qubits at a higher code distance $d_S$. From this viewpoint, it is reasonable to compare the time cost of our protocol to that of a quantum computer physically equipped with $K$ logical qubits at distance $d_S$. Such a distance-$d_S$ quantum computer would inherently incur a larger spacetime cost, by a factor of $O((d_S/d)^3)$, and would require higher-fidelity magic states. Viewed this way, our error mitigation protocol introduces only a distance-independent overhead relative to a distance-$d_S$ quantum computer, because each error sampler circuit requires $O(1)$ distance-$d_S$ logical operations.

\section{Numerical simulations}
\label{app:numerics}

We present three numerical simulations. In the first simulation, we study how spatial correlations at the physical level affect logical qubit errors. In the second simulation, we introduce spatially correlated noise mimicking intra-block correlations in qLDPC codes. We show that conventional PEC based on a sparse error model exhibits significant bias. In contrast, SNI remains unbiased under the same conditions. In the third simulation, we test SNI under temporally correlated non-Pauli noise and confirm its robustness: As the benchmarking cost $M_P$ increases, the bias consistently decreases, approaching unbiased computation. 

\subsection{Spatially correlations in surface codes}
\label{app:surface_correlation}

Two surface-code blocks are placed adjacent to each other as illustrated in Fig.~\ref{fig:surface}(a). For a given code distance $d$, stabilizer measurements are repeated for $d$ rounds. We adopt a phenomenological error model to examine the influence of spatial correlations. In each round, every qubit (data or ancilla) undergoes two types of maps, one for $X$ and one for $Z$ (bit-flip and phase-flip errors, respectively). The $\sigma = X,Z$ map associated with qubit $r$ (denoted by the green cross) is $\mathcal{N}_{\sigma, r} = (1 - p)[\openone] + 0.9p[\sigma_r] + 0.1p \vert N_r \vert^{-1} \sum_{r' \in N_r} [\sigma_r \sigma_{r'}]$, where $p$ is the physical error rate, $[\openone]$ is the identity map, and $N_r$ is the set of neighboring qubits around qubit $r$ up to Chebyshev distance two, indicated by the purple box. Note that errors on ancilla qubits manifest as measurement errors, i.e.,~they result in incorrect syndrome data. With $90\%$ probability, the error acts only on qubit $r$, and with $10\%$ probability, it acts on both qubit $r$ and one of its neighbors, modeling spatial correlations. 


Due to the block layout, logical $Z$ errors are more susceptible to cross-block correlations. We estimate each $Z$ logical error rate using $1.28 \times 10^7$ samples in the Monte Carlo simulation, and the results are shown in Fig.~\ref{fig:surface}(b): $P_1$ and $P_2$ denote the probabilities that logical qubits $1$ and $2$, respectively, undergo logical $Z$ errors, and $P_{1\&2}$ denotes the joint probability that both qubits simultaneously experience logical $Z$ errors. A deviation of $P_{1\&2}$ from the product $P_1 P_2$ indicates the presence of error correlations between the two logical qubits. For two neighboring physical qubits [highlighted by the yellow rounded rectangle in Fig.~\ref{fig:surface}(a)], we use $p_1$, $p_2$, and $p_{1\&2}$ to denote the individual and joint physical error probabilities. The large discrepancy between $p_1p_2$ and $p_{1\&2}$ indicates significant correlations at the physical level. In contrast, for logical qubits, the negligible difference between $P_1P_2$ and $P_{1\&2}$ shows that such spatial correlations do not induce observable correlations at the logical level. 

\begin{figure}[!htbp]
\centering
\includegraphics[width=4 in]{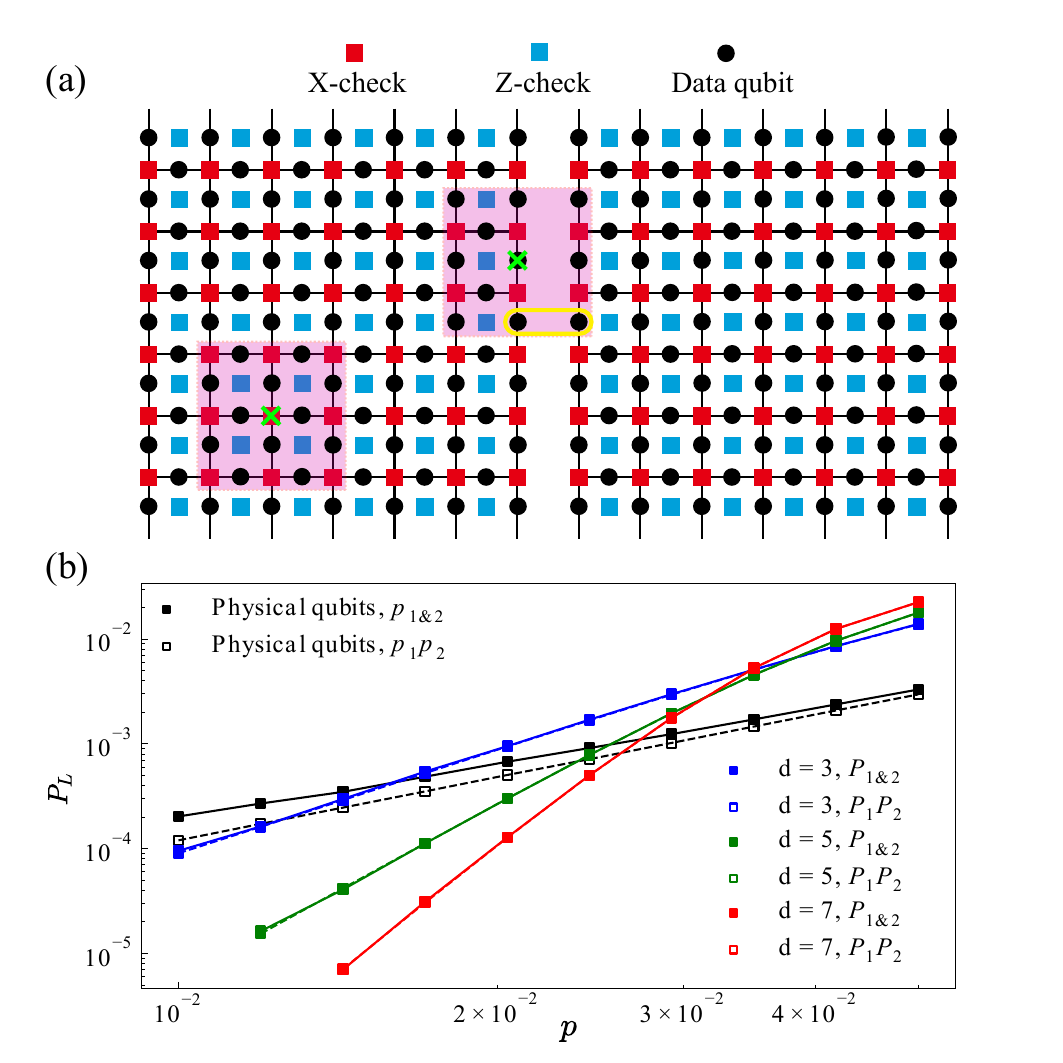}
\caption{
Impact of physical-level spatial correlations on logical qubit errors. 
(a) Error model. 
(b) Logical error rates. 
See the text for details. 
}
\label{fig:surface}
\end{figure}

\subsection{Mitigating spatially correlated errors}
\label{app:space}

\begin{figure}[htbp]
\centering
\includegraphics[width=0.5\linewidth]{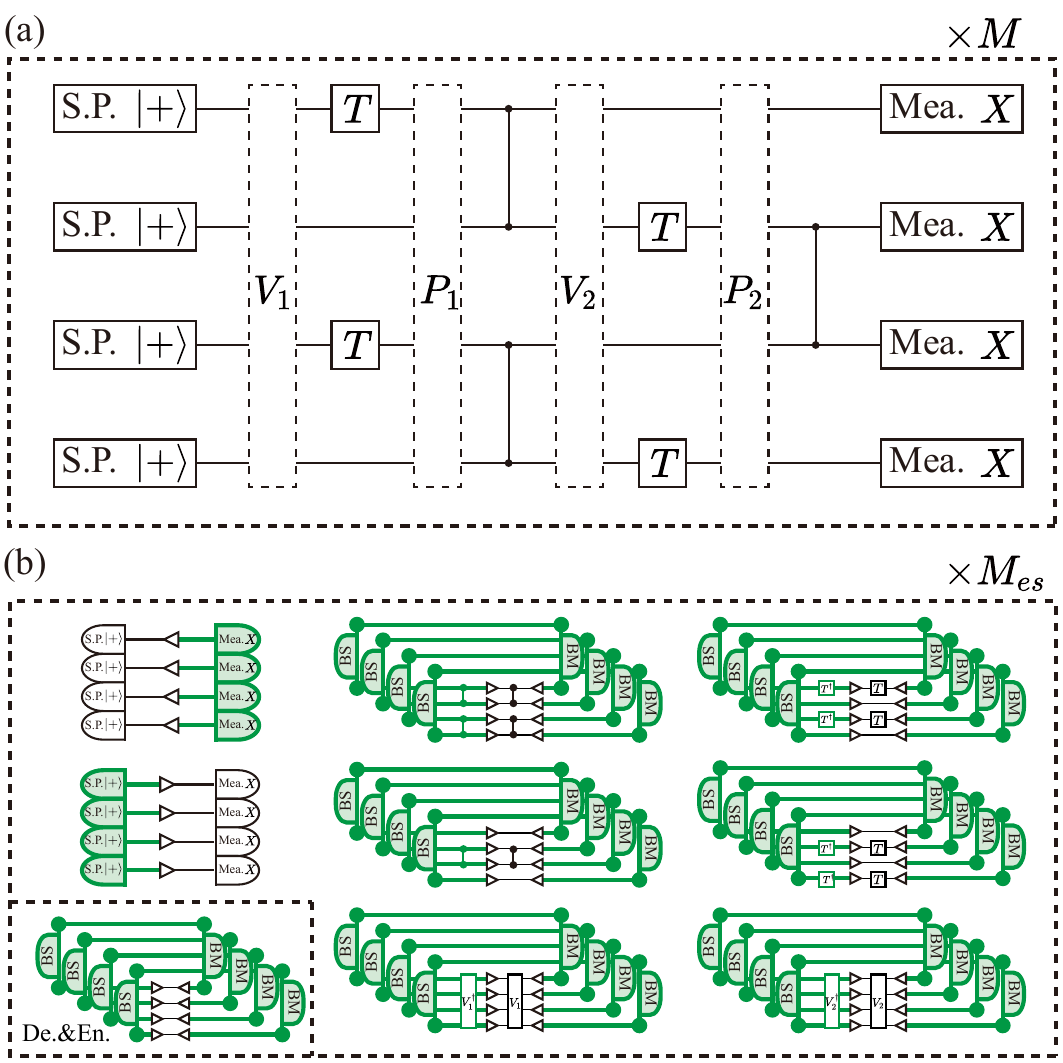}
\caption{
Circuits used in the simulation for spatially correlated errors. 
(a) Computational circuit. Gates denoted by dashed boxes implement twirling operations on $T$ gates; see Table~\ref{tab:twirled_operations}. Here, $P_i$ denotes a Pauli gate, and $V_i$ denotes the corresponding Clifford gate. 
(b) Error sampler circuits. To generate a spacetime error sample, an error sampler circuit is implemented for each possible non-Pauli twirling gate $V_i$. 
}
\label{fig:circuits_space}
\end{figure}

\begin{figure}[htbp]
\centering
\includegraphics[width=0.5\linewidth]{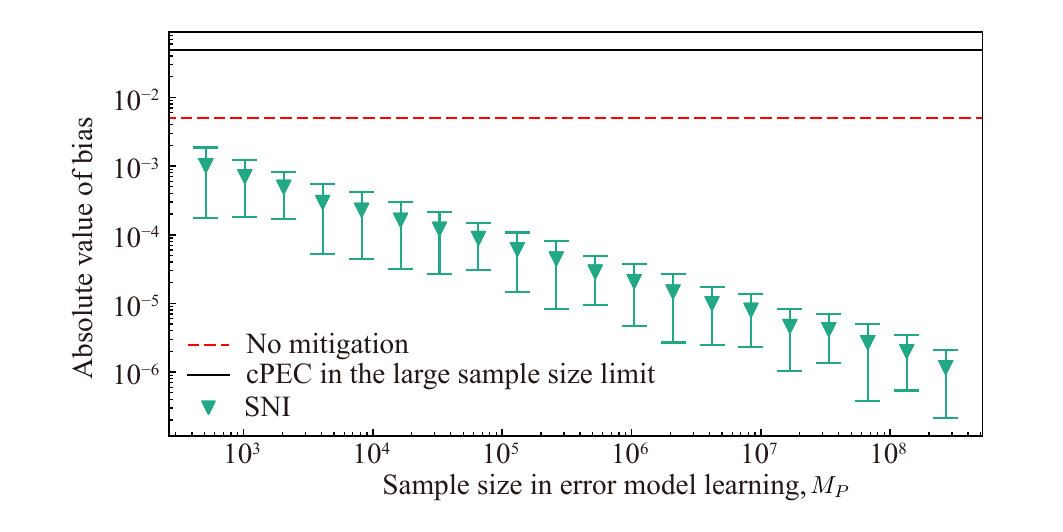}
\caption{
Bias as a function of $M_P$ with spatially correlated errors. The observable $X^{\otimes k}$ is evaluated using the circuit in Fig.~\ref{fig:circuits_space}. The error model parameter $p$ take $0.005$ in the simulations. The large sample size limit means $M_P \rightarrow \infty$. 
}
\label{fig:space_layer}
\end{figure}

In this numerical simulation, we apply SNI and conventional PEC (using an uncorrelated error model) to mitigate spatially correlated errors in a system of $k = 4$ qubits. The system could be a qLDPC code block that encodes $k$ logical qubits, which experiences spatially correlated errors. For a comparison of different methods in the presence of spatially correlated errors, we model the noise using a global depolarizing channel. The global depolarizing channel coincides with the numerical results of logical errors in a qLDPC code block in Ref.~\cite{zhang_demonstrating_2025}, which show that the most probable event is that half of the logical qubits experience errors simultaneously. In addition, non-Clifford gates also undergo non-Pauli errors.

Below is a detailed description of the error model used in our simulations. We simulate a four-qubit computational circuit alongside a set of error sampler circuits (see Fig.~\ref{fig:circuits_space}). The computational circuit includes ten layers of operations. The operations are affected by noise as specified below:
\begin{itemize}
\item Logical Pauli gates and operations on super qubits are assumed to be error-free.
\item A layer of stabilizer operations (one of eight layers other than the two $T$ gate layers), if at least one of the operations is non-Pauli, is subject to $k$-qubit depolarizing noise with depolarizing rate $p$. This noise map is defined as
\begin{eqnarray}
\mathcal{N}_k = (1 - \frac{4^k - 1}{4^k}p)[I^{\otimes k}] + \frac{1}{4^k}p \sum_{\sigma \in \mathbb{P}_k - \{I^{\otimes k}\}} [\sigma],
\end{eqnarray}
and is applied after state preparation and gates, and before measurement. 
\item Layers of encoding and decoding operations are subject to $k$-qubit depolarizing noise with a reduced depolarizing rate of $p/3$. The noise map is given by
\begin{eqnarray}
\mathcal{N}_{en/de} = \left(1 - \frac{4^k-1}{4^k}\frac{p}{3}\right) + \frac{1}{4^k}\frac{p}{3} \sum_{\sigma \in \mathbb{P}_k - \{I^{\otimes k}\}} [\sigma].
\end{eqnarray}
\item A layer of $T$ gate is subject to two types of noise: $k$-qubit depolarizing noise with depolarizing rate $p/2$, modeled as
\begin{eqnarray}
\mathcal{N}_d = \left(1 - \frac{4^k-1}{4^k}\frac{p}{2}\right) + \frac{1}{4^k - 1}\frac{p}{2} \sum_{\sigma \in \mathbb{P}_k - \{I^{\otimes k}\}} [\sigma],
\end{eqnarray}
and coherent over-rotation noise, represented as a rotation by angle $\sqrt{2p}$ around the $Z^{\otimes k}$ axis, 
\begin{eqnarray}
\mathcal{N}_c = [e^{-i\frac{\sqrt{2p}}{2}Z^{\otimes k}}].
\end{eqnarray}
\end{itemize}
Since logical Pauli gates are assumed to be error-free, errors in stabilizer operations as well as encoding and decoding operations can be perfectly twirled into Pauli errors, justifying the Pauli error assumption. Therefore, only the $T$ gate introduces non-Pauli errors. 

In SNI, we estimate the total error rate and modify the computational circuit using instances of spacetime errors generated from error sampler circuits. In conventional PEC, we mitigate errors by using an uncorrelated Pauli error model. Specifically, for each $q$-qubit operation, the associated noise map is assumed to be Pauli noise acting non-trivially only on those $q$ qubits. As a result, each noise map requires estimating $4^q - 1$ parameters. Additionally, the model assumes that all operations of the same type, e.g.,~all $T$ gates, share the same noise map. The parameters are estimated using the same spacetime error instances as in SNI. Because the uncorrelated Pauli error model does not account for all the spatial correlations, conventional PEC exhibits a notable bias.

\subsection{Mitigating temporally correlated errors}
\label{app:time}

\begin{figure}[htbp]
\centering
\includegraphics[width=0.5\linewidth]{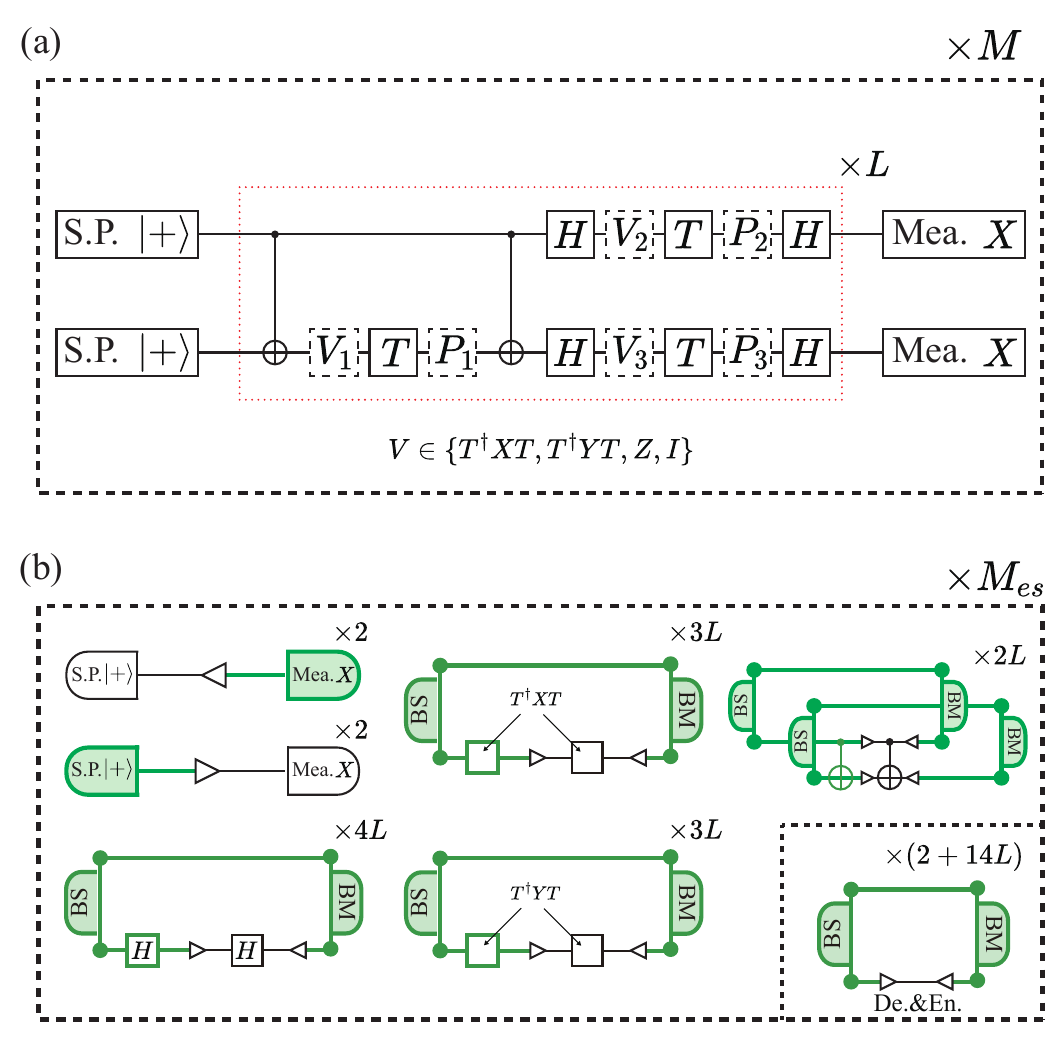}
\caption{
Circuits used in the simulation for temporally correlated errors. 
(a) Computational circuit. Gates denoted by dashed boxes implement twirling operations on $T$ gates; see Table~\ref{tab:twirled_operations}. Here, $P_i$ denotes a Pauli gate, and $V_i$ denotes the corresponding Clifford gate. 
(b) Error sampler circuits. 
}
\label{fig:circuits}
\end{figure}

\begin{figure}[htbp]
\centering
\includegraphics[width=0.5\linewidth]{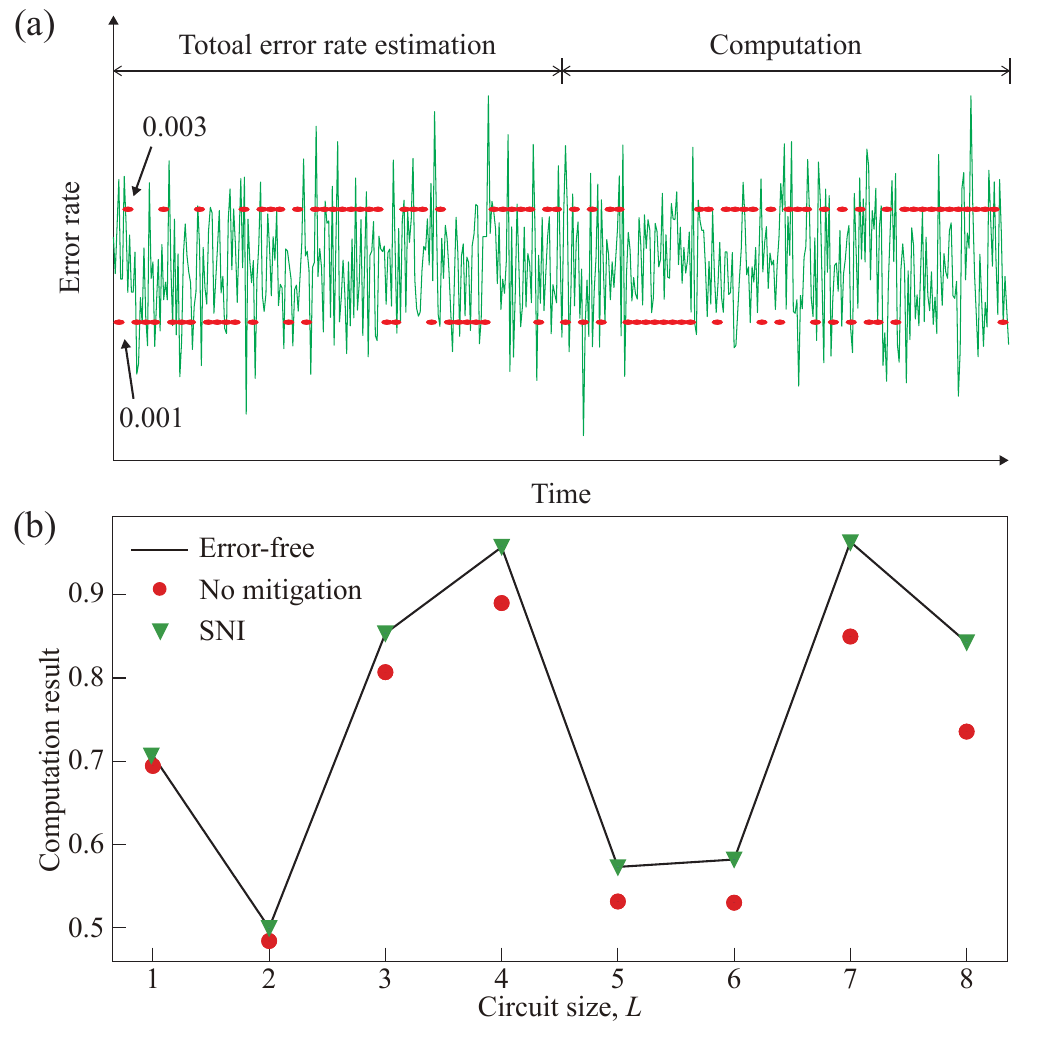}
\caption{
(a) Fluctuation of error rate $p$. 
(b) Computation results under error parameter fluctuations. We use the circuit in Fig.~\ref{fig:circuits}(a) to implement the transformation $[e^{-i\frac{\pi}{8}(X_1+X_2)}e^{-i\frac{\pi}{8}Z_1Z_2}]^L$ on two qubits initialized in the $\ket{+}$ state, with the observable being $X$ on the first qubit. SNI with a practical error sampler is applied for error mitigation, taking $M_P = 2.56 \times 10^8$ and $M = 2.56 \times 10^7$. 
}
\label{fig:time_app}
\end{figure}

Below is a detailed description of the error model used in our simulations. We simulate a two-qubit computational circuit alongside a set of error sampler circuits (see Fig.~\ref{fig:circuits}), where each operation is affected by noise as specified below:
\begin{itemize}
\item Logical Pauli gates and operations on super qubits are assumed to be error-free.
\item Non-Pauli $q$-qubit stabilizer operations, including CNOT gate, Hadamard gate, state preparation, measurement and the twirling gates $V$, are subject to $q$-qubit depolarizing noise with depolarizing rate $p$. This noise map is defined as
\begin{eqnarray}
\mathcal{N}_q = (1 - \frac{4^q-1}{4^q}p)[I^{\otimes q}] + \frac{1}{4^q}p \sum_{\sigma \in \mathbb{P}_q - \{I^{\otimes q}\}} [\sigma],
\end{eqnarray}
and is applied after state preparation and gates, and before measurement. 
\item Encoding and decoding operations are subject to single-qubit depolarizing noise with a reduced depolarizing rate of $p/3$. The noise map is given by
\begin{eqnarray}
\mathcal{N}_{en/de} = \left(1 -  \frac{p}{4}\right)[I] + \frac{p}{12}([X] + [Y] + [Z]).
\end{eqnarray}
\item The $T$ gate is subject to two types of noise: Single-qubit depolarizing noise with depolarizing rate $p/2$, modeled as
\begin{eqnarray}
\mathcal{N}_d = \left(1 - \frac{3p}{8}\right)[I] + \frac{p}{8}([X] + [Y] + [Z]),
\end{eqnarray}
and coherent over-rotation noise, represented as a rotation by angle $\sqrt{2p}$ around the $Z$ axis, 
\begin{eqnarray}
\mathcal{N}_c = [e^{-i\frac{\sqrt{2p}}{2}Z}].
\end{eqnarray}
\end{itemize}

Temporal correlations are caused by fluctuations in the error rate $p$; see Fig.~\ref{fig:time_app}(a). In practice, $p$ may vary randomly but continuously over time. To simplify the simulation, we instead assume $p$ is randomly drawn from the discrete set $\{0.001,0.003\}$ with equal probability. In each run of the computational circuit [see Fig.~\ref{fig:circuits}(a)], we randomly select $p$ from the discrete set, and use the selected $p$ throughout the entire circuit run (i.e.~all operations in the circuit run have the same value of $p$). Each spacetime error sample is generated using a set of error sampler circuits. For each run of such a circuit set [see Fig.~\ref{fig:circuits}(b)], we again randomly generate a value of $p$ from the discrete set, which is applied consistently across all circuits in the set (reflecting that these circuits are executed close in time). The error-mitigated computation requires $M$ runs of the computational circuit and $M_{es}$ runs of spacetime error generation, yielding a total of $M + M_{es}$ instances of $p$. The value of $p$ used for generating error samples is independent from that used in the computational circuit. In this way, we emulate temporally varying noise across different runs. 

Fig.~3 in the main text and Fig.~\ref{fig:time_app}(b) present the results of error-mitigated computation using SNI and conventional PEC under these temporally correlated errors. In conventional PEC, we use the same uncorrelated Pauli error model as in Sec.~\ref{app:space}. Because this model does not account for temporal correlations, conventional PEC exhibits a notable bias under temporally varying noise. 


\section{Comparisons to other quantum error mitigation methods}
\label{app:comparisons}
A broad family of quantum error mitigation methods has been developed~\cite{cai_quantum_2023}. While practical applications may ultimately involve a combination of techniques, the core methods can be classified into two categories: exact methods and heuristic methods. 

Exact methods, including zero-noise extrapolation (ZNE) and PEC, produce results with provably bounded bias~\cite{temme_error_2017,li_efficient_2017}. Given an accurate noise model and sufficient overhead~\cite{endo_practical_2018,nielsen_gate_2021,van_den_berg_probabilistic_2023}, these methods can systematically reduce the bias to zero. In contrast, heuristic methods, such as virtual distillation~\cite{PhysRevX.11.041036} and Clifford regression~\cite{error_czarnik_2021}, lack such guarantees, although they avoid the burden of noise benchmarking. For example, Clifford regression assumes a global depolarizing noise model. Under realistic noise, these methods could incur a non-negligible residual bias. 

Our method is a variant of PEC, and thus belongs to the class of exact methods. Below, we compare it to the other exact method, ZNE. 

There are many variants of ZNE, differing primarily in how they amplify noise~\cite{kim_evidence_2023,scalable_kim_2023,scalable_le_2023,digital_giurgica-tiron_2020,error_kandala_2019,wahl2023zne}. A notable variant involves boosting the noise level by simulating additional noise using randomized gate sequences~\cite{kim_evidence_2023}. Given an accurate noise model, the simulated noise can faithfully reproduce the physical noise, enabling extrapolation to an unbiased result as the extrapolation order increases. Consequently, benchmarking the noise model becomes an essential step in this method, much like in PEC. Therefore, the key advantages of our method over conventional PEC, namely, the simplified characterization process (only a single parameter needs to be estimated) and robustness to correlated noise, also apply to conventional ZNE with simulated noise. 

A popular variant of ZNE is gate folding~\cite{digital_giurgica-tiron_2020}, where each gate $U$ is replaced by a gate sequence such as $UU^\dagger U$ to amplify the noise without requiring explicit knowledge of the noise model. While this approach is characterization-free, it does not guarantee faithful noise amplification under general, realistic noise models. Consequently, unbiased error mitigation is not assured. Similar limitations apply to other variants, such as gate stretching~\cite{error_kandala_2019,scalable_kim_2023} and code-distance scaling~\cite{wahl2023zne}, which rely on assumptions about the noise dependence on gate duration and the homogeneity of the qubit array, respectively. 

Finally, we emphasize that the error sampling techniques developed in this work are broadly applicable. For instance, they can be incorporated into ZNE by inserting sampled errors into the computational circuit to amplify noise in a controlled and faithful manner. In this way, our framework provides general-purpose tools for realizing exact error mitigation in the fault-tolerant era.

\section{Applications to surface codes}
\label{app:applications_SC}
In surface codes, each code block encodes a single logical qubit. Considering a two-dimensional qubit array, spatially correlated errors between logical qubits can arise from crosstalk between neighboring physical qubits. However, since cross-block correlations only occur at the boundaries, their influence on logical errors is negligible; see Sec.~\ref{app:surface_correlation} for numerical results on logical error correlations. Consequently, each logical operation can be benchmarked individually: For an operation acting on $q$ logical qubits, it suffices to run the error sampler circuit on those $q$ logical qubits using $2q$ super qubits. The primary benefit of applying SNI to surface codes is the minimized number of error parameters that need to be characterized, along with improved resilience to temporally correlated errors rather than that to specially correlated errors. 

Regarding the required code distance for super qubits, a large distance ratio $d_S/d$ is {\it not} required. It can be estimated using the following empirical formula for surface codes~\cite{fowler_surface_2012}, 
\begin{eqnarray}
p_L = p_0 (p/p_{th})^{\lceil d/2 \rceil},
\end{eqnarray}
where $d$ is the code distance, $p_L$ is the logical error rate per parity-check measurement cycle per logical qubit, $p_0 = 0.03$, $p$ is the physical error rate, and $p_{th} = 0.01$ is the threshold error rate. Suppose that the physical error rate is $p = 0.001$ and the code distance of logical qubits is $d = 19$, achieving a logical error rate of $p_L = 3\times10^{-12}$. If we take a code distance of $d_S = 25$ for super qubits, their error rate is smaller than $d = 19$ logical qubits by a factor of $10^{-3}$, i.e.~super-qubit errors are negligible. Since $(d_S/d)^2 \approx (1.3)^2 \approx 1.7$, we only need qubits fewer than two $d = 19$ blocks to encode a super qubit. 

The empirical formula also suggests that temporal correlations can become particularly severe when physical error rates fluctuate for each circuit run. For instance, consider a surface code with distance $d = 19$: If the physical error rate increases from $p$ to $p + 0.1p$, the logical error rate rises from $p_L$ to $1.1^{10}p_L \approx 2.6p_L$. This example illustrates that even small fluctuations in physical error rates can lead to substantial variations in logical error rates. 

SNI can mitigate errors with temporal correlations, as discussed in the main text. If we can sample spacetime errors with the same distribution as in the computation circuits, errors can be mitigated regardless of their correlations. To accurately sample the errors, it is crucial to run the error sampler circuits in close temporal proximity during the generation of a single instance of spacetime error. If the error rates vary significantly during the generation of an instance, the distribution of spacetime errors will be distorted relative to the computation circuit. 
\section{Applications to qLDPC codes}
\label{app:applications_qLDPC}
In qLDPC codes that encode multiple logical qubits within each code block, logical errors are correlated, even in the absence of crosstalk between physical qubits. A logical error involves errors occurring on at least $d$ physical qubits, where $d$ is the code distance. On each physical qubit, the error can affect a logical qubit if the physical qubit lies within the support of its logical operators. If a physical qubit is in the supports for a number of logical operators, it could simultaneously induce errors on all them, leading to many-logical-qubit correlations. On average, at least $kd/n$ logical operators overlap on each physical qubit, where $k$ is the number of logical qubits encoded in $n$ physical qubits, each logical qubit has an $X$ ($Z$) logical operator, and the support of each logical operator has a size of at least $d$. Therefore, the correlation can become particularly severe when the encoding rate $k/n$ is constant and the code distance is large. A numerical simulation in Ref.~\cite{zhang_demonstrating_2025} illustrates such correlations. As a result of these correlations, each code block must be benchmarked as a whole: Sampling errors for a single-block operation requires $2k$ super qubits, while a two-block operation requires $4k$ super qubits. 

One approach to employing qLDPC codes in fault-tolerant quantum computing is through concatenation codes~\cite{gottesman_fault-tolerant_2014,tamiya_polylog-time-_2024,nguyen_quantum_2024}. This enables universal quantum computation with a constant qubit overhead, which is a key advantage of qLDPC codes with constant encoding rates over surface codes. In this setting, the entire quantum computer consists of multiple qLDPC code blocks. To implement logical gates, resource states are prepared using concatenated codes and consumed via gate teleportation. If these resource states are prepared independently, spatial correlations remain confined within each code block (or pairs of blocks for inter-block gates). As a result, each block (or pairs of blocks) can be benchmarked independently using error sampler circuits. However, for improved time efficiency, an optimized protocol prepares multiple identical resource states collectively~\cite{nguyen_quantum_2024}. This collective preparation may introduce additional inter-block correlations, requiring simultaneous benchmarking of all involved blocks. For example, consider preparing $M$ instances of a resource state used to implement the Hadamard gate on the first logical qubit of a block (noting that each block is acted upon by only one gate at a time, following the protocol in Ref.~\cite{nguyen_quantum_2024}). To sample errors, we initialize each block in a state tailored for this purpose: The first logical qubit is prepared for Hadamard-gate error sampling, while all other logical qubits are set for identity-gate error sampling. This state is prepared using super qubits [comprising all operations up to and including decoding in Fig.~\ref{fig:protocol}(a)], and the same state is prepared across all $M$ blocks. The Hadamard gates are then applied simultaneously to all $M$ blocks, followed by measurement via super qubits. This procedure captures both intra-block and inter-block error correlations. Importantly, if memory errors are negligible, i.e.,~the errors remain unchanged over time, then it is not necessary to benchmark all $M$ blocks simultaneously; in such cases, error sampler circuits can be run independently for each block without affecting the validity of the error sampling. 

In addition to concatenation-based techniques, an alternative approach to quantum computation with qLDPC codes is based on performing Pauli operator measurements via code deformation~\cite{Cohen_2022}, known as lattice surgery. Recent advances demonstrate that multiple Pauli operators can be measured in parallel on a single code block while preserving constant qubit overhead~\cite{zhang_time-efficient_2025,cowtan2025parallellogicalmeasurementsquantum}. To sample errors under such parallelized measurement, we can proceed as follows. Let $\sigma_1, \sigma_2, \ldots, \sigma_q$ be the set of mutually commuting Pauli operators to be measured simultaneously. We can find a Clifford unitary $U$ such that $U \sigma_j U^\dagger = Z_j$ for all $j = 1, \ldots, q$, where $Z_j$ denotes the $Z$ operator acting on the $j$-th logical qubit. In the error sampler circuit, we initialize the code block with the first $q$ logical qubits prepared in states suitable for $Z$-measurement error sampling, while the remaining logical qubits are prepared for idle-gate error sampling. Prior to the decoding operation, we apply the Clifford gate $U$. This is followed by the parallel Pauli measurement, encoding, application of $U^\dagger$, and finally, measurement to read out the errors. When each of the $\sigma_j$ operators acts on one or two logical qubits without overlap, the Clifford unitary $U$ can be implemented using a single layer of one- and two-qubit Clifford gates. Moreover, lattice surgery is a general tool applicable to qLDPC codes, enabling state transfer between code blocks. This capability can be leveraged to perform encoding and decoding operations in our protocol: For instance, super qubits can be encoded into a block with a larger code distance for better protection, and then transferred to or from standard blocks via lattice surgery.

\end{widetext}

\bibliography{references}

\vspace{0.5cm}

\end{document}